\newtheorem{theorem}{Theorem}[section]
\newtheorem{lemma}[theorem]{Lemma}
\newtheorem{proposition}[theorem]{Proposition}
\newtheorem{cor}[theorem]{Corollary}
\theoremstyle{remark}
\newtheorem{remark}[theorem]{Remark}
\theoremstyle{definition}
\newtheorem{definition}[theorem]{Definition}
\theoremstyle{example}
\newtheorem{example}[theorem]{Example}
\theoremstyle{notation}
\newcommand{\bra}[1]{\langle#1|}
\newcommand{\ket}[1]{|#1\rangle}
\begin{document}

\title{Comonotonicity and Choquet integrals of Hermitian operators and their applications}            
\author{A. Vourdas}
\affiliation{Department of Computing,\\
University of Bradford, \\
Bradford BD7 1DP, United Kingdom\\a.vourdas@bradford.ac.uk}

\begin{abstract}
In a quantum system with $d$-dimensional Hilbert space,
the $Q$-function of a Hermitian positive semidefinite operator $\theta$,  
is defined in terms of the $d^2$ coherent states in this system.
The Choquet integral ${\cal C}_Q(\theta)$ of the  $Q$-function of $\theta$, is introduced using a ranking of the values of the $Q$-function, and 
M\"obius transforms which remove the overlaps between coherent states.
It is a figure of merit of the quantum properties of Hermitian operators, 
and it provides upper and lower bounds to various physical quantities in terms of the $Q$-function.
Comonotonicity is an important concept in the formalism, which is used to formalize the vague concept of
physically similar operators. 
Comonotonic operators are shown to be bounded,
with respect to an order based on Choquet integrals.
Applications of the formalism to the study of the ground state of a physical system, are discussed.
Bounds for partition functions, are also derived. 
\end{abstract}
\maketitle

\section{Introduction}
There are many quantities which describe quantum properties of quantum systems.
The various entropic quantities (von Neumann entropy, Wehrl entropy \cite{We}, etc) are examples of this.
In this paper we introduce Choquet integrals as indicators of the quantum properties of Hermitian operators.
Choquet integrals are used in problems with probabilities, where the various alternatives are not independent, but they overlap with each other.

We consider a quantum system $\Sigma (d)$ with variables in ${\mathbb Z}(d)$ (the integers modulo $d$), described with the 
$d$-dimensional Hilbert space $H(d)$ \cite{Fin,Fin2}.
Let $\Omega$ be the set of $d^2$ coherent states, associated with the Heisenberg-Weyl group of displacements in this discrete system.
The overlapping nature of coherent states is our motivation for the use of Choquet integrals.
We map the $Q$ function of a Hermitian positive semidefinite operator $\theta$, into 
the Choquet integral ${\cal C}_Q(\theta)$, which is also a Hermitian positive semidefinite operator.
The formalism uses capacities (non-additive probabilities)  and Choquet integrals, and we briefly introduce these concepts.

\paragraph*{Capacities (non-additive probabilities):}
The basic property of Kolmogorov probabilities is additivity ($\mu(A\cup B)-\mu(A)-\mu(B)+\mu(A\cap B)=0$). 
But in subjects like Artificial Intelligence, Operations Research, Game Theory, Mathematical Economics, etc,
nonadditive probabilities have been used extensively (e.g., \cite{D1,D2,D3,D4,D5}).
They are particularly useful in problems where the various alternatives overlap, and
they formalize the added value in an aggregation, where the `whole is greater than the sum of its parts'.

In recent work\cite{VO1} we have shown that there is a strong link between the non-commutativity of general projectors,
and the non-additivity of the corresponding probabilities (Eq.(\ref{e3}) below). This leads naturally to Choquet integrals, which we introduce 
in this paper in a quantum context, and discuss their use as figures of merit 
of the quantum properties of Hermitian operators.

\paragraph*{Choquet integrals in a classical context:}
Integration is based on additivity. Integrals with non-additive probabilities require another approach, and this leads to Choquet integration\cite{INT1},
which has been used extensively in Artificial Intelligence\cite{INT2,INT3,INT4,INT5,INT6}, 
in Game Theory and its applications in Mathematical Economics\cite{W1,W2,W3,W4}, etc. 
In a `weighted average' we have a number of independent alternatives, and we assign a probability to each alternative.
The Choquet integral is a `sophisticated weighted average', for non-additive probabilities related to overlapping alternatives.
It replaces the probability distributions used in weighted averages,
with the derivative of cumulative functions, and by doing so, 
it assigns weights to aggregations of alternatives.
The weight for an aggregation of alternatives, is in general different from the sum of the weights, of the alternatives it contains.
Consequently, the derivative of cumulative functions is in general different from the probability distributions (they are always equal in the case of additive probabilities).

\paragraph*{Choquet integrals in a quantum context:}
In this paper we map the $Q$ function of a Hermitian 
positive semidefinite operator $\theta $ (defined with respect to the set $\Omega$ of coherent states), into  the discrete Choquet integral
${\cal C}_Q(\theta)$.
It is calculated using cumulative projectors, and their discrete derivatives (differences) which are $d$ orthogonal projectors.

An important concept related to Choquet integrals, is comonotonicity of two operators $\theta, \phi$. There is added value
in an aggregation of components with different properties,  
because the various components play complementary role to each other.
In this case the whole is different from the sum of its parts, and the ${\cal C}_Q(\theta +\phi)$ is different from ${\cal C}_Q(\theta)+{\cal C}_Q(\phi)$.
But if the components of an aggregation have similar properties, this complementarity and added value are missing, the whole is equal to the sum of its parts, and
${\cal C}_Q(\theta +\phi)$ is equal to ${\cal C}_Q(\theta)+{\cal C}_Q(\phi)$.
Comonotonicity defines rigorously the intuitive concept of physically similar operators.

We next compare briefly the Choquet formalism with the spectral formalism of 
orthogonal projectors, the positive operator valued measures (POVM), and the formalism of frames and wavelets:
\begin{itemize}
\item
The spectral formalism of eigenvalues and eigenvectors, leads  in the case of Hermitian operators to orthogonal projectors,
which play a fundamental role in von Neumann's measurement theory.
\item
The POVM formalism uses projectors related to coherent states (or other non-orthogonal and non-commuting projectors), and it is based on a resolution of the identity.
The resolution of the identity is crucial for the calculation of various physical quantities 
in terms of coherent states. 
\item
The formalism of frames and wavelets, is based on lower and upper bounds to a resolution of the identity,
and in this sense it uses approximate resolutions of the identity with bounded error.
\item
The Choquet formalism uses a `weak resolution of the identity', that involves not only the 
non-orthogonal projectors, but also a correction which consists of `M\"obius operators' that eliminate the `double counting' in the sum 
of the non-orthogonal projectors (Eq.(\ref{al1})).
The Choquet integral ${\cal C}_{Q}(\theta)$ can be expressed in terms of the M\"obius operators (as in proposition \ref{www}).

\end{itemize}

\paragraph*{Physical applications:}
The Choquet integral ${\cal C}_Q(\theta)$ is a figure of merit of the quantum properties of Hermitian operators.
Its physical applications include:
\begin{itemize}
\item
upper and lower bounds to various physical quantities in terms of the $Q$-function (proposition \ref{1234}).
This includes the derivation of bounds for partition functions (section \ref{partition}).

\item
the study of changes in the ground state of physical systems.
Hamiltonians with and without degeneracies are considered, and it
is shown that the Choquet integral ${\cal C}_Q(\theta)$ detects changes in the ground state of the system (sections \ref{ex12}, \ref{DD}).

\item
the formalism leads naturally to the concept of comonotonicity.
It is  shown that comonotonic operators are bounded within certain intervals, 
with respect to an order based on Choquet integrals,
and in this sense they are similar to each other (section \ref{L}).
\end{itemize}
A desirable feature of the formalism, is that it is robust in the presence of noise, and yet it is sensitive enough to detect changes in the 
physical system (e.g., changes in the ground state of the system).

\paragraph*{Contents:}
In section 2, we introduce capacities and Choquet integrals in a classical context.
There is much literature on these concepts in other than Physics areas,
and here we present briefly the concepts that we are going to bring into Quantum Physics.
In section 3, we introduce technical details (cumulative coherent projectors and their discrete derivatives,
M\"obius operators, etc \cite{v16}) which are needed in the calculation of the Choquet integral.

In section 4, we introduce the Choquet integral ${\cal C}_Q(\theta)$ of a Hermitian operator $\theta$, and study its properties.
In section 5, we discuss the concept of comonotonic operators.
In section 6, we introduce an order based on Choquet integrals,  and show that comonotonic operators are bounded 
with respect to this order. This implies that certain physical quantities are also bounded.
In section 7, we apply the formalism to the study of the ground state of a physical system.
In section 8, we derive bounds for partition functions.
In section 9 we compare and contrast the Choquet formalism, with the spectral formalism of eigenvectors/eigenvalues,
the POVM formalism, and the formalism of wavelets (and frames).
We conclude in section 10, with a discussion of our results.

\section{Capacities and discrete Choquet integrals in a classical context}
\subsection{Capacities for overlapping and non-independent alternatives} 
Kolmogorov probability is a map $\mu$ from subsets of a `set of alternatives' $\Omega$, to $[0,1]$.
Its basic property is the additivity relation
\begin{eqnarray}\label{1}
\delta (A,B)=0;\;\;\;\;\;\delta (A,B)=\mu(A\cup B)-\mu(A)-\mu(B)+\mu(A\cap B);\;\;\;A,B \subseteq \Omega.
\end{eqnarray}
In the case $A\cap B=\emptyset$ this reduces to 
\begin{eqnarray}\label{B}
A\cap B=\emptyset\;\;\rightarrow\;\;\mu (A\cup B)= \mu (A)+\mu (B)
\end{eqnarray}
Capacity or nonadditive probability, is a weaker concept which obeys the relations
\begin{eqnarray}\label{A}
&&\mu (\emptyset )=0;\;\;\;\;\mu (\Omega)=1\nonumber\\
&&A \subseteq B\;\rightarrow\;\mu (A)\le \mu (B)
\end{eqnarray}
If the second of these requirements is replaced with the additivity relation of Eq.(\ref{B})
which is stronger, then the capacity is probability.
A prerequisite for the use of probabilities is the assumption that the alternatives in the set $\Omega$
are separable from each other, and truly independent.
In capacities this assumption is relaxed, the aggregation of some of the alternatives is different from the sum of its parts, and 
Eq.(\ref{B}) is not valid.

Capacities have been introduced by Choquet \cite{INT1}, and they have been used extensively in areas like
Artificial Intelligence, Operations Research, Game Theory, Mathematical Economics, etc.
They describe the added value in an aggregation, where the `whole is greater than the sum of its parts'.
For example, the percentage of votes in a coalition of two political parties, might be greater (or smaller) 
than the sum of the percentages in the component parties. 
For capacities the $\delta (A,B)$ can be positive or negative, in which case we say that the capacities 
are supermodular or submodular.

\begin{remark}
In a quantum context the requirement for probabilities, of independent alternatives in $\Omega$, corresponds to the use of an orthonormal basis.
In the case of coherent states, the non-validity of the analogue of Eq.(\ref{B}), is given with  the operator in Eq.(\ref{32}) below, which is non-zero.
\end{remark}

\subsection{ Ranking and derivatives of cumulative functions in Choquet integrals}

The nonadditivity in capacities implies that the concept of integration needs revision.
The Choquet integration is appropriate in this case. 
We consider a function $f$ on the finite set $\Omega$, which takes the real values $f(1),...,f(N)$.
We note that Choquet integrals can  also be defined for functions with a continuum of real values, but in this paper we consider the finite case.
We relabel this function, using a `ranking permutation' $i=\sigma(j)$ of the indices, so that 
\begin{eqnarray}\label{234}
f[\sigma (1)]\le ...\le f[\sigma (N)].
\end{eqnarray}
The Choquet integral of $f$ with respect to the capacity $\mu$ is given by
\begin{eqnarray}\label{87}
{\cal C}(f;\mu)&=&\sum _{i=1}^{N} f[\sigma (i)]\nu _f(i)\nonumber\\
\nu _f(i)&=&\mu(\sigma (i), \sigma (i+1),...,\sigma (N))-\mu(\sigma (i+1), \sigma (i+2),...,\sigma (N));\;\;\;\;i=1,...,N-1\nonumber\\
\nu _f(N)&=&\mu(\sigma (N));\;\;\;\;\sum_{i=1}^N\nu _f(i)=1.
\end{eqnarray}
The $\mu(\sigma (1), \sigma (2),...,\sigma (i-1))$ is a cumulative function, and 
\begin{eqnarray}\label{A6}
\mu(\sigma (i), \sigma (i+1),...,\sigma (N))=1-\mu(\sigma (1), \sigma (2),...,\sigma (i-1))
\end{eqnarray}
is a complementary cumulative function.
The $\nu _f(i)$ can be viewed as `discrete derivative' of the cumulative function.
For additive capacities (additive probabilities) the derivative of the cumulative function is equal to the probability distribution:
\begin{eqnarray}
\mu(\sigma (i), \sigma (i+1),...,\sigma (N))-\mu(\sigma (i+1), \sigma (i+2),...,\sigma (N))=\mu(\sigma (i));\;\;\;\;i=1,...,N-1
\end{eqnarray}
but for non-additive capacities this is not true (in general $\nu_f(i)\ne \mu(\sigma (i))$).
The $\nu_f(i)$ depends on $\sigma (i+1),...,\sigma (N)$ and therefore it depends on the ranking in Eq.(\ref{234}).
This is indicated in the notation $\nu_f(i)$ with the index $f$.
For two functions $f,g$, in general $\sigma_f(j)\ne \sigma_g(j)$, and therefore $\nu _f(i)\ne \nu _g(i)$.

In a weighted average we multiply the values of a function with the corresponding probabilities.
In a Choquet integral we replace the probabilities with discrete derivatives (differences) of cumulative functions. 
\begin{remark}\label{rema}
The ${\cal C}$ in the notation, indicates Choquet integral.
Both the ranking of the function in Eq.(\ref{234}), and also the derivatives of cumulative functions, play a crucial role in Choquet integrals.
Both of them will be linked to non-commutativity, in a quantum context later.
\end{remark}
If $\mu_1$ and $\mu _2$ are capacities, then $\mu=p\mu _1+(1-p)\mu _2$ where $0\le p \le 1$ is also a capacity, and
\begin{eqnarray}\label{mm1}
{\cal C}(f;\mu)=p{\cal C}(f;\mu _1)+(1-p){\cal C}(f;\mu _2) 
\end{eqnarray}
\begin{example}
If $A\subseteq \Omega$ and $\mu$ is a capacity such that 
\begin{eqnarray}\label{cv1}
&&\mu (B)=1\;\;{\rm if}\;\;A\subseteq B\subseteq \Omega\nonumber\\
&&\mu (B)=0\;\;{\rm otherwise},
\end{eqnarray}
then Eq.(\ref{87}) reduces to
\begin{eqnarray}\label{mm2}
{\cal C}(f;\mu)=\min [f(A)],
\end{eqnarray}
where ${\min}[f(A)]$ is the minimum of all the values of the function $f$ in the subset $A$.
Here only the aggregations of alternatives in sets $B\supseteq A$ make a contribution to the integral, because for any other set
$C$, we have $\mu (C)=0$.
In this case, the integral is equal to the minimum of the values of the function $f$, in the set $A$.
The contribution of the other values of the function $f$ is zero, because the discrete derivatives of the 
corresponding cumulative function $\mu(\{\sigma (i), \sigma (i+1),...,\sigma (N)\})$,
are zero.
\end{example}

\subsection{M\"obius transform: how to avoid double-counting}

The M\"obius transform is used extensively in Combinatorics, 
after the work by Rota\cite{R,R1}. It is a generalization of
the inclusion-exclusion principle that gives the cardinality of the union of overlaping sets.
The M\"obius transform describes the overlaps between sets, and it is used to avoid the `double-counting'.
Rota generalized this to partially ordered structures.

The M\"obius transform of capacities leads to the following function ${\mathfrak d} (A)$ (where $A\subseteq \Omega$):
\begin{eqnarray}\label{M}
{\mathfrak d} (A)=\sum  _{B\subseteq A}(-1)^{|A|-|B|} \mu(B).
\end{eqnarray}
where $|A|$, $|B|$ are the cardinalities of these sets.
For example, if $A=\{a_1,...,a_m\}$, then
\begin{eqnarray}
{\mathfrak d}(a_i)&=&\mu (a_i);\;\;\;\;\;
{\mathfrak d} (a_i,a_j)=\mu (a_i,a_j)-\mu(a_i)-\mu(a_j)\nonumber\\
{\mathfrak d} (a_i,a_j,a_k)&=&\mu (a_i,a_j,a_k)-\mu(a_i,a_j)-\mu(a_i,a_k)-\mu(a_j,a_k)\nonumber\\&+&\mu(a_i)+
\mu(a_j)+\mu(a_k)
\end{eqnarray}
etc.

The inverse M\"obius transform is the intuitively nice relation
\begin{eqnarray}\label{b7}
\mu (A)=\sum _{B\subseteq A}{\mathfrak d} (B).
\end{eqnarray}
For sets with one element only, ${\mathfrak d} (A)=\mu(A)$.
If $\Omega=\{a_1,...,a_n\}$, Eq.(\ref{b7}) with $A=\Omega$ becomes
\begin{eqnarray}\label{b70}
\sum _{i=1}^n\mu (a_i)+\sum _{i,j}{\mathfrak d} (a_i,a_j)+...+{\mathfrak d} (a_1,...,a_n)=1.
\end{eqnarray}
There are $2^n-1$ terms in this sum (there are $2^n$ subsets of $\Omega$, but we exclude the empty set).
Eq.(\ref{b70}) is important because lack of additivity means that in general 
\begin{eqnarray}
\sum _{i=1}^n\mu (a_i)\ne 1.
\end{eqnarray}
In the special case that the capacity $\mu (A)$ is additive (i.e., Eq.(\ref{B}) holds), ${\mathfrak d} (B)$ is zero if the cardinality of $B$ is greater or equal to $2$: 
\begin{eqnarray}
|B|\ge 2\;\;\rightarrow\;\;{\mathfrak d} (B)=0.
\end{eqnarray}

\begin{remark}
In a quantum context the analogue of the M\"obius transforms ${\mathfrak d}$, are the operators in Eq.(\ref{m}).
\end{remark}
\begin{lemma}
The Choquet integral of Eq.(\ref{87}) is given in terms of the M\"obius transform ${\mathfrak d}$ of the capacities $\mu$, as
\begin{eqnarray}\label{88}
{\cal C}(f; \mu)&=&\sum _{A\subseteq \Omega} {\mathfrak d} (A){\min}[f(A)].
\end{eqnarray}
For all subsets $A$ of $\Omega$, we multiply ${\mathfrak d} (A)$ with the minimum value of the function in this subset, and we add the results.
\end{lemma}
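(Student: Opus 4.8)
The plan is to prove the identity $\mathcal{C}(f;\mu) = \sum_{A\subseteq\Omega}\mathfrak{d}(A)\min[f(A)]$ by starting from the defining formula \eqref{87} and substituting the inverse M\"obius relation \eqref{b7}. First I would rewrite the Choquet integral using the complementary cumulative form \eqref{A6}; a standard manipulation (Abel summation / summation by parts) turns $\mathcal{C}(f;\mu) = \sum_{i=1}^N f[\sigma(i)]\nu_f(i)$ into
\begin{eqnarray}
{\cal C}(f;\mu)=\sum_{i=1}^{N}\left(f[\sigma(i)]-f[\sigma(i-1)]\right)\mu(\{\sigma(i),\sigma(i+1),\ldots,\sigma(N)\}),
\end{eqnarray}
with the convention $f[\sigma(0)]=0$, using $\mu(\Omega)=1$ and the telescoping of the cumulative functions. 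The point of this rewriting is that each increment $f[\sigma(i)]-f[\sigma(i-1)]$ is nonnegative by the ranking \eqref{234}, and the ``tail set'' $T_i:=\{\sigma(i),\ldots,\sigma(N)\}$ is precisely the set on which $f$ takes values $\ge f[\sigma(i)]$.

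Next I would expand $\mu(T_i)=\sum_{B\subseteq T_i}\mathfrak{d}(B)$ and interchange the order of summation, collecting for each fixed $B\subseteq\Omega$ the coefficient of $\mathfrak{d}(B)$. A subset $B$ contributes to the term indexed by $i$ exactly when $B\subseteq T_i$, i.e.\ when every element of $B$ has rank $\ge i$; equivalently, when $i$ is at most the smallest rank occurring in $B$. Writing $m(B)$ for that smallest rank, the coefficient of $\mathfrak{d}(B)$ becomes $\sum_{i=1}^{m(B)}\left(f[\sigma(i)]-f[\sigma(i-1)]\right) = f[\sigma(m(B))]$ by telescoping. But $f[\sigma(m(B))]$ is by construction the minimum of the values of $f$ over the elements of $B$, i.e.\ $\min[f(B)]$. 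Summing over all $B\subseteq\Omega$ (the empty set contributes nothing since $\mathfrak{d}(\emptyset)=\mu(\emptyset)=0$) yields the claimed formula.

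I expect the main technical obstacle to be bookkeeping rather than any deep idea: one must be careful that the ranking permutation $\sigma$ may not be unique when $f$ has repeated values, and that the indexing of ``tail sets'' $T_i$ versus the cumulative sets $\{\sigma(1),\ldots,\sigma(i-1)\}$ is handled consistently at the endpoints $i=1$ and $i=N$. A clean way to sidestep the degeneracy issue is to first prove the identity for functions $f$ with distinct values (where $\min[f(B)]=f[\sigma(m(B))]$ is unambiguous), and then note both sides of \eqref{88} are continuous in $f$, so the general case follows by a limiting argument; alternatively one checks directly that the final expression $\sum_B\mathfrak{d}(B)\min[f(B)]$ does not depend on the choice of $\sigma$ among valid rankings. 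The other place to be slightly careful is justifying the interchange of the two finite sums, but since everything is a finite sum over subsets of a finite set this is immediate.
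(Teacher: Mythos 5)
Your proof is correct, and it is genuinely different from what the paper does: the paper does not prove the lemma at all, but cites the literature and offers only a consistency check, namely that for the special capacity of Eq.(\ref{cv1}) the M\"obius transform ${\mathfrak d}$ is the indicator of the single set $A$ and Eq.(\ref{88}) collapses to Eq.(\ref{mm2}); the implicit idea is that a general capacity is a linear combination of such ``unanimity'' capacities with coefficients ${\mathfrak d}(B)$, and one extends by linearity in $\mu$. That sketch has a genuine gap as stated, since Eq.(\ref{mm1}) only gives linearity for \emph{convex} combinations of capacities while M\"obius coefficients can be negative (as in the worked example, where ${\mathfrak d}(1,3)=-0.1$), which is presumably why the paper defers to references. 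Your summation-by-parts argument sidesteps this entirely: rewriting ${\cal C}(f;\mu)=\sum_i\bigl(f[\sigma(i)]-f[\sigma(i-1)]\bigr)\mu(T_i)$, expanding $\mu(T_i)=\sum_{B\subseteq T_i}{\mathfrak d}(B)$ via Eq.(\ref{b7}), and swapping the two finite sums gives the coefficient $f[\sigma(m(B))]=\min[f(B)]$ by telescoping, and every step is an exact algebraic identity. Two small remarks: the nonnegativity of the increments $f[\sigma(i)]-f[\sigma(i-1)]$ is never actually used, so the identity holds for arbitrary real-valued $f$ with the convention $f[\sigma(0)]=0$; and the limiting argument for tied values is unnecessary, because for any valid ranking $\sigma$ one still has $f[\sigma(m(B))]=\min[f(B)]$ (the element of smallest rank in $B$ realizes the minimum even when values repeat), so the computation goes through verbatim once one grants, as the paper implicitly does, that Eq.(\ref{87}) itself is independent of the choice of $\sigma$.
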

\begin{proof}
Explicit proof is given in \cite{INT2,INT3,INT4,INT5,INT6}.
Here we only give a hint of the proof, which is based on Eqs.(\ref{mm1}),(\ref{mm2}).
In the special case of the capacity in Eq.(\ref{cv1}), we get
\begin{eqnarray}
&&{\mathfrak d}(B)=1\;\;{\rm if}\;\;B=A\nonumber\\
&&{\mathfrak d}(B)=0\;\;{\rm otherwise}
\end{eqnarray}
and Eq.(\ref{88}), reduces to Eq.(\ref{mm2}).
In this sense, Eq.(\ref{88}), is simply a generalization of Eq.(\ref{mm2}).
\end{proof}
\begin{proposition}
The Choquet integral of Eq.(\ref{87}) can be written as the sum 
\begin{eqnarray}\label{88a}
&&{\cal C}(f; \mu)={\cal C}_1(f; \mu)+{\cal C}_2(f; \mu)+...+{\cal C}_N(f; \mu)\nonumber\\
&&{\cal C}_1(f; \mu)=\sum  {\mathfrak d} (a_i)f(a_i)=\sum  \mu (a_i)f(a_i)\nonumber\\
&&{\cal C}_2(f; \mu)=\sum {\mathfrak d} (a_i,a_j){\min}[f(a_i),f(a_j)]\nonumber\\
&&..................................................\nonumber\\
&&{\cal C}_N(f;\mu)={\mathfrak d} (a_1,...,a_M){\min}[f(a_1),...,f(a_M)].
\end{eqnarray}
In the special case of additive capacities (Kolmogorov probabilities) ${\cal C}_2(f; \mu)=...={\cal C}_N(f; \mu)=0$, and the 
Choquet integral is the standard weighted average ${\cal C}_1(f; \mu)$.
\end{proposition}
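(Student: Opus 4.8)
The plan is to derive this directly from the previous lemma, Eq.(\ref{88}), by sorting the single sum over all subsets according to cardinality. First I would write ${\cal C}(f;\mu)=\sum_{A\subseteq\Omega}{\mathfrak d}(A){\min}[f(A)]$ and split the index set (all nonempty subsets of $\Omega$) into the classes $\{A:|A|=k\}$ for $k=1,\dots,N$, where $N=|\Omega|$. Setting ${\cal C}_k(f;\mu)=\sum_{|A|=k}{\mathfrak d}(A){\min}[f(A)]$ then gives the claimed decomposition termwise, since the family of cardinality classes is a partition of the $2^N-1$ nonempty subsets.

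Next I would treat the $k=1$ term. For singletons the M\"obius transform reduces to ${\mathfrak d}(a_i)=\mu(a_i)$ (noted right after Eq.(\ref{b7})), and the minimum over a one-point set is trivial, ${\min}[f(a_i)]=f(a_i)$. Hence ${\cal C}_1(f;\mu)=\sum_i{\mathfrak d}(a_i)f(a_i)=\sum_i\mu(a_i)f(a_i)$, which is exactly the ordinary weighted average with weights $\mu(a_i)$. For $k\ge 2$ the summands have the stated form ${\mathfrak d}(a_{i_1},\dots,a_{i_k}){\min}[f(a_{i_1}),\dots,f(a_{i_k})]$, giving ${\cal C}_2,\dots,{\cal C}_N$ as written, with ${\cal C}_N$ the single term attached to the full set $\Omega=\{a_1,\dots,a_N\}$.

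Finally, for the additive (Kolmogorov) case I would invoke the fact already recorded in the excerpt that $|B|\ge 2\Rightarrow{\mathfrak d}(B)=0$ whenever $\mu$ satisfies Eq.(\ref{B}). Every summand in ${\cal C}_k$ for $k\ge 2$ then carries a vanishing M\"obius coefficient, so ${\cal C}_2=\dots={\cal C}_N=0$ and ${\cal C}(f;\mu)={\cal C}_1(f;\mu)$ collapses to the standard weighted average.

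There is essentially no hard step: the statement is a repackaging of Eq.(\ref{88}), and the only point requiring care is the bookkeeping --- identifying the parameter $N$ of the ranking in Eq.(\ref{234}) with $|\Omega|$ so that the top term ${\cal C}_N$ genuinely corresponds to the unique full set, and verifying that grouping by cardinality exhausts each nonempty subset exactly once. If one prefers a self-contained argument that bypasses the lemma, one can instead start from Eq.(\ref{87}), substitute the inverse M\"obius transform Eq.(\ref{b7}) for each $\mu(\cdot)$ appearing in $\nu_f(i)$, and collect the coefficient of each ${\mathfrak d}(A)$; this coefficient telescopes to ${\min}[f(A)]$, which simultaneously reproves Eq.(\ref{88}) and yields the stated decomposition --- but reusing the lemma is the cleaner route.
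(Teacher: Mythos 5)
Your proposal is correct and follows essentially the same route as the paper: both start from Eq.(\ref{88}) and group the terms by the cardinality of the subset $A$, then use the vanishing of ${\mathfrak d}(B)$ for $|B|\ge 2$ in the additive case. The extra bookkeeping you supply (the singleton reduction, the partition into cardinality classes) only makes explicit what the paper leaves implicit.
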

\begin{proof}
We start from Eq.(\ref{88}), and we group together all terms with ${\mathfrak d} (A)$ where the cardinality of $A$ is equal to $k$.
This gives the ${\cal C}_k(f; {\mathfrak d})$.
In the special case of additive capacities, 
all the ${\mathfrak d} (a_{i_1},...,a_{i_r})$ with $r\ge 2$ are zero, and consequently all the ${\cal C}_k(f; \mu)$ with $k\ge 2$ are equal to zero.
\end{proof}
The Choquet integral is the weighted average given by ${\cal C}_1(f; \mu)$, plus the corrections of the other terms
which are due to deviations from the additivity of probability.
In a quantum context later, these extra corrections are due to higher order M\"obius operators and are directly related to 
the overlaps between the coherent states.

\subsection{Comonotonic functions and the weak additivity property of Choquet integrals}\label{nnmm}
Ranking is important in Choquet integrals. In general two functions have different ranking and  
\begin{eqnarray}
{\cal C}(f+g;\mu)\ne {\cal C}(f;\mu)+{\cal C}(g;\mu).
\end{eqnarray}
We define comonotonic (same ranking) functions\cite{INT1,INT2,INT3,INT4,INT5,INT6}, as follows:
\begin{definition}
Two functions $f(i)$ and $g(i)$ on the set $\Omega$, are called comonotonic if the following statements which are equivalent to each other, hold:
\begin{itemize}
\item[(1)]
The ranking permutation $\sigma (j)$ is the same for the two functions ($\sigma _f(j)=\sigma _ g(j)$).
Therefore $\nu_f(i)=\nu _g(i)$. The weights $\nu_f(i)$ in Eq.(\ref{87}) are the same for comonotonic functions, but they are in general different for non-comonotonic functions. 
\item[(2)]
For all $i,j$ 
\begin{eqnarray}\label{c4}
[f(i)-f(j)][g(i)-g(j)]\ge 0.
\end{eqnarray}
\end{itemize}
\end{definition}
Comonotonicity does not obey the transitivity property (i.e., if $f_1,f_2$ are comonotonic, and $f_2,f_3$ are also comonotonic, the $f_1,f_3$ might not be comonotonic).
\begin{proposition}
\mbox{}
\begin{itemize}
\item[(1)]
For comonotonic functions $f,g$ and positive $a,b$
\begin{eqnarray}\label{301}
{\cal C}(af+bg;\mu)= a{\cal C}(f;\mu)+b{\cal C}(g;\mu).
\end{eqnarray}
\item[(2)]
If $\mu$ is an additive capacity (Kolmogorov probability), then for any functions $f,g$
\begin{eqnarray}\label{305}
{\cal C}(af+bg;\mu)= a{\cal C}(f;\mu)+b{\cal C}(g;\mu).
\end{eqnarray}
\end{itemize}
\end{proposition}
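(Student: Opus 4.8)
The plan is to avoid the ranking formula Eq.~(\ref{87}) altogether and argue from the M\"obius representation Eq.~(\ref{88}), ${\cal C}(f;\mu)=\sum_{A\subseteq\Omega}{\mathfrak d}(A)\min[f(A)]$. The advantage is that ${\mathfrak d}(A)$ depends only on $\mu$, so all the $f$-dependence sits in the scalars $\min[f(A)]$, and questions about the (possibly non-unique) permutation $\sigma$ never arise. I would first record the trivial fact that $\min[(af)(A)]=a\min[f(A)]$ for $a>0$, which by Eq.~(\ref{88}) already gives positive homogeneity ${\cal C}(af;\mu)=a{\cal C}(f;\mu)$. After that, proving Eq.~(\ref{301}) reduces to the single identity $\min[(af+bg)(A)]=a\min[f(A)]+b\min[g(A)]$ for every $A\subseteq\Omega$, since multiplying by ${\mathfrak d}(A)$ and summing over $A$ then finishes it.

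For part (1) the key step is a small lemma: comonotonic functions attain their minima on any subset $A$ at a common element. I would prove it by letting $A_f\subseteq A$ be the set of points where $f$ is minimal on $A$, choosing $i_0\in A_f$ minimal for $g$ among the elements of $A_f$, and checking $g(i_0)=\min[g(A)]$: if some $j\in A$ had $g(j)<g(i_0)$ then $j\notin A_f$, hence $f(j)>f(i_0)$, so $[f(i_0)-f(j)][g(i_0)-g(j)]<0$, contradicting the comonotonicity condition Eq.~(\ref{c4}). Given such a common minimizer $i_0$, and using $a,b>0$, one gets $\min[(af+bg)(A)]\le af(i_0)+bg(i_0)=a\min[f(A)]+b\min[g(A)]$, while the reverse inequality is automatic since $af(j)+bg(j)\ge a\min[f(A)]+b\min[g(A)]$ for all $j\in A$. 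Feeding this into Eq.~(\ref{88}) yields Eq.~(\ref{301}).

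For part (2), when $\mu$ is additive the M\"obius coefficients satisfy ${\mathfrak d}(B)=0$ for every $B$ with $|B|\ge 2$ (equivalently ${\cal C}_2=\dots={\cal C}_N=0$ in Eq.~(\ref{88a})), so Eq.~(\ref{88}) collapses to the ordinary weighted average ${\cal C}(f;\mu)=\sum_i{\mathfrak d}(a_i)f(a_i)=\sum_i\mu(a_i)f(a_i)$. This is visibly linear in $f$, so ${\cal C}(af+bg;\mu)=\sum_i\mu(a_i)[af(a_i)+bg(a_i)]=a{\cal C}(f;\mu)+b{\cal C}(g;\mu)$ for arbitrary $f,g$ with no comonotonicity hypothesis. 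Part (2) is therefore really just the observation that the ``correction terms'' beyond ${\cal C}_1$ vanish in the additive case.

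The main obstacle, such as it is, lies in part (1): the statement ``$f$ and $g$ have the same ranking permutation'' is genuinely ambiguous when $f$ (or $g$) has repeated values, so a proof that works directly from Eq.~(\ref{87}) needs a careful tie-breaking convention and a verification that $f$, $g$ and $af+bg$ can all be ranked by the same $\sigma$ (using $(af+bg)(i)-(af+bg)(j)=a[f(i)-f(j)]+b[g(i)-g(j)]$, whose sign is controlled by comonotonicity of $f,g$ together with $a,b>0$). The M\"obius-based argument sketched above sidesteps this entirely, which is why I would organise the proof around Eq.~(\ref{88}) rather than Eq.~(\ref{87}).
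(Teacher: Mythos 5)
Your proof is correct, and for part (1) it takes a genuinely different route from the paper. The paper proves Eq.~(\ref{301}) directly from the ranking formula Eq.~(\ref{87}): comonotonic functions share the same ranking permutation, hence the same weights $\nu_f(i)=\nu_g(i)$, and since $a,b>0$ preserve the ranking, the integral is linear in the function values with those fixed weights. You instead work from the M\"obius representation Eq.~(\ref{88}) and reduce everything to the identity $\min[(af+bg)(A)]=a\min[f(A)]+b\min[g(A)]$, which you establish via the common-minimizer lemma (whose proof by contradiction from Eq.~(\ref{c4}) is correct). What the paper's argument buys is brevity and directness from the definition; what yours buys is that it cleanly handles ties in the function values, where ``the same ranking permutation'' is ambiguous and the paper's one-line argument is silently relying on a tie-breaking convention --- a point you rightly flag. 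The trade-off is that your route leans on Eq.~(\ref{88}), which the paper states as a lemma but only sketches, so your proof is rigorous modulo that cited result. For part (2) your argument (higher-order M\"obius coefficients vanish for additive $\mu$, so the integral collapses to the linear weighted average ${\cal C}_1$) is essentially identical to the paper's.
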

\begin{proof}
\mbox{}
\begin{itemize}
\item[(1)] 
Comonotonic functions have the same weights $\nu _f(i)$ in Eq.(\ref{87}), and this proves Eq.(\ref{301}).
The $a,b$ are taken to be positive, so that the ranking is preserved.
\item[(2)]
For additive capacities, 
 all the ${\mathfrak d} (a_{i_1},...,a_{i_r})$ with $r\ge 2$ in Eq.(\ref{88a}), are zero.
Therefore only the ${\cal C}_1(f; \mu)$ is non-zero, and then we can easily prove Eq.(\ref{305}).
\end{itemize}

\end{proof}
For general (non-additive) capacities, additivity of the Choquet integral holds only for comonotonic functions.
We refer to this as weak additivity property of Choquet integrals.

A constant function $c$ is comonotonic with any function $f$ and therefore
\begin{eqnarray}
{\cal C}(f+c;\mu)={\cal C}(f;\mu)+c.
\end{eqnarray}

\subsection{Example}\label{aatt}

Four students $A,B,C,D$ were examined in three modules $1,2,3$ and they got the following marks (in the interval $[0,100]$):
\begin{eqnarray}\label{12}
&&f_A(1)=70;\;\;\;\;\;f_A(2)=70;\;\;\;\;\;f_A(3)=30\nonumber\\
&&f_B(1)=90;\;\;\;\;\;f_B(2)=50;\;\;\;\;\;f_B(3)=80\nonumber\\
&&f_C(1)=50;\;\;\;\;\;f_C(2)=90;\;\;\;\;\;f_C(3)=70\nonumber\\
&&f_D(1)=70;\;\;\;\;\;f_D(2)=60;\;\;\;\;\;f_D(3)=50.
\end{eqnarray}
A professor considers them as applicants for a PhD study, taking into account how close the three modules are to
the topic of the Ph.D.
The assumption of separability and independence of the three modules is too strong (because usually the modules overlap with each other).
We adopt the weaker concepts of capacity and Choquet integrals, which allow for an aggregation 
to be different from the sum of its parts.

In this example, $\Omega$ is the set of the three modules $\{1,2,3\}$.
We will calculate the Choquet integrals using the capacities:
\begin{eqnarray}\label{350}
&&\mu(1)=0.3;\;\;\;\;\;\mu(2)=0.3;\;\;\;\;\;\mu(3)=0.2\nonumber\\
&&\mu(1,2)=1;\;\;\;\;\;\mu(1,3)=0.4;\;\;\;\;\;\mu(2,3)=0.4\nonumber\\
&&\mu(\emptyset )=0;\;\;\;\;\;\mu(1,2,3)=1
\end{eqnarray}
It is not a requirement that the $\mu(1)+\mu(2)+\mu(3)$ should be equal to $1$ (see Eq.(\ref{b70})).
These capacities reflect the fact that the aggregation of modules $1$ and $2$ is ideal for 
the topic of this PhD, and for this reason $\mu(1,2)>\mu(1)+\mu(2)$ (and in fact $\mu(1,2)=1$).
The aggregation of modules $1$ and $3$ is not very important for this Ph.D., and this is reflected in the $\mu(1,3)<\mu(1)+\mu(3)$.
One reason for this may be that there is overlap in the material taught in modules $1,3$.
Similar comment can be made for the aggregation of modules $2,3$.
In the quantum context later, the reason for the non-additivity is the non-zero overlap between coherent states.

For student $A$ we have $f_A(3)\le f_A(2)\le f_A(1)$ and therefore
\begin{eqnarray}\label{zxc}
&&\nu_A(3)=1-\mu(1,2)=0;\;\;\;\;\nu_A(2)=\mu(1,2)-\mu(1)=0.7;\;\;\;\;\nu_A(1)=\mu(1)=0.3\nonumber\\
&&{\cal C}_A(f;\mu)=f_A(3)\nu_A(3)+f_A(2)\nu_A(2)+f_A(1)\nu_A(1)=70.
\end{eqnarray}
Since $\nu_A(3)=0$ the lowest mark of this student does not contribute in the calculation.
For student $B$ we have $f_B(2)\le f_B(3)\le f_B(1)$ and therefore
\begin{eqnarray}
&&\nu_B(2)=1-\mu(1,3)=0.6;\;\;\;\;\nu_B(3)=\mu(1,3)-\mu(1)=0.1;\;\;\;\;\nu_B(1)=\mu(1)=0.3\nonumber\\
&&{\cal C}_B(f;\mu)=f_B(2)\nu_B(2)+f_B(3)\nu_B(3)+f_B(1)\nu_B(1)=65.
\end{eqnarray}
For student $C$ we have $f_C(1)\le f_C(3)\le f_C(2)$ and therefore
\begin{eqnarray}
&&\nu_C(1)=1-\mu(2,3)=0.6;\;\;\;\;\;\nu_C(3)=\mu(2,3)-\mu(2)=0.1;\;\;\;\;\;\;\nu_C(2)=\mu(2)=0.3\nonumber\\
&&{\cal C}_C(f;\mu)=f_C(1)\nu_C(1)+f_C(3)\nu_C(3)+f_C(2)\nu_C(2)=64.
\end{eqnarray}
The marks of the student $D$ are comonotonic to those of the student $A$.
This means that the students $A,D$ have similar academic strengths and weaknesses, with respect to the modules $\{1,2,3\}$
(the analogue of this in a quantum context will be physically similar Hermitian operators).
Therefore $\nu_D(3)=\nu_A(3)=0$ and $\nu_D(2)=\nu_A(2)=0.7$ and $\nu_D(1)=\nu_A(1)=0.3$.
It follows that
\begin{eqnarray}
{\cal C}_D(f;\mu)=f_D(3)\nu_D(3)+f_D(2)\nu_D(2)+f_D(1)\nu_D(1)=63.
\end{eqnarray}
The Choquet integral is a figure of merit, which orders the students as $A \succ B\succ C \succ D$.
Here $A \succ B$ means that A is more (or equally) preferable for Ph.D. than B.  

We note that the weight of the same subject is different for different students.
For example, $\nu _A(2)=0.5$, $\nu _B(2)=0.6$, $\nu _C(2)=0.3$.
This is related to the fact that the three modules are not independent.
Cumulative rather than separable weights are used in the calculation.
The ranking in Eq.(\ref{234}) plays an important role in determining the 
values of $\nu$.

The M\"obius transform of the capacities in Eq.(\ref{350}) gives
\begin{eqnarray}
&&{\mathfrak d} (1,2,3)=0;\;\;\;\;{\mathfrak d} (1,2)=0.4;\;\;\;\;{\mathfrak d} (1,3)=-0.1;\;\;\;\;{\mathfrak d} (2,3)=-0.1\nonumber\\
&&{\mathfrak d} (1)=0.3;\;\;\;\;{\mathfrak d} (2)=0.3;\;\;\;\;{\mathfrak d} (3)=0.2
\end{eqnarray}
Then using Eq.(\ref{88}) we find the same results as above. We present explicitly the calculation for one of them.
Taking into account that
\begin{eqnarray}
\min \{f_A(1),f_A(2)\}=f_A(2);\;\;\;\;\;\min \{f_A(1),f_A(3)\}=f_A(3);\;\;\;\;\;\min \{f_A(2),f_A(3)\}=f_A(3)
\end{eqnarray}
we get
\begin{eqnarray}
{\cal C}_A(f;{\mathfrak d})&=&{\cal C}_A^{(1)}(f;{\mathfrak d})+{\cal C}_A^{(2)}(f;{\mathfrak d})+{\cal C}_A^{(3)}(f;{\mathfrak d})=70\nonumber\\
{\cal C}_A^{(1)}(f;{\mathfrak d})&=&{\mathfrak d} (1)f_A(1)+{\mathfrak d} (2)f_A(2)+{\mathfrak d} (3)f_A(3)=48\nonumber\\
{\cal C}_A^{(2)}(f;{\mathfrak d})&=&{\mathfrak d} (1,2)f_A(2)+{\mathfrak d} (1,3)f_A(3)+{\mathfrak d} (2,3)f_A(3)=22\nonumber\\
{\cal C}_A^{(3)}(f;{\mathfrak d})&=&{\mathfrak d} (1,2,3)f_A(3)=0,
\end{eqnarray}
which is the same result as in Eq.(\ref{zxc}).

We note that if we use the `standard average' we find
\begin{eqnarray}
{\mathfrak M}_A=\frac{170}{3};\;\;\;\;{\mathfrak M}_B=\frac{220}{3};\;\;\;\;{\mathfrak M}_C=\frac{210}{3};\;\;\;\;{\mathfrak M}_D=\frac{180}{3}.
\end{eqnarray}
and this leads to the ordering $B\sqsupset C \sqsupset D \sqsupset A$, where $\sqsupset$ is the ordering according to the `standard averaging'.

\section{Cumulative projectors and M\"obius operators}

\subsection{Coherent states}

We consider a quantum system $\Sigma (d)$ with variables in ${\mathbb Z}(d)$, and $d$-dimensional Hilbert space $H(d)$.
We also consider the orthonormal basis of `position states' $\ket{X;n}$,
and through the Fourier transform $F$, the basis of momentum states $\ket{P;n}$\cite{Fin,Fin2}:
\begin{eqnarray}
&&F=d^{-1/2}\sum _m\omega (mn)\ket{X;n}\bra{X:m};\;\;\;\;\omega(\alpha )=\exp \left(\frac{i2\pi \alpha}{d}\right )
\nonumber\\
&&\ket{P;n}=F\ket{X;n};\;\;\;\;m,n, \alpha\in {\mathbb Z}(d).
\end{eqnarray}
Displacement operators in the ${\mathbb Z}(d)\times {\mathbb Z}(d)$ phase space, are given by
\begin{eqnarray}\label{dis}
D(\alpha , \beta)=Z^{\alpha}X^{\beta}\omega (-2^{-1}\alpha \beta);\;\;\;\;
Z=\sum _m\omega (m)\ket{X;m}\bra{X;m};\;\;\;\;
X=\sum _m \ket{X;m+1}\bra{X;m}
\end{eqnarray}
The $\{D(\alpha , \beta)\omega (\gamma)\}$ form the Heisenberg-Weyl group of displacements in this system.
The formalism of finite quantum systems, is slightly different in the cases of odd and even $d$.
The factor $2^{-1}$ above, is an element of ${\mathbb Z}(d)$, and it exists only for odd $d$. 
Below we assume that the dimension $d$ is an odd integer. 

Acting with $D(\alpha , \beta)$ on a (normalized) fiducial vector $\ket {\eta}$,
we get the $d^2$ coherent states\cite{COH,COH1}:
\begin{eqnarray}\label{coh}
\ket{C;\alpha, \beta}=D(\alpha , \beta)\ket{\eta};\;\;\;\;\ket {\eta}=\sum _m \eta _m\ket{X;m};\;\;\;\;\sum _m|\eta_m|^2=1.
\end{eqnarray}
The $X,P,C$ in the notation are not variables, but they simply indicate position states, momentum states and coherent states.
We call $\Omega$ the set of the $d^2$ coherent states:
\begin{eqnarray}\label{33}
\Omega =\{\ket{C;\alpha, \beta}\;|\;\alpha , \beta \in {\mathbb Z}(d)\}.
\end{eqnarray}
The set $\Omega$ is invariant under displacement transformations.

Let $\Pi(\alpha,\beta)$ be the projector to the one-dimensional subspace $H(\alpha,\beta)$ that contains the coherent states $\ket{C;\alpha,\beta}$. Then
\begin{eqnarray}\label{1111}
&&\frac{1}{d}\sum _{\alpha,\beta}\Pi({\alpha,\beta})={\bf 1};\;\;\;\;\;\;\Pi ({\alpha,\beta})=\ket{C;\alpha,\beta}\bra{C;\alpha,\beta}\nonumber\\
&&D(\gamma, \delta)\Pi({\alpha ,\beta })D^{\dagger}(\gamma, \delta)=\Pi({\alpha +\gamma,\beta +\delta})
\end{eqnarray}
The term `coherent states' refers to these two properties. They are the analogue of the harmonic oscillator coherent states \cite{coh1,coh2,coh3}, in the context of quantum systems with finite-dimensional Hilbert space.

Let ${\cal M}_d$ be the set of $d\times d$ Hermitian positive semidefinite matrices, and ${\cal N}_d\subset {\cal M}_d$ the set of $d\times d$ density matrices.
For $\theta \in {\cal M}_d$, the $Q$-function is given by
\begin{eqnarray}\label{mmm}
Q(\alpha,\beta\;|\;\theta)=\frac{1}{d}{\rm Tr}[\Pi (\alpha,\beta)\theta];\;\;\;\;\;\;\sum _{\alpha, \beta} Q(\alpha,\beta\;|\;\theta)={\rm Tr} \theta,
\end{eqnarray}
and the $P$-function by
\begin{eqnarray}\label{9}
\theta=\sum _{\alpha,\beta}P(\alpha,\beta\;|\;\theta)\Pi ({\alpha,\beta});\;\;\;\;\;
\sum _{\alpha,\beta}P(\alpha,\beta\;|\;\theta)={\rm Tr}\theta.
\end{eqnarray}
If $\theta _{mn}=\bra{X;m}\theta\ket{X;n}$
then
\begin{eqnarray}\label{xxx}
Q(\alpha,\beta\;|\;\theta)=\sum \theta _{mn}A(m,n;\alpha, \beta);\;\;\;\;A(m,n;\alpha, \beta)=\frac{1}{d}\bra{C;\alpha, \beta}X;m\rangle \langle X;n\ket{C;\alpha, \beta}
\end{eqnarray}
The $A(m,n;\alpha, \beta)$ is a $d^2\times d^2$ matrix, and the fiducial vector should be such that its determinant is non-zero.
Then Eq.(\ref{xxx}) is a system with $d^2$ equations and if the $Q(\alpha,\beta\;|\;\theta)$ are known we can calculate the $\theta _{mn}$, and vice-versa.

\paragraph*{Wehrl entropy for the $Q$-function of density matrices:}
For $\theta \in {\cal M}_d$, 
we define the $\widetilde {\theta } =\theta /{\rm Tr}\theta \in {\cal N}_d$, which can be viewed as a density matrix.
Its Wehrl entropy\cite{We} is given by
\begin{eqnarray}\label{entro}
E(\widetilde {\theta })=-\sum _{\alpha, \beta} Q(\alpha,\beta\;|\;\widetilde {\theta } )\log Q(\alpha,\beta\;|\;\widetilde {\theta } );\;\;\;\;\;\;\sum _{\alpha, \beta} Q(\alpha,\beta\;|\;\widetilde {\theta })=1.
\end{eqnarray}
Its maximum value is $d\log d$.
Under any permutation $(\gamma, \delta)=\sigma (\alpha, \beta)$ of the indices of the $Q$-function, the Wehrl entropy $E(\widetilde {\theta})$, does not change:
\begin{eqnarray}
E(\widetilde {\theta })=-\sum Q(\alpha,\beta\;|\;\widetilde {\theta } )\log Q(\alpha,\beta\;|\;\widetilde {\theta } )
=-\sum Q[\sigma(\alpha,\beta)\;|\;\widetilde {\theta } ]\log Q[\sigma(\alpha,\beta)\;|\;\widetilde {\theta } ]
\end{eqnarray}
Therefore the Wehrl entropy does not tell us, for which coherent states we get high (or low) value of the $Q$-function. The Wehrl entropy
shows whether the $Q$-function is uniform or concentrated in a few coherent states, but in the latter case it does not show where it is concentrated.
Depending on the application, this might be a desirable or undesirable property of the Wehrl entropy.
This is also seen by the fact that under displacement transformations, the Wehrl entropy does not change:
\begin{eqnarray}\label{bbb}
E\left [D(\alpha, \beta) \widetilde {\theta } D^{\dagger}(\alpha, \beta)\right ]=E(\widetilde {\theta }).
\end{eqnarray}
We stress that the $Q(\alpha,\beta\;|\;\widetilde {\theta })$ are not probabilities, because the coherent states overlap with each 
other ($d^2$ coherent states in a $d$-dimensional space).
Related to this, is that the distribution $Q(\alpha,\beta\;|\;\widetilde {\theta } )$ can not be very narrow, and consequently 
the Wehrl entropy $E(\widetilde {\theta })$ is greater than a certain value
(which in the harmonic oscillator case is equal to one\cite{LIEB}).
The motivation for introducing Choquet integrals later, is to quantify and elucidate 
the effects of these overlaps between the coherent states.

\subsection{Two-dimensional cumulative projectors}\label{eac}
We consider the two-dimensional space $H({\alpha _1,\beta _1};{\alpha _2,\beta _2})$ that contains all superpositions
$\kappa \ket {C;{\alpha _1,\beta _1}}+\lambda \ket{C;{\alpha _2,\beta _2}}$:
\begin{eqnarray}
H({\alpha _1,\beta _1};{\alpha _2,\beta _2})={\rm span}[H({\alpha _1,\beta _1}) \cup H({\alpha _2,\beta _2})]
\end{eqnarray}
In the language of lattices\cite{LO1,LO2,LO3} this is the disjunction of the one dimensional spaces $H({\alpha _1,\beta _1})$ and $H({\alpha _2,\beta _2})$.
We note that the conjuction of these spaces $H({\alpha _1,\beta _1}) \cap H({\alpha _2,\beta _2})$ contains only the zero vector.

We denote the projector to the space $H({\alpha _1,\beta _1};{\alpha _2,\beta _2})$ 
as $\Pi({\alpha _1,\beta _1};{\alpha _2,\beta _2})$ or if there is no danger of confusion simply as $\Pi(1,2)$.
The $\Pi(1,2)$ can be calculated with the Gram-Schmidt orthogonalization method,
where we take the component of $\ket {C;{\alpha _2,\beta _2}}$ which is perpendicular to $\ket {C;{\alpha _1,\beta _1}}$, and we normalize it into a vector with length $1$.
We express this in terms of projectors as
\begin{eqnarray}\label{gg}
&&\Pi (1,2)=\Pi(1)+\varpi (2|1)\nonumber\\
&&\varpi (2|1)=
\frac{\Pi^{\perp}(1)\Pi(2)\Pi^{\perp}(1)}{{\rm Tr}[\Pi^{\perp}(1)\Pi(2)]}\nonumber\\
&&\Pi^{\perp}(1)={\bf 1}-\Pi (1).
\end{eqnarray}
We call the $\Pi (1,2)$ cumulative projectors because 
they project into two-dimensional spaces, and therefore the corresponding probabilities
take a range (two) values. The $\varpi (2|1)=\Pi (1,2)-\Pi(1)$
can be viewed as a discrete derivative (difference) of the cumulative projectors. 
In additive (Kolmogorov) probabilities, the derivative of the cumulative distributions are the probability distributions.
This is not true for capacities (non-additive probabilities), precisely because additivity does not hold.
Here this is the fact that the $\varpi (2|1)$ is different from $\Pi (2)$.
From a physical point of view, a measurement with the projector $\Pi^{\perp}(1)$ 
(which projects to the orthogonal complement of $H(\alpha _{1},\beta _{1})$),
on the coherent state $\ket {C;{\alpha _2,\beta _2}}$
(which is described with the density matrix $\Pi(2)$), will collapse it into the $\varpi (2|1)$ with probability ${\rm Tr}[\Pi^{\perp}(1)\Pi(2)]$.

\begin{lemma}\label{le1}
\begin{eqnarray}\label{1019}
D(\gamma, \delta)\varpi({\alpha _2,\beta _2}|{\alpha _1,\beta _1})D^{\dagger}(\alpha , \beta)=\varpi({\alpha _2+\gamma,\beta _2+\delta}|{\alpha _1+\gamma,\beta _1+\delta})
\end{eqnarray}
\end{lemma}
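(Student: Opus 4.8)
The plan is to exploit the covariance of the one-dimensional coherent-state projectors under the Heisenberg--Weyl group, recorded in Eq.(\ref{1111}), together with the fact that each $D(\gamma,\delta)$ is unitary. (Note that the right-hand side of Eq.(\ref{1019}) only makes sense if the two displacement operators on the left carry the same argument, so I read the statement as the unitary conjugation $D(\gamma,\delta)\,\varpi(\cdots)\,D^{\dagger}(\gamma,\delta)$; the second $D^{\dagger}$ should be understood as $D^{\dagger}(\gamma,\delta)$.)

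First I would conjugate the orthocomplement projector. Since $D(\gamma,\delta)$ is unitary and $D(\gamma,\delta)\Pi(\alpha_1,\beta_1)D^{\dagger}(\gamma,\delta)=\Pi(\alpha_1+\gamma,\beta_1+\delta)$ by Eq.(\ref{1111}), we get $D(\gamma,\delta)\Pi^{\perp}(\alpha_1,\beta_1)D^{\dagger}(\gamma,\delta)=D(\gamma,\delta)[{\bf 1}-\Pi(\alpha_1,\beta_1)]D^{\dagger}(\gamma,\delta)={\bf 1}-\Pi(\alpha_1+\gamma,\beta_1+\delta)=\Pi^{\perp}(\alpha_1+\gamma,\beta_1+\delta)$. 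Writing $1'=(\alpha_1+\gamma,\beta_1+\delta)$ and $2'=(\alpha_2+\gamma,\beta_2+\delta)$, I would then treat the numerator of $\varpi(2|1)$ by inserting $D^{\dagger}(\gamma,\delta)D(\gamma,\delta)={\bf 1}$ between consecutive factors:
\begin{equation*}
D(\gamma,\delta)\,\Pi^{\perp}(1)\Pi(2)\Pi^{\perp}(1)\,D^{\dagger}(\gamma,\delta)=[D\Pi^{\perp}(1)D^{\dagger}]\,[D\Pi(2)D^{\dagger}]\,[D\Pi^{\perp}(1)D^{\dagger}]=\Pi^{\perp}(1')\Pi(2')\Pi^{\perp}(1').
\end{equation*}
For the scalar denominator I would use cyclicity of the trace and unitarity: ${\rm Tr}[\Pi^{\perp}(1)\Pi(2)]={\rm Tr}[D\Pi^{\perp}(1)\Pi(2)D^{\dagger}]={\rm Tr}[D\Pi^{\perp}(1)D^{\dagger}D\Pi(2)D^{\dagger}]={\rm Tr}[\Pi^{\perp}(1')\Pi(2')]$, so the normalising constant is invariant while the numerator is covariant. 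Dividing one by the other gives $D(\gamma,\delta)\,\varpi(2|1)\,D^{\dagger}(\gamma,\delta)=\varpi(2'|1')$, which is the claim.

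I do not expect a genuine obstacle here: the argument is purely a conjugation/covariance computation built on Eq.(\ref{1111}). The only points meriting a word of care are (i) reading the two displacement operators in Eq.(\ref{1019}) as mutually adjoint, and (ii) checking that the normalising trace transforms correctly, which it does by cyclicity. One could additionally remark that, since the denominator is a positive number unchanged by the conjugation, it vanishes for the shifted pair exactly when it vanishes for the original pair, so $\varpi(2'|1')$ is well defined precisely when $\varpi(2|1)$ is; this is consistent with the requirement on the fiducial vector made after Eq.(\ref{xxx}).
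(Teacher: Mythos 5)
Your argument is exactly the paper's: the published proof simply conjugates the defining formula (\ref{gg}) for $\varpi(2|1)$ by $D(\gamma,\delta)$ and invokes the covariance of $\Pi(\alpha,\beta)$ in Eq.(\ref{1111}); you have just spelled out the routine steps (orthocomplement, numerator via inserted $D^{\dagger}D={\bf 1}$, trace denominator via cyclicity) and correctly identified the typo $D^{\dagger}(\alpha,\beta)$ for $D^{\dagger}(\gamma,\delta)$. The proposal is correct and matches the paper's route.
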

\begin{proof}
We multiply both sides of the second of Eq.(\ref{gg}) by $D(\gamma, \delta)$ on the left and $D^{\dagger}(\gamma, \delta)$ on the right, taking into account Eq.(\ref{1111}).
\end{proof}

In analogy with Eq.(\ref{1})  we consider the following operator: 
\begin{eqnarray}\label{32}
{\mathfrak D} (1,2)=\Pi (1,2)-\Pi(1)-\Pi (2);\;\;\;\;{\rm Tr}[{\mathfrak D} (1,2)]=0.
\end{eqnarray} 
A projector to the space $H({\alpha _1,\beta _1}) \cap H({\alpha _2,\beta _2})$ should also 
be added to the right hand side, but as we explained earlier it is zero.
The trace of this operator with a density matrix $\rho$ converts the projectors into probabilities, and in this sense
the ${\mathfrak D} (1,2)$ is analogous to $\delta (A,B)$ in Eq.(\ref{1}). Unlike $\delta (A,B)$, the
${\mathfrak D} (1,2)$ is in general non-zero, and quantifies deviations from the additivity of probability due to the 
overlapping nature of coherent states. 
The resolution of the identity in terms of coherent states, shows that in the corresponding sum these overlaps cancel
each other. The following proposition shows that something similar happens with the  ${\mathfrak D} (1,2)$ operators:
\begin{proposition}\label{abc}
For fixed $\alpha _i, \beta _i$:
\begin{eqnarray}
\sum _{\kappa, \lambda}{\mathfrak D} ({\alpha_1+\kappa ,\beta_1+\lambda};{\alpha _2+\kappa, \beta _2+\lambda})=0.
\end{eqnarray} 
\end{proposition}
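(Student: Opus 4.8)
The plan is to substitute the definition of ${\mathfrak D}$ from Eq.(\ref{32}) and split the translation sum into three pieces. Abbreviating $1\equiv(\alpha_1+\kappa,\beta_1+\lambda)$ and $2\equiv(\alpha_2+\kappa,\beta_2+\lambda)$,
\begin{equation}
\sum_{\kappa,\lambda}{\mathfrak D}(1,2)=\sum_{\kappa,\lambda}\Pi(1,2)-\sum_{\kappa,\lambda}\Pi(1)-\sum_{\kappa,\lambda}\Pi(2).
\end{equation}
As $(\kappa,\lambda)$ ranges over ${\mathbb Z}(d)\times{\mathbb Z}(d)$, so do $(\alpha_1+\kappa,\beta_1+\lambda)$ and $(\alpha_2+\kappa,\beta_2+\lambda)$; hence by the resolution of the identity in Eq.(\ref{1111}) each of the last two sums equals $d\,{\bf 1}$, so together they contribute $-2d\,{\bf 1}$. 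It therefore remains to show that $\sum_{\kappa,\lambda}\Pi(1,2)=2d\,{\bf 1}$.

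For this I would use the Gram-Schmidt splitting $\Pi(1,2)=\Pi(1)+\varpi(2\,|\,1)$ of Eq.(\ref{gg}). The $\Pi(1)$ part again sums to $d\,{\bf 1}$ by Eq.(\ref{1111}). For the remaining part, Lemma \ref{le1} gives $\varpi(\alpha_2+\kappa,\beta_2+\lambda\,|\,\alpha_1+\kappa,\beta_1+\lambda)=D(\kappa,\lambda)\,\varpi_0\,D^{\dagger}(\kappa,\lambda)$ with $\varpi_0\equiv\varpi(\alpha_2,\beta_2\,|\,\alpha_1,\beta_1)$ fixed, so the sum becomes a group average $\sum_{\kappa,\lambda}D(\kappa,\lambda)\,\varpi_0\,D^{\dagger}(\kappa,\lambda)$. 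From Eq.(\ref{gg}), $\varpi_0$ is a normalization of $\Pi^{\perp}(1)\Pi(2)\Pi^{\perp}(1)=\Pi^{\perp}(1)\ket{C;\alpha_2,\beta_2}\bra{C;\alpha_2,\beta_2}\Pi^{\perp}(1)$, which is a nonzero rank-one projector (nonzero because $\ket{C;\alpha_1,\beta_1}$ and $\ket{C;\alpha_2,\beta_2}$ are distinct coherent states spanning a two-dimensional space), so ${\rm Tr}\,\varpi_0=1$.

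The key input is then the ``twirl'' identity $\sum_{\kappa,\lambda}D(\kappa,\lambda)\,A\,D^{\dagger}(\kappa,\lambda)=d\,({\rm Tr}\,A)\,{\bf 1}$ for any operator $A$. I would prove it by observing that conjugating the left-hand side by any $D(\gamma,\delta)$ merely reindexes the sum (the displacement operators close under multiplication up to a phase which cancels between $D$ and $D^{\dagger}$), so the left-hand side commutes with every $D(\gamma,\delta)$; since the $d^{2}$ operators $D(\kappa,\lambda)$ span the whole operator algebra, it must be a multiple of ${\bf 1}$, and taking the trace fixes the multiple. Applied to $A=\varpi_0$ this yields $\sum_{\kappa,\lambda}\varpi(2\,|\,1)=d\,{\bf 1}$, hence $\sum_{\kappa,\lambda}\Pi(1,2)=d\,{\bf 1}+d\,{\bf 1}=2d\,{\bf 1}$, and finally $\sum_{\kappa,\lambda}{\mathfrak D}(1,2)=2d\,{\bf 1}-2d\,{\bf 1}=0$. (Equivalently, one could note that the $d^{2}$ displaced copies of any unit vector form a tight frame, i.e.\ that the resolution of the identity of Eq.(\ref{1111}) holds for an arbitrary fiducial vector, applied here to the unit vector defining $\varpi_0$.) The only non-routine step is this twirl/tight-frame fact: it is standard (Schur's lemma for the irreducible Heisenberg--Weyl representation, or the unitary-operator-basis property of the $D(\kappa,\lambda)$), but it is not spelled out in the excerpt and must be supplied; everything else is bookkeeping with the resolution of the identity.
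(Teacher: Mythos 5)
Your proposal is correct and follows essentially the same route as the paper: both arguments reduce the claim to the group-average (``twirl'') identity $\sum_{\kappa,\lambda}D(\kappa,\lambda)\,\chi\,D^{\dagger}(\kappa,\lambda)=d\,({\rm Tr}\,\chi)\,{\bf 1}$ applied to the two-dimensional cumulative projector, combined with the resolution of the identity for the individual coherent-state projectors. The only differences are cosmetic: the paper applies the twirl directly to $\chi=\Pi(\alpha_1,\beta_1;\alpha_2,\beta_2)$ (whose trace is $2$) and cites the identity from Eq.~(119) of ref.~\cite{Fin2}, whereas you reach the same value $2d\,{\bf 1}$ via the Gram--Schmidt splitting of Eq.~(\ref{gg}) and supply a Schur-lemma proof of the twirl identity yourself.
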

\begin{proof}
Using the resolution of the identity for coherent states, it has been proved (Eq.(119) in ref\cite{Fin2}) that for any operator $\chi$
\begin{eqnarray}
\frac{1}{d}\sum _{\kappa, \lambda}D(\kappa,\lambda)\chi [D(\kappa,\lambda)]^{\dagger}={\bf 1}{\rm Tr}\chi.
\end{eqnarray} 
We use this with $\chi=\Pi(\alpha _1, \beta _1;\alpha _2, \beta _2)$, in conjuction with the relation
\begin{eqnarray}
D(\kappa,\lambda)\Pi(\alpha _1, \beta _1;\alpha _2, \beta _2)[D(\kappa,\lambda)]^{\dagger}=
\Pi({\alpha_1+\kappa ,\beta_1+\lambda};{\alpha _2+\kappa, \beta _2+\lambda}),
\end{eqnarray}
and we prove that
\begin{eqnarray}
\frac{1}{2d}\sum _{\kappa, \lambda}\Pi({\alpha _1+\kappa ,\beta _1+\lambda};{\alpha _2+\kappa, \beta _2+\lambda})={\bf 1}.
\end{eqnarray}
This together with the resolution of the identity for $\Pi(\alpha _1, \beta _1)$ and $\Pi(\alpha _2,\beta _2)$
proves the proposition.
\end{proof}
\begin{remark}
The ${\mathfrak D} (1,2)$ are a special case of more general operators ${\mathfrak D}(H_1, H_2)$
associated with subspaces $H_1$ and $H_2$ of $H(d)$, which we have studied in \cite{VO1}.
We have proved there that
the commutator of the projectors to these subspaces
$[\Pi(H_1),\Pi(H_2)]$ is related to ${\mathfrak D}(H_1, H_2)$, through the relation:
\begin{eqnarray}\label{e3}
[\Pi (H_1),\Pi(H_2)]={\mathfrak D}(H_1, H_2)[\Pi(H_1)-\Pi(H_2)].
\end{eqnarray}
This relation links non-commutativity with non-additive probabilities.
For non-commuting projectors, the ${\rm Tr}[\rho {\mathfrak D}(H_1, H_2)]$ (where $\rho$ is a density matrix) is non-zero, and we cannot interpret the 
corresponding probabilities as additive (Kolmogorov) probabilities.
In \cite{VO1}, we interpreted quantum probabilities as non-additive (Dempster-Shafer) probabilities, for which the $\delta(A,B)$ of Eq.(\ref{1}) is in general non-zero.
\end{remark}

\subsection{Multi-dimensional cumulative projectors}

We order the coherent states in an arbitrary way and we label them as $\ket {C;{\alpha _{1},\beta _{1}}},...,\ket {C;{\alpha _{d^2},\beta _{d^2}}} $.
The formalism in this section depends on this ordering.
In the Choquet integrals, the $Q$-function of an operator $\theta$ will defne the ordering, as discussed in the next section.

We introduce inductively the space $H(\alpha _{i},\beta _{i};...;\alpha _{d^2},\beta _{d^2})$ that contains all superpositions of the $d^2-(i-1)$ coherent states
$\ket {C;{\alpha _{i},\beta _{i}}},...,\ket {C;{\alpha _{d^2},\beta _{d^2}}} $.
We start from $d^2$ and use `reverse order' because this is consistent with the ascending ordering in Eq.(\ref{234}) (and Eq.(\ref{56}) later), which is standard practice in the Choquet integrals literature.
As we go from the space $H(\alpha _{i+1},\beta _{i+1};...;\alpha _{d^2},\beta _{d^2})$ to the space $H(\alpha _{i},\beta _{i};...;\alpha _{d^2},\beta _{d^2})$,
there are two cases:
\begin{itemize}
\item
The coherent state $\ket {C;{\alpha _{i},\beta _{i}}}$ is not a linear combination of the coherent states $\ket {C;{\alpha _{i+1},\beta _{i+1}}},...,\ket {C;{\alpha _{d^2},\beta _{d^2}}}$.
The projector to the space $H(\alpha _{i},\beta _{i};...;\alpha _{d^2},\beta _{d^2})$ is
\begin{eqnarray}\label{ggg}
&&\Pi (i,...,d^2)=\Pi(i+1,...,d^2)+\varpi (i|i+1,...,d^2);\;\;\;\;\;i=1,...,d^2\nonumber\\
&&\varpi (i|i+1,...,d^2)=
\frac{\Pi^{\perp}(i+1,...,d^2)\Pi(i)\Pi^{\perp}(i+1,...,d^2)}{{\rm Tr}[\Pi^{\perp}(i+1,...,d^2)\Pi(i)]}\nonumber\\
&&\Pi^{\perp}(i,...,d^2)={\bf 1}-\Pi(i,...,d^2)\nonumber\\
&&\Pi(i+1,...,d^2)\varpi (i|i+1,...,d^2)=0.
\end{eqnarray}
The denominator in this case is different than zero, and the dimension of the space $H(\alpha _{i},\beta _{i};...;\alpha _{d^2},\beta _{d^2})$ is equal to 
the dimension of the space $H(\alpha _{i+1},\beta _{i+1};...;\alpha _{d^2},\beta _{d^2})$ plus one.
The Gram-Schmidt orthogonalization method is used here. The algorithm can also be implemented with the QR factorization of matrices \cite{matrix}, and is available in computer libraries (eg, in MATLAB).
From a physical point of view, a measurement with the projector $\Pi^{\perp}(i+1,...,d^2)$ 
(which projects to the orthogonal complement of $H(\alpha _{i+1},\beta _{i+1};...;\alpha _{d^2},\beta _{d^2})$),
on the coherent state $\ket {C;{\alpha _i,\beta _i}}$
(which is described with the density matrix $\Pi(i)$), will collapse it into the $\varpi (i|i+1,...,d^2)$ with probability 
${\rm Tr}[\Pi^{\perp}(i+1,...,d^2)\Pi(i)]$.

\item
The coherent state $\ket {C;{\alpha _{i},\beta _{i}}}$ is a linear combination of the coherent states $\ket {C;{\alpha _{i+1},\beta _{i+1}}},...,\ket {C;{\alpha _{d^2},\beta _{d^2}}}$.
In this case $\varpi (i|i+1,...,d^2)=0$ and the dimension of the space $H(\alpha _{i},\beta _{i};...;\alpha _{d^2},\beta _{d^2})$ is equal to 
the dimension of the space $H(\alpha _{i+1},\beta _{i+1};...;\alpha _{d^2},\beta _{d^2})$.
\end{itemize}

There are $d^2$ projectors $\varpi _{\theta}(i|i+1,...,d^2)$ in the $d$-dimensional space $H(d)$
(with $\varpi(d^2)=\Pi(d^2)$).
$d^2-d$ of these projectors are equal to zero, and the rest form an orthogonal and complete set of projectors in $H(d)$: 
\begin{eqnarray}\label{200}
&&\Pi (i,...,d^2)=\varpi (i|i+1,...,d^2)+...+\varpi(d^2-1|d^2)+\varpi(d^2)\nonumber\\
&&\sum _{i=1}^{d^2}\varpi (i|i+1,...,d^2)={\bf 1}.
\end{eqnarray}
Relations similar to those in lemma \ref{le1} can also be proved for the projectors $\varpi (i|i+1,...,d^2)$. 

\paragraph*{Coherent states with a generic fiducial vector:} A fiducial vector is called `generic', if any $d$ of the corresponding  coherent states are linearly independent.
In this case any set of $d$ or more coherent states is a total set in $H(d)$, i.e., there is not vector which is orthogonal to all these coherent states. 
Then for any set $A=\{i_1,...,i_d\}$ with $d$ indices 
\begin{eqnarray}\label{simple}
&&\Pi (i_1,...,i_d)={\bf 1};\;\;\;\;\;\varpi (j|i_1,...,i_d)=0;\;\;\;\;j\in\{1,...,d^2\}-A\nonumber\\
&&\varpi (i_d|i_{d-1},...,i_1)+...+\varpi(i_2|i_1)+\varpi(i_1)={\bf 1}.
\end{eqnarray}
Apart from position and momentum states, `most' of the other vectors can be used as generic fiducial  vectors.

For simplicity all our examples later, are in the $3$-dimensional space $H(3)$, and use
coherent states $D(\alpha, \beta) \ket{\eta}$ with respect to the generic fiducial vector 
\begin{eqnarray}\label{fid}
\ket{\eta }=\frac{1}{\sqrt{14}}\left (\ket{X;0}+2\ket{X;1}+3\ket{X;2}\right).
\end{eqnarray}

\subsection{M\"obius operators}\label{MMM}

The M\"obius transform of Eqs(\ref{M}),(\ref{b7}), in the present context provides a systematic method for the expression of the ${\mathfrak D}$-operators in terms of the 
cumulative $\Pi$-projectors.
If $A=\{(\alpha _1,\beta _1), (\alpha _2,\beta _2),...\}$ (where $\alpha _i,\beta _i \in {\mathbb Z}(d)$) is a set of pairs of indices,
we use the shorthand notation ${\mathfrak D}(A)$ for ${\mathfrak D}(1,2,...)$, and $\Pi(A)$ for $\Pi(1,2,...)$.
${\mathfrak D} (B)$ is related to the various projectors through the M\"obius transform \cite{R,R1}
\begin{eqnarray}\label{m}
{\mathfrak D} (B)=\sum _{A\subseteq B} (-1)^{|A|-|B|}\Pi(A).
\end{eqnarray}
For sets with only one pair ${\mathfrak D} (A)=\Pi(A)$.
A simple example of this, with two coherent states, is in Eq.(\ref{32}).
We refer to ${\mathfrak D} (B)$ as the M\"obius operators.
The trace of these operators with a density matrix, lead to probabilistic relations which quantify deviations from the additivity of probability.
The inverse M\"obius transform is
\begin{eqnarray}\label{al}
\Pi (A)=\sum _{B\subseteq A}{\mathfrak D} (B).
\end{eqnarray}
In Eq.(\ref{al}) we put $A={\mathbb Z}(d)\times {\mathbb Z}(d)$ (the set of all $(\alpha_i,\beta _i)$), and we get
\begin{eqnarray}\label{al1}
\sum _{i=1}^{d^2}\Pi(i)+
\sum _{i,j}{\mathfrak D} (i,j)+...+{\mathfrak D}(1,...,d^2)={\bf 1}.
\end{eqnarray}
This can be viewed as a kind of weak resolution of the identity, where the `M\"obius operators' eliminate the `double counting' in the sum of the non-orthogonal projectors. The term `weak'is used to indicate that in addition to the projectors, the M\"obius operators are needed.

This inverse M\"obius transform involves the $d^2$ projectors $\Pi(i)$, 
and all the M\"obius ${\mathfrak D}$-operators, whose role is to remove the overlaps between the $\Pi(i)$ so there is no double-counting.
Eq.(\ref{al1}) is the quantum analogue of Eq.(\ref{b70}).
From Eqs(\ref{1111}), (\ref{al1}) it follows that
\begin{eqnarray}
\sum _{i,j}{\mathfrak D} (i,j)+...+{\mathfrak D}(1,...,d^2)=(1-d){\bf 1}.
\end{eqnarray}
The amount of double counting in the sum $\sum \Pi(i)=d{\bf 1}$ is $(d-1){\bf 1}$, and it is cancelled by the above sum of M\"obius ${\mathfrak D}$-operators.

In the case of coherent states with a generic fiducial vector, we insert in Eq.(\ref{al}) any set with $d$ pairs of indices, 
$A=\{i_1,...,i_d\}$, and we get the following inverse M\"obius transform that involves only $d$ of the $d^2$ coherent states,
and the corresponding M\"obius operators.
\begin{eqnarray}\label{al2}
\sum _{j=1}^{d}\Pi(i_j)+
\sum _{i_j,i_k}{\mathfrak D} (i_j,i_k)+...+{\mathfrak D}(i_1,...,i_d)={\bf 1}.
\end{eqnarray}
\begin{remark}
The trace of the projectors $\Pi(A)$ times a density matrix, gives capacities.
In this sense, the projectors $\Pi(A)$ are the quantum analogue of the capacities $\mu$ in the classical case.
The $\Pi(A\cup B)\ne \Pi(A)+ \Pi(B)$ corresponds to the non-additivity of capacities.
The operators ${\mathfrak D}$ are the quantum analogue of the ${\mathfrak d}$ in the classical case.
\end{remark}

\begin{example}
In the three-dimensional space $H(3)$ we consider coherent states with a generic fiducial vector. 
For any triplet of indices $i,j,k$ (from $1,...,9$) we consider the M\"obius operators:
\begin{eqnarray}
&&{\mathfrak D} (i)=\Pi (i);\;\;\;\;{\mathfrak D} (i,j)=\Pi(i,j)-\Pi(i)-\Pi(j)\nonumber\\
&&{\mathfrak D} (i,j,k)={\bf 1}-\Pi(i,j)-\Pi(i,k)-\Pi(j,k)+\Pi(i)+\Pi(j)+\Pi(k)
\end{eqnarray}
If $A$ is a set with three of the indices $1,...,9$, then
\begin{eqnarray}
\sum _i\Pi(i)+\sum _{i,j}{\mathfrak D} (i,j)+{\mathfrak D} (i,j,k)={\bf 1};\;\;\;\;\;i,j,k\in A.
\end{eqnarray}
This involves $3$ (from the total of $9$) coherent states, and the corresponding M\"obius operators.
\end{example}

\section{The discrete Choquet integral for the $Q$-function}\label{BBB}

The formalism below is presented with the $Q$-function of operators $\theta \in {\cal M}_d$, but it can also be used with the $P$-function,
for operators $\theta$ with non-negative $P$-function.
The formalism can be extended to the more general case where $Q$ and $P$ take all real values (i.e., all Hermitian operators), but we do not discuss this in the present paper.

We relabel the $Q(\alpha,\beta\;|\;\theta)$ as $Q(i\;|\;\theta)$ ($i=1,...,d^2$) so that
\begin{eqnarray}\label{56}
0\le Q (1\;|\;\theta)\le Q(2\;|\;\theta)\le...\le Q(d^2\;|\;\theta).
\end{eqnarray}
We use here a ranking permutation 
\begin{eqnarray}\label{a56}
i=\sigma(\alpha,\beta \;|\;\theta),
\end{eqnarray}
of the $d^2$ indices $(\alpha,\beta)\in {\mathbb Z}(d)\times {\mathbb Z}(d)$
which depends on the operator $\theta$. Accordingly, we relabel the subspaces $H(\alpha,\beta)$ as
$H_{\theta}(i)$, and the projectors $\Pi(\alpha,\beta)$ as $\Pi_{\theta}(i)$. 
The index $\theta$ indicates that the labelling depends on $\theta$ (on the reordering in Eq.(\ref{a56})).
We note here that for large $d$, the ordering of the $Q(\alpha, \beta|\theta)$ can be a practically difficult problem, but there are computer programmes which do this (e.g., in MATLAB).

In analogy to the classical case in Eq.(\ref{87}), we introduce the Choquet integral ${\cal C}_Q(\theta)$ of the $Q$-function of $\theta$, as
\begin{eqnarray}\label{010}
{\cal C}_Q(\theta)&=&\sum_{i=1}^{d^2}dQ(i\;|\;\theta)\varpi _{\theta}(i|i+1,...,d^2),
\end{eqnarray}
where $\varpi _{\theta}(i|i+1,...,d^2)$ are the discrete derivatives (differences) of the cumulative projectors
\begin{eqnarray}\label{4fk}
\varpi _{\theta}(i|i+1,...,d^2)&=&\Pi _{\theta}(i;{i+1};...;{d^2})-\Pi _{\theta}({i+1};{i+2};...; {d^2})
\end{eqnarray}
These projectors are the same as Eq.(\ref{ggg}), but here the labelling depends on the ranking of the $Q$-function of $\theta$.
Since $d^2-d$ of the projectors $\varpi _{\theta}(i|i+1,...,d^2)$ are zero, it follows that only $d$ of the $d^2$ values of $Q(i\;|\;\theta)$, 
contribute to the ${\cal C}_Q(\theta)$. The 
${\cal C}_Q(\theta)$ is a Hermitian positive semidefinite operator with eigenprojectors the $\varpi _{\theta}(i|i+1,...,d^2)$ 
(the ones which are non-zero), and eigenvalues the corresponding $dQ(i\;|\;\theta)$.
There is a finite number of sets of projectors $\{\varpi _{\theta}(i|i+1,...,d^2)\}$ (given in Eqs(\ref{nm1}),(\ref{nm2}) below), 
and therefore the set of all ${\cal C}_Q(\theta)$ is a subset ${\cal C}_d$ of ${\cal M}_d$.
The Choquet integral is a map from ${\cal M}_d$ to ${\cal C}_d$.

In the case of 
coherent states with generic fiducial vectors, only the $d$ highest values of $Q(i\;|\;\theta)$ enter in Eq.(\ref{010}). 
In the rest of the paper we consider generic fiducial vectors, and
\begin{eqnarray}\label{010A}
{\cal C}_Q(\theta)&=&\sum_{i=d^2-d+1}^{d^2}dQ(i\;|\;\theta)\varpi _{\theta}(i|i+1,...,d^2)
\end{eqnarray}
We refer to the $\ket{C;\alpha _i, \beta _i}$, $\Pi(\alpha _i, \beta _i)$, $Q(i\;|\;{\theta })$
with $i=d^2-d+1,...,d$, which enter in Eq.(\ref{010A}),  as dominant coherent states, dominant projectors and dominant values of the $Q$-function, 
for the operator $\theta$.
We also refer to the $\ket{C;\alpha _i, \beta _i}$, $\Pi(\alpha _i, \beta _i)$, $Q(i\;|\;{\theta })$
in the `tail' $i=1,...,d^2-d+1,$ as inferior coherent states, inferior projectors and inferior values of the $Q$-function, for the operator $\theta$.

If two of the dominant values of the $Q$-function, $Q(i\;|\;\theta)$ and $Q(i+1\;|\;\theta)$,  are equal to each other, there are two different orderings of the corresponding coherent states that can be used in Eq.(\ref{010A}),
and they both lead to the same result.
Indeed, the contribution of these two terms to ${\cal C}_Q(\theta)$, is
\begin{eqnarray}\label{ppp}
&&dQ(i\;|\;\theta)\varpi _{\theta}(i|i+1,...,d^2)+dQ(i+1\;|\;\theta)\varpi _{\theta}(i+1|i+2,...,d^2)\nonumber\\&&=
dQ(i\;|\;\theta)[\Pi _{\theta}(i;{i+1};...;{d^2})-\Pi _{\theta}({i+2};...; {d^2})]
\end{eqnarray}
The $\Pi _{\theta}(i;{i+1};...;{d^2})$ does not change if we swap the two coherent states ranked with $i$ and $i+1$.
In this case there is a degeneracy in the eigenvalues of ${\cal C}_Q(\theta)$.
$Q(i\;|\;\theta)$ and $Q(i+1\;|\;\theta)$ are two eigenvalues equal to each other,
and $\Pi _{\theta}(i;{i+1};...;{d^2})-\Pi _{\theta}({i+2};...; {d^2})$ is the corresponding eigenprojector to a two dimensional space.

It is easily seen that ${\cal C}_Q(a \theta)=a{\cal C}_Q(\theta)$ for $a \ge 0$.
In general ${\cal C}_Q(\theta)+{\cal C}_Q(\phi)\ne {\cal C}_Q(\theta +\phi)$.
The question under what conditions we have additivity,
leads naturally to the concept of comonotonicity, which we discussed in a classsical context earlier, and which is discussed in a quantum context later.

\begin{proposition}\label{www}
${\cal C}_Q(\theta)$ can be written as
\begin{eqnarray}\label{n9b}
&&{\cal C}_Q(\theta)={\cal C}_{Q,1}(\theta)+{\cal C}_{Q,2}(\theta)+...+{\cal C}_{Q,{d^2}}(\theta)\nonumber\\
&&{\cal C}_{Q,1}(\theta)=d\sum \Pi(i)Q(i|\;\theta)\nonumber\\
&&{\cal C}_{Q,2}(\theta)=d\sum {\mathfrak D}(i,j)\min \{Q(i|\;\theta),Q(j|\;\theta)\}\nonumber\\
&&{\cal C}_{Q,3}(\theta)=d\sum {\mathfrak D}(i,j,k)\min \{Q(i|\;\theta),Q(j|\;\theta), Q(k|\;\theta)\}\nonumber\\
&&...............................................................\nonumber\\
&&{\cal C}_{Q,{d^2}}(\theta)=d{\mathfrak D}({1,...,{d^2}})Q(1|\;\theta)
\end{eqnarray}
\end{proposition}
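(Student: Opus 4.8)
The plan is to mirror, at the operator level, the classical derivation that leads from Eq.~(\ref{87}) to Eqs.~(\ref{88}),~(\ref{88a}): the cumulative projectors $\Pi_\theta$ will play the role of the (complementary) cumulative capacities, the M\"obius operators ${\mathfrak D}(A)$ the role of the ${\mathfrak d}(A)$, and since the values $Q(i\,|\,\theta)$ stay scalars everything remains linear in the operators. Concretely, the steps are: (i) a summation-by-parts in Eq.~(\ref{010}); (ii) substitution of the inverse M\"obius transform Eq.~(\ref{al}); (iii) interchange of the two summations together with a telescoping identification of the coefficient of each ${\mathfrak D}(A)$; (iv) grouping by cardinality $|A|=k$ to read off ${\cal C}_{Q,k}(\theta)$.

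For step (i), write $S_i$ for the index set of the coherent states ranked $i,i+1,\dots,d^2$, so that $\Pi_\theta(i;i+1;\dots;d^2)=\Pi(S_i)$ and, by Eq.~(\ref{4fk}), $\varpi_\theta(i|i+1,\dots,d^2)=\Pi(S_i)-\Pi(S_{i+1})$ with the convention $\Pi(S_{d^2+1})=\Pi(\emptyset)=0$. Substituting into Eq.~(\ref{010}) and collecting the coefficient of each $\Pi(S_i)$, the sum telescopes to
\begin{eqnarray}
{\cal C}_Q(\theta)=\sum_{i=1}^{d^2}d\,[Q(i\,|\,\theta)-Q(i-1\,|\,\theta)]\,\Pi(S_i),
\end{eqnarray}
with $Q(0\,|\,\theta):=0$; by the ordering Eq.~(\ref{56}) all the coefficients $Q(i\,|\,\theta)-Q(i-1\,|\,\theta)$ are non-negative. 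For step (ii), insert Eq.~(\ref{al}), $\Pi(S_i)=\sum_{A\subseteq S_i}{\mathfrak D}(A)$ (with ${\mathfrak D}(\emptyset)=0$), and for step (iii) interchange the order of summation. For a fixed non-empty $A$, the sets $S_i$ containing $A$ are precisely those with $i\le m(A)$, where $m(A)$ is the smallest rank occurring in $A$; since the ranking is by increasing $Q$, $Q(m(A)\,|\,\theta)=\min\{Q(j\,|\,\theta):j\in A\}=\min[Q(A\,|\,\theta)]$. Hence $\sum_{i=1}^{m(A)}d\,[Q(i\,|\,\theta)-Q(i-1\,|\,\theta)]=d\,\min[Q(A\,|\,\theta)]$, again by telescoping, and one obtains the quantum analogue of Eq.~(\ref{88}),
\begin{eqnarray}
{\cal C}_Q(\theta)=\sum_{A\subseteq\Omega}d\,{\mathfrak D}(A)\,\min[Q(A\,|\,\theta)].
\end{eqnarray}
Step (iv) is then immediate: grouping the terms with $|A|=k$ defines ${\cal C}_{Q,k}(\theta)$; for $k=1$ one uses ${\mathfrak D}(\{i\})=\Pi(i)$ and $\min[Q(\{i\}\,|\,\theta)]=Q(i\,|\,\theta)$, giving the second line of Eq.~(\ref{n9b}); for $k=d^2$ the ordering gives $\min[Q(\{1,\dots,d^2\}\,|\,\theta)]=Q(1\,|\,\theta)$, giving the last line.

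The summation-by-parts and the interchange of sums are routine; the only point that needs a little care is the combinatorial identification ``$A\subseteq S_i\iff i\le m(A)$'' together with the conventions $\Pi(\emptyset)={\mathfrak D}(\emptyset)=0$, and the remark that although $d^2-d$ of the $\varpi_\theta$ (and correspondingly many ${\mathfrak D}(A)$) vanish for a generic fiducial vector, this merely deletes terms from the sums and does not affect the algebraic identity. An even shorter alternative is to observe that the classical proof of Eq.~(\ref{88}) (via Eqs.~(\ref{mm1}),(\ref{mm2})) is a purely formal manipulation that is linear in the capacity, so the substitution $\mu\mapsto\Pi$, keeping the scalars $Q(i\,|\,\theta)$ intact, yields Eq.~(\ref{n9b}) directly; the argument above simply makes this substitution explicit.
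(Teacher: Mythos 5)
Your proof is correct and takes essentially the same route as the paper's: express the Choquet integral through the M\"obius operators and group the terms by cardinality, exactly as in the quantum analogue of Eq.~(\ref{88a}). The paper's own proof is a one-line appeal to this grouping; your summation-by-parts derivation of the intermediate identity ${\cal C}_Q(\theta)=\sum_{A}d\,{\mathfrak D}(A)\min[Q(A\,|\,\theta)]$ simply supplies the details that the paper leaves implicit (it defers even the classical version of this identity to the literature), and each step — the telescoping, the M\"obius inversion from Eq.~(\ref{al}), and the identification $A\subseteq S_i\iff i\le m(A)$ — is sound.
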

\begin{proof}
We start from Eq.(\ref{010}), and we group together all terms that involve 
the operators ${\mathfrak D}$ with $k$ Hilbert spaces. This gives the ${\cal C}_{Q,k}(\theta)$. 
\end{proof}
Choquet integrals, are designed for cases where the various alternatives 
are not independent, but they overlap with each other. The M\"obius transforms studied in section \ref{MMM}, quantify these overlaps.
The term $d{\mathfrak D}(i,j)\min \{Q(i|\;\theta),Q(j|\;\theta)\}$ is a `correction' related to the overlap between two coherent states $i,j$.
The term $d{\mathfrak D}(i,j,k)\min \{Q(i|\;\theta),Q(j|\;\theta),Q(k|\;\theta) \}$ 
is a `correction' related to the overlap between three coherent states $i,j,k$, etc.
Adding all of them together, we remove the double-counting due to overlaps between the coherent states.

The following proposition gives the ${\cal C}_Q(\theta)$ in some special cases.
\begin{proposition}
\mbox{}
\begin{itemize}
\item[(1)]
If the $d$ dominant values of $Q(i\;|\;\theta)$ are equal to each other, then 
\begin{eqnarray}\label{AA}
{\cal C}_Q(\theta)=d\max\{Q(\alpha, \beta|\theta)\}{\bf 1}.
\end{eqnarray}
\item[(2)]
Let $\theta =\sum _m \lambda _m\ket{X;m}\bra{X;m}$ with $\lambda _m\ge 0$ (so that it is positive semidefinite operator). 
Then
\begin{eqnarray}\label{33}
{\cal C}_Q(\theta)=d\max\{Q(\alpha, \beta|\theta)\}{\bf 1}.
\end{eqnarray}
Similar result holds for $\theta=\sum _m \lambda _m\ket{P;m}\bra{P;m}$.
\item[(3)]
\begin{eqnarray}
{\cal C}_Q({\bf 1})={\bf 1};\;\;\;\;\;
{\cal C}_Q(\theta +\lambda {\bf 1})={\cal C}_Q(\theta)+\lambda{\bf 1}.
\end{eqnarray}
\end{itemize}
\end{proposition}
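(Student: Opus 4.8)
The plan is to pull the common value out of Eq.(\ref{010A}). Put $q:=\max\{Q(\alpha,\beta\,|\,\theta)\}=Q(d^2\,|\,\theta)$. By hypothesis $Q(i\,|\,\theta)=q$ for every dominant index $i=d^2-d+1,\dots,d^2$, so
\begin{eqnarray}
{\cal C}_Q(\theta)=dq\sum_{i=d^2-d+1}^{d^2}\varpi_\theta(i\,|\,i+1,\dots,d^2).
\end{eqnarray}
By the definition in Eq.(\ref{4fk}) this sum telescopes to the cumulative projector $\Pi_\theta(d^2-d+1;\dots;d^2)$, which equals ${\bf 1}$ by Eq.(\ref{simple}) because $d$ coherent states with a generic fiducial vector span $H(d)$. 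Hence ${\cal C}_Q(\theta)=dq\,{\bf 1}$, which is Eq.(\ref{AA}). If $q$ is attained at more than $d$ coherent states the ranking is not unique, but every admissible choice of the $d$ dominant indices telescopes to the same projector, so the conclusion is independent of that choice; there is no real obstacle here.

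\textbf{Part (2).} The key step is to show that the $Q$-function of a position-diagonal operator is independent of $\alpha$. From Eqs.(\ref{dis}),(\ref{coh}), $\langle X;m|C;\alpha,\beta\rangle=\omega(-2^{-1}\alpha\beta)\,\langle X;m|Z^\alpha X^\beta|\eta\rangle$; since $X^\beta$ translates the position label and $Z^\alpha$ only contributes a unit-modulus phase, $|\langle X;m|C;\alpha,\beta\rangle|^2=|\eta_{m-\beta}|^2$. Then by Eq.(\ref{mmm}),
\begin{eqnarray}
Q(\alpha,\beta\,|\,\theta)=\frac{1}{d}\sum_m\lambda_m\,|\eta_{m-\beta}|^2,
\end{eqnarray}
which depends only on $\beta$. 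Therefore $\max\{Q\}$ is attained simultaneously at the $d$ coherent states $\{|C;\alpha,\beta_0\rangle\}_{\alpha\in{\mathbb Z}(d)}$, where $\beta_0$ maximises the right-hand side, so the $d$ dominant values of the $Q$-function coincide and part (1) gives ${\cal C}_Q(\theta)=d\max\{Q(\alpha,\beta\,|\,\theta)\}{\bf 1}$. For $\theta=\sum_m\lambda_m|P;m\rangle\langle P;m|$ one repeats the computation with $D(\alpha,\beta)$ replaced by its Fourier conjugate $F D(\alpha,\beta)F^{-1}$, which interchanges the roles of $\alpha$ and $\beta$; the $Q$-function is then independent of $\beta$ and the same argument applies. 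The only (minor) obstacle is keeping track of the displacement phase $\omega(-2^{-1}\alpha\beta)$, which disappears on taking the modulus.

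\textbf{Part (3).} For ${\cal C}_Q({\bf 1})$, Eq.(\ref{1111}) gives $Q(\alpha,\beta\,|\,{\bf 1})=\frac{1}{d}{\rm Tr}[\Pi(\alpha,\beta)]=\frac{1}{d}$ for all $(\alpha,\beta)$, so all $d^2$ values — in particular the $d$ dominant ones — are equal and part (1) yields ${\cal C}_Q({\bf 1})=d\cdot\frac{1}{d}\cdot{\bf 1}={\bf 1}$. For the translation property, linearity of the $Q$-function in $\theta$ gives $Q(\alpha,\beta\,|\,\theta+\lambda{\bf 1})=Q(\alpha,\beta\,|\,\theta)+\lambda/d$, so passing from $\theta$ to $\theta+\lambda{\bf 1}$ shifts every value of the $Q$-function by the same constant; hence the ranking permutation Eq.(\ref{a56}) and the projectors $\varpi_\theta(i\,|\,i+1,\dots,d^2)$ are unchanged. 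Substituting into Eq.(\ref{010A}) and using $\sum_i\varpi_\theta(i\,|\,i+1,\dots,d^2)={\bf 1}$ from Eq.(\ref{200}) gives ${\cal C}_Q(\theta+\lambda{\bf 1})={\cal C}_Q(\theta)+\lambda{\bf 1}$. Overall the proposition is essentially a corollary of part (1) together with the translation covariance of coherent states, and I expect no genuine difficulty beyond the elementary phase bookkeeping in part (2).
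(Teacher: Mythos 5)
Your proposal is correct and follows essentially the same route as the paper: part (1) from the completeness of the nonzero $\varpi_\theta(i|i+1,\dots,d^2)$ projectors, part (2) from the $\alpha$-independence of the $Q$-function of a position-diagonal operator (reducing it to part (1)), and part (3) from $Q(\alpha,\beta\,|\,{\bf 1})=1/d$ and the invariance of the ranking under a uniform shift. The only difference is that you spell out the telescoping sum and the explicit computation $|\langle X;m|C;\alpha,\beta\rangle|^2=|\eta_{m-\beta}|^2$, which the paper leaves implicit.
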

\begin{proof}
\mbox{}
\begin{itemize}
\item[(1)]
This follows immediately from Eq.(\ref{010A}) and the fact that the $\varpi _{\theta}(i|i+1,...,d^2)$ are an orthogonal and complete set of projectors.
\item[(2)]
For $\theta =\sum _m \lambda _m\ket{X;m}\bra{X;m}$, the $D(\alpha , \beta)\theta [D(\alpha , \beta)]^{\dagger}$ do not depend on $\alpha$, and 
consequently the $d$ dominant values of $Q(\alpha, \beta|\theta)$ are equal to each other.
From this follows Eq.(\ref{33}).
\item[(3)]
This follows from the fact that $Q(\alpha, \beta\;|\;{\bf 1})=\frac{1}{d}$.
\end{itemize}
\end{proof}

One of the applications of the Choquet integral is that it provides bounds for various physical quantities.
The following proposition provides bounds to ${\rm Tr}(\theta )$, 
${\rm Tr}(\rho \theta )$ (where $\rho$ is a density matrix), and ${\rm Tr}(\theta \phi)$,
in terms of ${\rm Tr}[{\cal C}_Q(\theta)]$, ${\rm Tr}[{\cal C}_Q(\phi)]$.
It also shows that ${\rm Tr}[{\cal C}_Q(\theta)]$ is a convex function.
We note that the calculation of ${\rm Tr}[{\cal C}_Q(\theta)]$ only requires the calculation of the $Q$-function and its ranking in Eq.(\ref{56}).
It does not require the calculation of the projectors $\varpi _{\theta}(i|i+1,...,d^2)$.
Indeed, from Eq.(\ref{010A}) it follows that
\begin{eqnarray}\label{rfv}
{\rm Tr}[{\cal C}_Q(\theta)]&=&\sum_{i=d^2-d+1}^{d^2}dQ(i\;|\;\theta).
\end{eqnarray}

\begin{proposition}\label{1234}
Let $\theta, \phi \in {\cal M}_d$.  
\begin{itemize}
\item[(1)]
For $\theta \ne 0$
\begin{eqnarray}\label{vvv}
\frac{1}{d}{\rm Tr}[{\cal C}_Q(\theta)]<{\rm Tr}(\theta)\le {\rm Tr}[{\cal C}_Q(\theta)].
\end{eqnarray}
For $\theta ={\bf 1}$ the right hand side inequality becomes equality.
The left hand side is a strict inequality.
\item[(2)]
For any density matrix $\rho$
\begin{eqnarray}
{\rm Tr}(\rho \theta)\le {\rm Tr}[{\cal C}_Q(\theta)].
\end{eqnarray}
\item[(3)]
\begin{eqnarray}\label{333}
{\rm Tr}(\theta \phi)\le{\rm Tr}[{\cal C}_Q(\theta)]{\rm Tr}[{\cal C}_Q(\phi)].
\end{eqnarray}
\item[(4)]
${\rm Tr}[{\cal C}_Q(\theta)]$ is a convex function:
\begin{eqnarray}\label{33b}
{\rm Tr}[{\cal C}_Q(a\theta +(1-a)\phi)]\le a{\rm Tr}[{\cal C}_Q(\theta)]+(1-a){\rm Tr}[{\cal C}_Q(\phi)];\;\;\;\;0\le a \le 1.
\end{eqnarray}
\end{itemize}
\end{proposition}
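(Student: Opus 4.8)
The plan is to reduce the whole proposition to part (1), which is the only substantive claim, and then read off parts (2)--(4) from part (1) together with textbook inequalities for positive semidefinite matrices. Throughout I use the identity $\sum_{i=1}^{d^2}Q(i\,|\,\theta)={\rm Tr}\,\theta$ from Eq.(\ref{mmm}), and the identity ${\rm Tr}[{\cal C}_Q(\theta)]=d\sum_{i=d^2-d+1}^{d^2}Q(i\,|\,\theta)$ from Eq.(\ref{rfv}). Set $s=\sum_{i=d^2-d+1}^{d^2}Q(i\,|\,\theta)$ for the sum of the $d$ dominant values and $t=\sum_{i=1}^{d^2-d}Q(i\,|\,\theta)\ge 0$ for the sum of the $d^2-d$ inferior values, so that ${\rm Tr}\,\theta=s+t$ and ${\rm Tr}[{\cal C}_Q(\theta)]=ds$.

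\emph{Part (1).} For the upper bound, the smallest dominant value $Q(d^2-d+1\,|\,\theta)$ is at most the mean $s/d$ of the $d$ dominant values, and by the ranking in Eq.(\ref{56}) every inferior value is $\le Q(d^2-d+1\,|\,\theta)\le s/d$; hence $t\le(d^2-d)\,s/d=(d-1)s$ and ${\rm Tr}\,\theta=s+t\le ds={\rm Tr}[{\cal C}_Q(\theta)]$. For $\theta={\bf 1}$ one has $Q(\alpha,\beta\,|\,{\bf 1})=1/d$, so $s=1$ and both sides equal $d$. For the strict lower bound it suffices to show $t>0$, and this is where genericity enters: $Q(\alpha,\beta\,|\,\theta)=\frac1d\|\theta^{1/2}\ket{C;\alpha,\beta}\|^2$ vanishes iff $\ket{C;\alpha,\beta}\in\ker\theta$, so $t=0$ would force at least $d^2-d$ of the coherent states into $\ker\theta$ (and $d^2-d\ge d$ when $d\ge 2$); but any $d$ coherent states of a generic fiducial vector are linearly independent (Eq.(\ref{simple})) and hence span $H(d)$, which would give $\ker\theta=H(d)$, i.e. $\theta=0$. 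Therefore for $\theta\ne 0$ we have $t>0$ and $\frac1d{\rm Tr}[{\cal C}_Q(\theta)]=s<s+t={\rm Tr}\,\theta$.

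\emph{Parts (2) and (3).} For (2), ${\rm Tr}(\rho\theta)\le\lambda_{\max}(\theta)\,{\rm Tr}\,\rho=\lambda_{\max}(\theta)\le{\rm Tr}\,\theta\le{\rm Tr}[{\cal C}_Q(\theta)]$, where the first step uses $\rho\ge 0$ and ${\rm Tr}\,\rho=1$, the second uses that for $\theta\ge 0$ its largest eigenvalue is bounded by the sum of its eigenvalues, and the last is part (1). For (3), diagonalising $\theta=\sum_i\lambda_i\ket{e_i}\bra{e_i}$ gives ${\rm Tr}(\theta\phi)=\sum_i\lambda_i\bra{e_i}\phi\ket{e_i}\le\lambda_{\max}(\phi)\,{\rm Tr}\,\theta\le{\rm Tr}\,\phi\,{\rm Tr}\,\theta$, and since all these quantities are nonnegative, applying part (1) to $\theta$ and to $\phi$ gives ${\rm Tr}(\theta\phi)\le{\rm Tr}[{\cal C}_Q(\theta)]\,{\rm Tr}[{\cal C}_Q(\phi)]$.

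\emph{Part (4).} The map $\theta\mapsto Q(\alpha,\beta\,|\,\theta)=\frac1d{\rm Tr}[\Pi(\alpha,\beta)\theta]$ is linear in $\theta$, and the sum of the $d$ largest of the $d^2$ numbers $Q(\alpha,\beta\,|\,\theta)$ equals $\max_{|J|=d}\sum_{(\alpha,\beta)\in J}Q(\alpha,\beta\,|\,\theta)$, a maximum over the finitely many $d$-element subsets $J\subseteq{\mathbb Z}(d)\times{\mathbb Z}(d)$ of functions linear in $\theta$; a pointwise maximum of affine functions is convex, and multiplying by $d>0$ preserves convexity, which is exactly Eq.(\ref{33b}). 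The one genuine obstacle in the argument is the strictness in part (1): one must exclude the degenerate case where the $Q$-function is supported on at most $d$ coherent states, and this is precisely what the genericity of the fiducial vector prevents; the rest is bookkeeping on the ranked list of $Q$-values together with standard facts about traces of positive operators.
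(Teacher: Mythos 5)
Your proof is correct, and for parts (1)--(3) it follows essentially the same route as the paper: the upper bound in (1) by comparing the $d^2-d$ inferior values of the $Q$-function against the dominant ones, the strict lower bound by invoking genericity of the fiducial vector to show the inferior sum cannot vanish for $\theta\ne 0$ (the paper phrases this through the eigenprojectors of $\theta$ and totality of any $d$ coherent states, you phrase it through $\ker\theta$; these are the same argument), and (2)--(3) by chaining ${\rm Tr}(AB)\le{\rm Tr}(A){\rm Tr}(B)$ for positive semidefinite operators with part (1). Where you genuinely diverge is part (4). The paper first proves subadditivity ${\rm Tr}[{\cal C}_Q(\theta+\phi)]\le{\rm Tr}[{\cal C}_Q(\theta)]+{\rm Tr}[{\cal C}_Q(\phi)]$ by a combinatorial bookkeeping argument: it matches the coherent states realizing the dominant values of $Q(\cdot|\theta+\phi)$ with their (possibly non-dominant) ranks for $\theta$ and for $\phi$ separately, and then swaps any inferior index for an unused dominant one to increase each sum; convexity then follows from positive homogeneity. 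You instead observe that ${\rm Tr}[{\cal C}_Q(\theta)]=d\max_{|J|=d}\sum_{(\alpha,\beta)\in J}Q(\alpha,\beta|\theta)$ is a pointwise maximum of finitely many linear functionals of $\theta$, hence convex. Your route is shorter, avoids the index-replacement bookkeeping entirely, and makes transparent that convexity is a structural consequence of ``sum of the $d$ largest values'' being a support-function-type expression; the paper's route has the minor virtue of explicitly exhibiting which coherent states witness the inequality, which connects to its later discussion of comonotonicity (where the matching is rank-preserving and the inequality becomes equality). Both are complete proofs of the stated claim.
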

\begin{proof}
\mbox{}
\begin{itemize}
\item[(1)]
\begin{eqnarray}
{\rm Tr}[{\cal C}_Q(\theta)]-{\rm Tr}(\theta)=(d-1)\sum _{i=d^2-d+1}^d Q(i|\theta)- \sum _{i=1}^{d^2-d} Q(i|\theta)
\end{eqnarray}
There are $d^2-d$ terms in both of these sums and any term in the first sum is greater or equal to any term in the second sum.
This proves that ${\rm Tr}(\theta)\le {\rm Tr}[{\cal C}_Q(\theta)]$.  Also
\begin{eqnarray}\label{q1n}
d{\rm Tr}(\theta)-{\rm Tr}[{\cal C}_Q(\theta)]=d\sum _{i=1}^{d^2-d} Q(i|\theta)
\end{eqnarray}
For $\theta \ne 0$ this is always a positive number. Indeed, $\theta$ is a positive semidefinite operator and $\theta =\sum \theta _\nu {\mathfrak P}_\nu$ where $\theta _\nu\ge 0$ are its eigenvalues, and 
${\mathfrak P}_\nu$ its eigenprojectors. In this case 
\begin{eqnarray}
Q(i|\theta)=\sum _{\nu}\theta _\nu Q(i|{\mathfrak P}_\nu).
\end{eqnarray}
We have explained earlier that for generic fiducial vectors, 
$d$ or more coherent states form a total set of vectors in $H(d)$. Therefore for every $\nu$, there exists at least one $i$ for which the $Q(i|{\mathfrak P}_\nu)$ is positive, and then
the left hand side of Eq.(\ref{q1n}) is positive.
\item[(2)]
For Hermitian positive semidefinite operators $A,B$, it is known\cite{ineq} that ${\rm Tr}(AB)\le {\rm Tr}(A){\rm Tr}(B)$.
We use this in conjuction with Eq.(\ref{vvv}), and we get:
\begin{eqnarray}
{\rm Tr}(\rho \theta )\le {\rm Tr}(\rho){\rm Tr}(\theta )={\rm Tr}(\theta )\le{\rm Tr}[{\cal C}_Q(\theta)].
\end{eqnarray}
\item[(3)]
We use the formula ${\rm Tr}(AB)\le {\rm Tr}(A){\rm Tr}(B)$ in conjuction with Eq.(\ref{vvv}), and we get:
\begin{eqnarray}
{\rm Tr}(\theta \phi)\le {\rm Tr}(\theta ){\rm Tr}(\phi)\le {\rm Tr}[{\cal C}_Q(\theta)]{\rm Tr}[{\cal C}_Q(\phi)].
\end{eqnarray}
\item[(4)]
We first prove that
\begin{eqnarray}\label{33c}
{\rm Tr}[{\cal C}_Q(\theta +\phi)]\le {\rm Tr}[{\cal C}_Q(\theta)]+{\rm Tr}[{\cal C}_Q(\phi)].
\end{eqnarray}
We start from the relation
\begin{eqnarray}
Q(d^2|\theta +\phi)=Q(i_1|\theta)+Q(j_1|\phi).
\end{eqnarray}
Here the coherent state labelled with $d^2$ in the ordering of $Q(\alpha, \beta |\theta +\phi)$, is labelled with $i_1$ in the ordering of
$Q(\alpha, \beta |\theta)$, and with $j _1$ in the ordering of $Q(\alpha, \beta |\phi)$.
Similarly
\begin{eqnarray}
Q(d^2-1|\theta +\phi)=Q(i_2|\theta)+Q(j_2|\phi),
\end{eqnarray}
etc. Adding these equations we get
\begin{eqnarray}
{\rm Tr}[{\cal C}_Q(\theta +\phi)]=d\sum _{\ell =1}^d[Q(i_\ell|\theta)+Q(j_\ell|\phi)].
\end{eqnarray}
The indices $i_1,...,i_d$ are different from each other, and take values in the set $A\cup B$ where $A=\{1,...,d^2-d\}$ and $B=\{d^2-d+1,...,d^2\}$.
If the  $Q(i_\ell|\theta)$ has index $i_\ell \in A$, we replace it with another $Q(i'_\ell|\theta)$ with index $i'_\ell$ in $B-\{i_1,...,i_d\}$ 
(in a way that at the end all indices are diferent from each other). This increases the sum, and therefore 
\begin{eqnarray}
d\sum _{\ell =1}^dQ(i_\ell|\theta) \le {\rm Tr}[{\cal C}_Q(\theta)].
\end{eqnarray}
We do the same with the $Q(j_\ell|\phi)$ and we prove Eq.(\ref{33c}).
From this follows easily Eq.(\ref{33b}).
\end{itemize}
\end{proof}

We define the `dominance ratio'
\begin{eqnarray}\label{ratio}
r(\theta)=\frac{{\rm Tr}[{\cal C}_Q(\theta)]}{d{\rm Tr}(\theta)}=\frac{\sum \limits _{i=d^2-d+1}^{d^2} Q(i|\theta)}{\sum \limits _{i=1}^{d^2} Q(i|\theta)};\;\;\;\;
\frac{1}{d}\le r(\theta)<1.
\end{eqnarray}
It gives the percentage of the sum of the dominant values of the $Q$-function, with respect to the sum of all values of the $Q$-function. 
For any $\lambda >0$, the operators $\theta$ and $\lambda \theta$ have the  same dominance ratio.
In examples later, we present values of this quantity.
\begin{proposition}
Displacement transformations on $\theta$, imply displacement transformations on ${\cal C}_Q(\theta)$:
\begin{eqnarray}\label{hhjj}
{\cal C}_Q\left [D(\alpha , \beta)\theta D^{\dagger}(\alpha , \beta)\right ]=D(\alpha , \beta){\cal C}_Q(\theta ) D^{\dagger}(\alpha , \beta)
\end{eqnarray}
\end{proposition}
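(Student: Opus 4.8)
The plan is to track how each of the three ingredients of ${\cal C}_Q$ --- the $Q$-function, the $\theta$-dependent ranking, and the cumulative projectors --- transforms under a displacement. Write $\theta'=D(\gamma,\delta)\theta D^{\dagger}(\gamma,\delta)$ (I use $(\gamma,\delta)$ for the displacement to avoid clashing with the $Q$-function labels; the statement is this relation with $(\gamma,\delta)$ renamed $(\alpha,\beta)$). The first step is to note that the $Q$-function is merely translated in the ${\mathbb Z}(d)\times{\mathbb Z}(d)$ phase space: by the cyclicity of the trace together with the covariance of the coherent projectors in Eq.(\ref{1111}), which gives $D^{\dagger}(\gamma,\delta)\Pi(\alpha,\beta)D(\gamma,\delta)=\Pi(\alpha-\gamma,\beta-\delta)$, one obtains
\begin{eqnarray}
Q(\alpha,\beta\,|\,\theta')=\frac{1}{d}{\rm Tr}[\Pi(\alpha,\beta)D(\gamma,\delta)\theta D^{\dagger}(\gamma,\delta)]=\frac{1}{d}{\rm Tr}[\Pi(\alpha-\gamma,\beta-\delta)\theta]=Q(\alpha-\gamma,\beta-\delta\,|\,\theta).
\end{eqnarray}
Hence the multiset of values of the $Q$-function is unchanged, so after ranking $Q(i\,|\,\theta')=Q(i\,|\,\theta)$ for all $i$, and the coherent state ranked $i$-th for $\theta'$ is the displaced copy $\ket{C;\alpha_i+\gamma,\beta_i+\delta}$ of the coherent state $\ket{C;\alpha_i,\beta_i}$ ranked $i$-th for $\theta$.

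The second step is to transport the cumulative projectors. Since $D(\gamma,\delta)$ is unitary, it maps the span of any family of vectors onto the span of the displaced family, and the orthogonal projector onto $D(\gamma,\delta)V$ equals $D(\gamma,\delta)P_{V}D^{\dagger}(\gamma,\delta)$. Applying this to the span of the coherent states ranked $i,\dots,d^2$ and using the first step,
\begin{eqnarray}
\Pi_{\theta'}(i;{i+1};\dots;{d^2})=D(\gamma,\delta)\,\Pi_{\theta}(i;{i+1};\dots;{d^2})\,D^{\dagger}(\gamma,\delta),
\end{eqnarray}
and subtracting consecutive cumulative projectors as in Eq.(\ref{4fk}) gives $\varpi_{\theta'}(i|i+1,\dots,d^2)=D(\gamma,\delta)\,\varpi_{\theta}(i|i+1,\dots,d^2)\,D^{\dagger}(\gamma,\delta)$, which is the multidimensional analogue of Lemma \ref{le1} announced after Eq.(\ref{200}). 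Inserting $Q(i\,|\,\theta')=Q(i\,|\,\theta)$ and this last relation into the definition Eq.(\ref{010}) of ${\cal C}_Q$, and pulling $D(\gamma,\delta)$ and $D^{\dagger}(\gamma,\delta)$ out of the sum, yields ${\cal C}_Q(\theta')=D(\gamma,\delta){\cal C}_Q(\theta)D^{\dagger}(\gamma,\delta)$, which is the claim.

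The only point needing care is the non-uniqueness of the ranking permutation when the dominant values of the $Q$-function have ties. Here I would invoke the discussion around Eq.(\ref{ppp}): different admissible orderings of equal-valued dominant coherent states give the same operator ${\cal C}_Q$, so one fixes any admissible ordering for $\theta$ and takes its translate by $(\gamma,\delta)$ as the ordering for $\theta'$ (which is admissible, since the translation preserves all $Q$-values); the computation above then applies verbatim. This is essentially bookkeeping, and I do not expect a genuine obstacle --- the substance of the proposition is just that every structure entering ${\cal C}_Q$ is displacement-covariant, which reduces to the covariance already established for $\Pi(\alpha,\beta)$ in Eq.(\ref{1111}) and for the $\varpi$-projectors.
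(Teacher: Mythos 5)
Your proposal is correct and follows essentially the same route as the paper's proof: first establish the translation property $Q[\gamma,\delta\,|\,D\theta D^{\dagger}]=Q(\gamma-\alpha,\delta-\beta\,|\,\theta)$ from the covariance of the coherent projectors, then combine the definition of ${\cal C}_Q$ with the displacement covariance of the $\varpi$-projectors (the generalization of Lemma \ref{le1}). Your treatment is somewhat more explicit than the paper's --- in particular the remark on ties in the dominant $Q$-values, which the paper handles only implicitly via the discussion around Eq.(\ref{ppp}) --- but the substance is identical.
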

\begin{proof}
We first use the definition of Eq.(\ref{mmm}) in conjuction with Eq.(\ref{1111}) to prove that 
\begin{eqnarray}\label{n9}
Q\left [\gamma, \delta| D(\alpha , \beta)\theta D^{\dagger}(\alpha , \beta)\right ]=Q(\gamma-\alpha, \delta-\beta|\theta )
\end{eqnarray}
Then we use Eq.(\ref{010A}) in conjuction with Eq.(\ref{1019}) (which are generalized for all $\varpi _{\theta}(i|i+1,...,d^2)$),
and prove Eq.(\ref{hhjj}).
\end{proof}

The Choquet integral ${\cal C}_Q(\theta )$ is based on a ranking formalism and it depends strongly on the 
dominant coherent states that give a high value of the $Q$-function. 
Under displacement transformations, the ${\cal C}_Q(\theta )$ transforms as in Eq.(\ref{hhjj}).
In contrast to this, the Wehrl entropy does not change (Eq.(\ref{bbb})). The Wehrl entropy
shows whether the $Q$-function is uniform or concentrated in a few coherent states, but in the latter case it does not show where it is concentrated.
Furthermore, ${\rm Tr}[{\cal C}_Q(\theta)]$ is a convex function while entropy is a concave function, i.e.,
mixing of two density matrices $\theta$, $\phi$ into
$a\theta +(1-a)\phi$,  decreases the ${\rm Tr}({\cal C}_Q)$ and increases the entropy.
Therefore the Choquet integral contains complementary information to the Wehrl entropy.

\subsection{Robustness of the formalism in the presence of noise:}\label{noise}

The Choquet formalism is robust in the presence of noise.
This is because the formalism is based on the ranking in Eq.(\ref{56}).
Noise affects all values of the $Q$-function in approximately equal way, and it is unlikely that it will change the ranking drastically.

We present a numerical example which shows this.
In the $3$-dimensional space $H(3)$, we consider a Hermitian operator $\theta$, and add noise in its elements as follows:
\begin{eqnarray}\label{97}
\theta=\left(
\begin{array}{ccc}
8+r_1&1+r_2+ir_3&-5+r_4+ir_5\\
1+r_2-ir_3&4+r_6&2+r_7+ir_8\\
-5+r_4-ir_5&2+r_7-ir_8&7+r_9
\end{array}
\right )
\end{eqnarray}
$r_1,...,r_9$ are uniformly distributed random numbers in the region $(-1,1)$.
We have calculated the eigenvalues $e_1,e_2, e_3$ (where $e_1\le e_2 \le e_3$), the corresponding eigenvectors $\ket{v_1},\ket{v_2},\ket{v_3}$, and the function $Q(\alpha, \beta )$,
of this operator.
In table \ref{t1}, we present results for the case without noise (first row), and for five cases with noise.
The three dominant values of $Q(\alpha, \beta)$, the eigenvalues, and the dominance ratio $r(\theta)$  (Eq.(\ref{ratio})) are shown.
For the eigenvectors we show their overlaps $\tau _i=|\bra{u_i}v_i\rangle |^2$ with their counterparts $\ket{u_i}$ in the noiseless case.

It is seen that the dominant coherent states and the corresponding dominant values of $Q(\alpha, \beta|\theta)$ change only slightly.
The lowest eigenvalue  is sensitive to noise.
Overall, the Choquet formalism is robust in the presence of noise.

\section{Comonotonic operators}\label{L}

We generalize the concept of comonotonic functions discussed in section \ref{nnmm}, into operators in ${\cal M}_d$.
Comonotonic operators is one way of making precise the intuitive concept of physically similar operators.
This is analogous to students with comonotonic marks in section \ref{aatt}, which have similar academic strengths and weaknesses.
The Choquet integral of the sum of comonotonic operators, is equal to the sum of the Choquet integrals of the operators.
This is used in the next section (corollary \ref{c2}), to derive bounds for the trace of Choquet integrals, which 
physically are related to mild changes in the physical system.

\begin{definition}
Two operators $\theta ,\phi \in {\cal M}_d$ are comonotonic, if the following statements, which are equivalent to each other, hold:
\begin{itemize}
\item[(1)]
The ranking permutation of Eq.(\ref{56}) is the same for both operators: $\sigma(\alpha, \beta\;|\;\theta)=\sigma(\alpha, \beta\;|\;\phi)$. The
$\theta$, $\phi$ have the same dominant projectors, and the corresponding ${\cal C}_Q(\theta), {\cal C}_Q(\phi)$ have the same eigenprojectors and commute:
\begin{eqnarray}\label{eigen}
\varpi _{\theta}(i|i+1,...,d^2)=\varpi _{\phi}(i|i+1,...,d^2);\;\;\;\;\;\;\;[{\cal C}_Q(\theta),{\cal C}_Q(\phi)]=0.
\end{eqnarray}
\item[(2)]
for the $d$ dominant values of $Q(\alpha, \beta|\;\theta)$ and $Q(\alpha, \beta|\phi)$ 
\begin{eqnarray}\label{66}
[Q(\alpha, \beta|\;\theta)-Q(\gamma, \delta|\;\theta)][Q(\alpha, \beta|\;\phi)- Q(\gamma, \delta|\;\phi)]\ge 0
\end{eqnarray}
\end{itemize}
\end{definition}

It is easily seen that: 
\begin{itemize}
\item
The ${\bf 1}$ is comonotonic to any other operator. 
\item
For $\lambda \ge 0$, the $\theta$ and $\lambda \theta$ are comonotonic.
\item
If $\theta, \phi$ are comonotonic and $\lambda, \mu \ge 0$, then the $\theta, \phi, \lambda \theta+\mu \phi$ are pairwise comonotonic.
\item
If $\theta, \psi$ are comonotonic, and $\phi, \psi$ are comonotonic, then the  $\theta +\phi, \psi$ are comonotonic.
\item
If ${\cal C}_Q(\theta)={\cal C}_Q(\phi)$, then the operators $\theta, \phi$ are comonotonic.
\end{itemize}

\begin{proposition}
If $\theta ,\phi$ are comonotonic operators, then 
\begin{eqnarray}\label{com}
{\cal C}_Q(a\theta +b\phi)=a{\cal C}_Q(\theta)+b{\cal C}_Q(\phi);\;\;\;\;\;\;a,b\ge 0.
\end{eqnarray}
\end{proposition}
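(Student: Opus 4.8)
The plan is to exploit the two structural features that make the Choquet integral of Eq.(\ref{010A}) behave linearly on a comonotonic pair. First, the $Q$-function is linear in its operator argument: from $Q(\alpha,\beta\,|\,\theta)=\frac1d\,{\rm Tr}[\Pi(\alpha,\beta)\theta]$ one gets $Q(\alpha,\beta\,|\,a\theta+b\phi)=aQ(\alpha,\beta\,|\,\theta)+bQ(\alpha,\beta\,|\,\phi)$. Second, the projectors $\varpi_{\theta}(i|i+1,...,d^2)$ of Eq.(\ref{ggg}) depend on $\theta$ \emph{only} through the ranking permutation $\sigma(\alpha,\beta\,|\,\theta)$ of Eq.(\ref{a56}) — i.e.\ through which coherent state carries which rank — and not through the numerical values of the $Q$-function. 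Hence the whole proof reduces to showing that $a\theta+b\phi$ (which is in ${\cal M}_d$ because $a,b\ge0$ and $\theta,\phi$ are positive semidefinite) admits the common ranking permutation of $\theta$ and $\phi$.

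First I would dispose of the cases $a=0$ or $b=0$ using the already established homogeneity ${\cal C}_Q(a\theta)=a{\cal C}_Q(\theta)$ for $a\ge0$, and then assume $a,b>0$. By statement (1) of the definition of comonotonic operators there is a single permutation $\sigma$ with $\sigma=\sigma(\alpha,\beta\,|\,\theta)=\sigma(\alpha,\beta\,|\,\phi)$, so that $Q[\sigma(1)\,|\,\theta]\le...\le Q[\sigma(d^2)\,|\,\theta]$ and simultaneously $Q[\sigma(1)\,|\,\phi]\le...\le Q[\sigma(d^2)\,|\,\phi]$. Adding these two chains with the positive weights $a,b$ and using linearity of $Q$ shows $\sigma$ is also a valid ranking permutation for $a\theta+b\phi$. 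Therefore the Gram--Schmidt cumulative projectors $\Pi_{a\theta+b\phi}(i;i+1;...;d^2)$ built from the coherent states ranked $i,i+1,...,d^2$ by $\sigma$ coincide term by term with $\Pi_{\theta}(i;...;d^2)=\Pi_{\phi}(i;...;d^2)$, and so do their differences:
\begin{eqnarray}
\varpi_{a\theta+b\phi}(i|i+1,...,d^2)=\varpi_{\theta}(i|i+1,...,d^2)=\varpi_{\phi}(i|i+1,...,d^2).\nonumber
\end{eqnarray}

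With the projectors identified, Eq.(\ref{010A}) and linearity of $Q$ give immediately
\begin{eqnarray}
{\cal C}_Q(a\theta+b\phi)&=&\sum_{i=d^2-d+1}^{d^2} d\,[aQ(i\,|\,\theta)+bQ(i\,|\,\phi)]\,\varpi_{\theta}(i|i+1,...,d^2)\nonumber\\
&=&a{\cal C}_Q(\theta)+b{\cal C}_Q(\phi).\nonumber
\end{eqnarray}
The only point requiring care is the treatment of ties among the $Q$-values: when some of the $Q[\sigma(i)\,|\,\theta]$ (or of the $Q[\sigma(i)\,|\,\phi]$) are equal, several permutations rank $\theta$ (resp.\ $\phi$), and one must invoke comonotonicity to choose one common to both and then check — exactly as in the degeneracy discussion around Eq.(\ref{ppp}) — that ${\cal C}_Q$ is independent of which allowed common permutation is used, so the displayed identity is unambiguous. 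Apart from this bookkeeping, the argument is a direct consequence of the linearity of the $Q$-function and the permutation-only dependence of the $\varpi$-projectors, and I do not anticipate any genuine obstacle.
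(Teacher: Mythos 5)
Your proposal is correct and follows essentially the same route as the paper: it rests on the fact that comonotonic operators share the ranking permutation and hence the projectors $\varpi(i|i+1,...,d^2)$, after which linearity of the $Q$-function gives the result. You simply spell out the intermediate step (that $a\theta+b\phi$ inherits the common ranking, so $\varpi_{a\theta+b\phi}=\varpi_{\theta}=\varpi_{\phi}$) and the tie-handling, which the paper leaves as ``follows easily.''
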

\begin{proof}
For comonotonic operators $\theta, \phi$, the eigenprojectors $\{\varpi _{\theta}(i|i+1,...,d^2)\}$ of ${\cal C}_Q(\theta)$ are the same with eigenprojectors 
$\{\varpi _{\phi}(i|i+1,...,d^2)\}$ of ${\cal C}_Q(\phi)$ (Eq.(\ref{eigen})). Then Eq.(\ref{com}) follows easily.
\end{proof}
Additivity holds only for comonotonic operators, and
we refer to this as the weak additivity property of Choquet integrals.

\begin{proposition}
If $\theta ,\phi$ are comonotonic operators, then the 
$D(\alpha , \beta)\theta D^{\dagger}(\alpha , \beta)$ and $D(\alpha , \beta)\phi D^{\dagger}(\alpha , \beta)$
are also comonotonic operators.
\end{proposition}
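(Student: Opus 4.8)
The plan is to exploit the fact, already recorded in Eq.(\ref{n9}), that a displacement merely translates the arguments of the $Q$-function: $Q[\gamma,\delta\,|\,D(\alpha,\beta)\theta D^{\dagger}(\alpha,\beta)]=Q(\gamma-\alpha,\delta-\beta\,|\,\theta)$, and likewise for $\phi$. Write $\theta'=D(\alpha,\beta)\theta D^{\dagger}(\alpha,\beta)$ and $\phi'=D(\alpha,\beta)\phi D^{\dagger}(\alpha,\beta)$. Since $(\gamma,\delta)\mapsto(\gamma-\alpha,\delta-\beta)$ is a bijection of ${\mathbb Z}(d)\times{\mathbb Z}(d)$, the multiset of values $\{Q(\gamma,\delta\,|\,\theta')\}$ coincides with $\{Q(\gamma,\delta\,|\,\theta)\}$, only carried by translated indices, and similarly for $\phi$. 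Consequently a ranking permutation for $\theta'$ is obtained from one for $\theta$ by composing with this common translation; the same translation turns a ranking permutation for $\phi$ into one for $\phi'$. In particular the set of $d$ dominant indices for $\theta'$ (and for $\phi'$) is the translate of the corresponding dominant set for $\theta$ (resp. $\phi$).

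I would then verify characterization (2) of comonotonicity for $\theta',\phi'$. Take any two indices $(\gamma_1,\delta_1),(\gamma_2,\delta_2)$ among the $d$ dominant ones for $\theta'$; by the translation their shifts are dominant for $\theta$, and $[Q(\gamma_1,\delta_1|\theta')-Q(\gamma_2,\delta_2|\theta')]\,[Q(\gamma_1,\delta_1|\phi')-Q(\gamma_2,\delta_2|\phi')]$ equals $[Q(\gamma_1-\alpha,\delta_1-\beta|\theta)-Q(\gamma_2-\alpha,\delta_2-\beta|\theta)]\,[Q(\gamma_1-\alpha,\delta_1-\beta|\phi)-Q(\gamma_2-\alpha,\delta_2-\beta|\phi)]$, which is $\ge 0$ by Eq.(\ref{66}) applied to $\theta,\phi$. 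Hence $\theta',\phi'$ satisfy Eq.(\ref{66}) and are comonotonic.

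An equivalent and slightly more structural route is through characterization (1), using the covariance of the derivatives of the cumulative projectors (Lemma \ref{le1}, extended to all $\varpi_{\theta}(i|i+1,\dots,d^2)$ exactly as already used in the proof of Eq.(\ref{hhjj})). Because the coherent state ranked $i$-th for $\theta'$ is the $D(\alpha,\beta)$-image of the one ranked $i$-th for $\theta$, Eq.(\ref{1019}) gives $\varpi_{\theta'}(i|i+1,\dots,d^2)=D(\alpha,\beta)\,\varpi_{\theta}(i|i+1,\dots,d^2)\,D^{\dagger}(\alpha,\beta)$, and the same for $\phi$. Comonotonicity of $\theta,\phi$ is the statement $\varpi_{\theta}(i|\cdots)=\varpi_{\phi}(i|\cdots)$, so conjugating by $D(\alpha,\beta)$ yields $\varpi_{\theta'}(i|\cdots)=\varpi_{\phi'}(i|\cdots)$ and therefore $[{\cal C}_Q(\theta'),{\cal C}_Q(\phi')]=0$, i.e. part (1) of the definition.

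The only delicate point — and the main, though minor, obstacle — is the bookkeeping when some dominant values of the $Q$-function coincide and the ranking permutation is not unique. As noted after Eq.(\ref{ppp}), any admissible ranking then gives the same cumulative projectors $\Pi_{\theta}(i;i+1;\dots;d^2)$, and the translation puts admissible rankings for $\theta$ in bijection with admissible rankings for $\theta'$ (and those for $\phi$ with those for $\phi'$) simultaneously, so one can fix a common admissible choice for $\theta,\phi$ and transport it; the argument above is then unaffected.
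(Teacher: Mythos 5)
Your proof is correct and its main argument --- verifying characterization (2) of comonotonicity via the covariance $Q[\gamma,\delta\,|\,D(\alpha,\beta)\theta D^{\dagger}(\alpha,\beta)]=Q(\gamma-\alpha,\delta-\beta\,|\,\theta)$ and a substitution of shifted indices --- is exactly the paper's proof. The alternative route through the projectors $\varpi_{\theta}(i|i+1,\dots,d^2)$ and the bookkeeping for tied dominant values are correct additional detail that the paper does not spell out, but they do not change the essence of the argument.
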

\begin{proof}
Since $\theta, \phi$ are $Q$-comonotonic
\begin{eqnarray}
[Q(\gamma,\delta\;|\;\theta)-Q(\epsilon, \zeta\;|\;\theta)][Q(\gamma,\delta\;|\;\phi)- Q(\epsilon, \zeta\;|\;\phi)]\ge 0.
\end{eqnarray}
We insert $\gamma= \gamma '-\alpha$, $\delta=\delta '-\beta$, $\epsilon=\epsilon '-\alpha$, $\zeta=\zeta '-\beta$ and we get
\begin{eqnarray}
[Q(\gamma '-\alpha,\delta '-\beta\;|\;\theta)-Q(\epsilon '-\alpha, \zeta '-\beta \;|\;\theta)]
[Q(\gamma '-\alpha,\delta '-\beta\;|\;\phi)- Q(\epsilon '-\alpha, \zeta '-\beta\;|\;\phi)]\ge 0.
\end{eqnarray}
Taking into account Eq.(\ref{n9}) we rewrite this as
\begin{eqnarray}
&&[Q(\gamma',\delta'\;|\;D(\alpha , \beta)\theta D^{\dagger}(\alpha , \beta))-Q(\epsilon', \zeta'\;|\;D(\alpha , \beta)\theta D^{\dagger}(\alpha , \beta))]
\nonumber\\&&\times
[Q(\gamma',\delta'\;|\;D(\alpha , \beta)\phi D^{\dagger}(\alpha , \beta))- Q(\epsilon', \zeta'\;|\;D(\alpha , \beta)\phi D^{\dagger}(\alpha , \beta))]\ge 0.
\end{eqnarray}
and this proves that the 
$D(\alpha , \beta)\theta D^{\dagger}(\alpha , \beta)$ and $D(\alpha , \beta)\phi D^{\dagger}(\alpha , \beta)$
are comonotonic operators.
\end{proof}

\subsection{Equivalence classes of comonotonic operators in ${\cal M}_d'$}

Comonotonicity is not transitive in the set ${\cal M}_d$. 
For example, ${\bf 1}$ is comonotonic to every operator and yet there are operators which are not comonotonic.
We define a subset of ${\cal M}_d$ where comonotonicity is transitive.
\begin{definition}
\mbox{}
\begin{itemize}
\item[(1)]
${\cal M}_d'$ is a subset of ${\cal M}_d$ which contains operators $\theta$ for which the $d$ dominant values of $Q(i\;|\;\theta)$ are different from each other.
${\cal N}_d'\subset {\cal M}_d'$ is the set of such operators with trace equal to one.
\item[(2)]
Through the Choquet integral map, ${\cal M}_d'$ is mapped into ${\cal C}_d' \subset {\cal C}_d$ which contains Choquet integrals with eigenvalues $Q(i\;|\;\theta)$ which are different from each other.
 
\end{itemize}
\end{definition}
Comonotonicity is transitive in ${\cal M}_d'$. In this case  we have a strict inequality in Eq.(\ref{66}).
This is analogous to commutativity which is not transitive in general, but it is transitive if we restrict ourselves to matrices with eigenvalues which are different from each other. 

In ${\cal M}_d'$ (and ${\cal N}'_d$) comonotonicity is an equivalence relation, which partitions ${\cal M}_d'$ (and ${\cal N}'_d$) into equivalence classes, which we denote as 
${\cal M}_d'(\nu)$ (and ${\cal N}_d'(\nu)$). We denote comonotonic operators in these classes with $\theta _1 \sim \theta _2$.
It is easily seen that if $\theta _1 \sim \theta _2$ then $a\theta _1+b\theta _2 \sim \theta _1 \sim \theta _2$, where $a,b\ge 0$.

Through the Choquet integral map, ${\cal C}_d'$ is also partitioned into equivalence classes, which we denote as 
${\cal C}_d'(\nu)$.
There is an ordered set of $d$ coherent states associated with each equivalence class ${\cal M}_d'(\nu)$.
The number of such classes is
\begin{eqnarray}\label{nm1}
n_d=d^2(d^2-1)...(d^2-d+1)=\frac{d^2!}{(d^2-d)!}
\end{eqnarray}
We prove this by taking one coherent state from the set of $d^2$ coherent states, and then another coherent state from the remaining set of $d^2-1$ coherent states (which we use together with the first coherent state for the two-dimensional cumulative projectors in section \ref{eac}), etc.

So there is a finite number of sets of projectors in the formalism:
\begin{eqnarray}\label{nm2}
{\cal S}_{\nu}=\{\varpi _{\nu}(i|i+1,...,d^2)]\;|i=d^2-d+1,...,d^2\};\;\;\;\;\;\nu=1,...,n_d.
\end{eqnarray}
All Choquet integrals in the same equivalence class ${\cal C}_d'(\nu)$ commute with each other, and have the same eigenprojectors:
Through the Choquet integral map, the property comonotonicity in ${\cal M}_d'$, becomes commutativity in ${\cal C}_d'$. 

In the case that two of the dominant values of the $Q$-function are equal to each other (i.e., for operators $\theta \in {\cal M}_d-{\cal M}_d'$),
the corresponding sums of projectors 
\begin{eqnarray}
\varpi _{\nu}(i|i+1,...,d^2)+\varpi _{\nu}(i+1|i+2,...,d^2)=\Pi _{\theta}(i;{i+1};...;{d^2})-\Pi _{\theta}({i+2};...; {d^2})
\end{eqnarray}
enter into ${\cal C}_Q(\theta)$ as explained in Eq.(\ref{ppp}).

\subsection{Comonotonicity intervals of operators $\theta (\lambda)$ and crossings of the $Q$-function}\label{NN}

In many cases the operator $\theta $ is a function of a real parameter $\lambda$.
Examples are:
\begin{itemize}
\item
A Hamiltonian $\theta (\lambda)=\theta _1+\lambda \theta _2$, where $\theta _1$ is the free part, $\theta _2$ the interaction part, and $\lambda$ the coupling constant
(an example is given in section \ref{example} below).
\item
$\theta (\lambda)=\ket{g(\lambda)}\bra{g(\lambda)}$ where $\ket{g(\lambda)}$ is the ground state of a system described by a Hamiltonian
${\mathfrak H}(\lambda)$ with coupling constant $\lambda$ (two examples are given in section \ref{ex}, with and without degeneracies in the eigenvalues).
\item
The $\theta (\lambda)=\exp (-\lambda {\mathfrak H})$ where ${\mathfrak H}$ is a Hamiltonian and $\lambda$ the inverse temperature. The trace of this operator is the partition function, and bounds for it are given
in section \ref{ex} below.
\end{itemize}
If $\theta (\lambda)$ is a continuous function of $\lambda$, the $Q[\alpha, \beta|{\theta}(\lambda)]$ are also continuous functions of $\lambda$.
Consequently, there are intervals of the parameter $\lambda$, where the ranking of the 
$d$ highest values of $Q[\alpha, \beta |\theta (\lambda)]$ remains unchanged. 
We call them comonotonicity intervals. By definition, if $\lambda _1,\lambda _2$ belong to the same comonotonicity interval, 
then the $\theta (\lambda _1)$, $\theta (\lambda _2)$ are comonotonic.
But  it is not necessary that all operators in a comonotonicity interval belong in the same equivalence class of ${\cal M}_d'$
(there are pairs of comonotonic operators in ${\cal M}_d-{\cal M}_d'$, but transitivity might not hold).

There might be values $\lambda _i$ where we have crossings of the $d$ highest values of the $Q$-function:
\begin{eqnarray}\label{cro}
Q[\alpha _1, \beta _1|\theta (\lambda _i)]=Q[\alpha _2, \beta _2|\theta (\lambda _i)]
\end{eqnarray}
We call them crossings of the $Q$-function. At these points a change of the ranking occurs,
and the matrix ${\cal C}_Q[\theta (\lambda)]$ has a discontinuity.
The ${\rm Tr}\{{\cal C}_Q[\theta (\lambda)]\}$ is continuous at these points (as sum of continuous functions),
but its derivative with respect to $\lambda$, might have discontinuities.

If $\{\lambda _i\}$ is the set of the crossings of the $Q$-function, 
the $\lambda$-axis is partitioned to many intervals $(\lambda _i,\lambda _{i+1})$, and within each interval
all the $\theta(\lambda)$ are pairwise comonotonic operators.

\begin{remark}
A phenomenon analogous to `avoided crossing' of the energy levels, might occur. 
A small external perturbation can invalidate the equality in Eq.(\ref{cro}).
For example, a small amount of noise added into $\theta (\lambda)$ will make it $\theta (\lambda)+\Delta \theta$
where $\Delta \theta$ is an infinitesimal matrix (which we assume to be Hermitian).
Then 
\begin{eqnarray}
Q[\alpha _i, \beta _i|\theta (\lambda)+\Delta \theta]=Q[\alpha _i, \beta _i|\theta (\lambda)]+Q[\alpha _i, \beta _i|\Delta \theta]
\end{eqnarray}
and in general $Q[\alpha _1, \beta _1|\Delta \theta]\ne Q[\alpha _2, \beta _2|\Delta \theta]$.
In this case the curve $Q[\alpha _1, \beta _1|\theta (\lambda)+\Delta \theta]$ on the left of the crossing point $\lambda _i$ 
will join the curve $Q[\alpha _2, \beta _2|\theta (\lambda)+\Delta \theta]$ on the right of the crossing point
(and the curve $Q[\alpha _2, \beta _2|\theta (\lambda)+\Delta \theta]$ on the left of the crossing point
will join the curve $Q[\alpha _1, \beta _1|\theta (\lambda)+\Delta \theta]$ on the right of the crossing point).
Therefore $\lambda _i$ will not be a crossing point of the $Q$-function, and the left and right comonotonicity intervals, will join
to become one comonotonicity interval.
We call this `avoided crossings of the $Q$-function'.
It is a phenomenon which should be studied in its own right.
In the examples below, we assume the absence of such perturbations, and the absence of the `avoided crossings of the $Q$-function'. 
\end{remark}

\subsection{Example}\label{example}

We consider the operator (in the position basis):
\begin{eqnarray}\label{1000}
\theta (\lambda)=\theta _1+\lambda \theta _2;\;\;\;\;\;
\theta_1=\left(
\begin{array}{ccc}
6&0&i\\
0&12&0\\
-i&0&15
\end{array}
\right );\;\;\;\;\;\;
\theta_2=\left(
\begin{array}{ccc}
7&3&6\\
3&7&0\\
6&0&7
\end{array}
\right ).
\end{eqnarray}
The values of $Q(\alpha, \beta\;|\theta)$ are:
\begin{eqnarray}
&&Q(0,0|\theta )=4.5 +3.476\lambda ;\;\;\;\;\;Q(0,1|\theta )=3+4.476\lambda;\;\;\;\;\;\;Q(0,2|\theta )=3.5+3.762 \lambda\nonumber\\
&&Q(1,0|\theta )=4.623+1.761\lambda;\;\;\;\;\;Q(1,1|\theta )=3.247+1.261\lambda;\;\;\;\;\;\;Q(1,2|\theta )=3.582+1.619\lambda \nonumber\\
&&Q(2,0|\theta )=4.376+1.762\lambda;\;\;\;\;\;Q(2,1|\theta )=2.752+1.261\lambda;\;\;\;\;\;\;Q(2,2|\theta )=3.417+1.619\lambda
\end{eqnarray}
The three dominant values, which in our notation are $Q(9|\theta), Q(8|\theta), Q(7|\theta)$, depend on the value of $\lambda$. We consider the interval $0\le \lambda \le 0.7$, and ordering of the
$Q(\alpha, \beta\;|\theta)$ shows that it consists of five comonotonicity intervals. 
In table \ref{t2} we show these comonotonicity intervals, the corresponding three dominant values of $Q(\alpha, \beta)$, and the 
dominance ratio $r[\theta (\lambda)]$.
The $r[\theta (\lambda)]$ is a continuous function of $\lambda$, but its derivative with respect to $\lambda$ has discontinuiuties at $\lambda=0.44$ and $\lambda=0.6$.

We present the Choquet integral, for the first two intervals.
In the comonotonicity interval $(0,0.06)$:
\begin{eqnarray}\label{b23}
&&\varpi _{\theta}(9)=\Pi(1,0);\;\;\;\;
\varpi _{\theta}(8)=\Pi(0,0; 1,0)-\Pi(1,0);\;\;\;\;
\varpi _{\theta}(7)={\bf 1}-\Pi(0,0;1,0)
\end{eqnarray}
Therefore 
\begin{eqnarray}
&&{\cal C}_Q[\theta(\lambda)]=3[Q(7|\theta)\varpi _{\theta}(7)+Q(8|\theta)\varpi _{\theta}(8)+Q(9|\theta)\varpi _{\theta}(9)]=A_1+\lambda B_1\nonumber\\
&&A_1=\left (
\begin{array}{ccc}
13.25&0.04-0.16i&0.01+0.14i\\
0.04+0.16i&13.53&-0.05-0.17i\\
0.01-0.14i&-0.05+0.17i&13.70\\
\end{array}
\right )\nonumber\\
&&B_1=\left (
\begin{array}{ccc}
6.28&1.31-i&1.18+0.13i\\
1.31+i&8.01&1.41+1.36i\\
1.18-0.13i&1.41-1.36i&6.7\\
\end{array}
\right );\;\;\;\;\;[A_1,B_1]=0..
\end{eqnarray}

In the comonotonicity interval $(0.06, 0.44)$, 
\begin{eqnarray}\label{b23}
\varpi _{\theta}(9)=\Pi(0,0);\;\;\;\;
\varpi _{\theta}(8)=\Pi(1,0; 0,0)-\Pi(0,0);\;\;\;\;\;
\varpi _{\theta}(7)={\bf 1}-\Pi(1,0;0,0)
\end{eqnarray}
Therefore 
\begin{eqnarray}
&&{\cal C}_Q[\theta(\lambda)]=3[Q(7|\theta)\varpi _{\theta}(7)+Q(8|\theta)\varpi _{\theta}(8)+Q(9|\theta)\varpi _{\theta}(9)]=A_2+\lambda B_2\nonumber\\
&&A_2=\left (
\begin{array}{ccc}
13.29&0.08-0.23i&0.01+0.15i\\
0.08+0.23i&13.62&-0.11-0.07i\\
0.01-0.15i&-0.11+0.07i&13.57
\end{array}
\right )\nonumber\\
&&B_2=\left (
\begin{array}{ccc}
5.65&0.73&1.10\\
0.73&6.75&2.20\\
1.10&2.20&8.59
\end{array}
\right );\;\;\;\;\;[A_2,B_2]=0.
\end{eqnarray}

Within each of the comonotonic intervals the $\theta(\lambda)$ are comonotonic operators, and
the ${\cal C}_Q[\theta(\lambda)]$ commute with each other and have the same eigenprojectors.
At the crossing points of the $Q$-function, the ${\cal C}_Q[\theta (\lambda)]$ has a discontinuity.

\section{Bounds for comonotonic operators}

We have seen earlier (proposition \ref{1234}) that the
trace of the Choquet integral is a bound for physical quantities like
${\rm Tr}(\theta)$, ${\rm Tr}(\rho \theta)$, ${\rm Tr}(\phi \theta)$, etc.
This is our physical motivation for using it in this section, to define an order among the Hermitian operators.

An order is useful if it has certain properties, and a natural property is that 
addition should preserve the order
(the analogue of $a\ge b$ implies that $a+c\ge b+c$ in real numbers).
We show that this property is valid in the case of comonotonic operators.
This already shows that in some sense comonotonic operators are physically similar.
More importantly, a whole family of comonotonic operators are bounded, with respect to this order (corollary \ref{c2} below).
This means that the trace of the Choquet integral is bounded, and therefore the other physical quantities to which this is a bound, are also bounded.

We introduce the `greater trace of the Choquet integral' preorder, as follows:
\begin{definition}
$\theta _1 \succ \theta _2$ if ${\rm Tr}[{\cal C}_Q(\theta _1)]\ge {\rm Tr}[{\cal C}_Q(\theta _2)]$.
\end{definition}
$ \succ$ is transitive, but the antisymmetry property does not hold ($\theta _1 \succ \theta _2$ and $\theta _1 \prec \theta _2$ 
implies that ${\rm Tr}[{\cal C}_Q(\theta _1)]={\rm Tr}[{\cal C}_Q(\theta _2)]$, but it does not follow that $\theta _1 = \theta _2$ ).
Therefore $\succ$ is a preorder, rather than a partial order.
It is a total preorder because for any $\theta _1, \theta _2$, either $\theta _1 \prec  \theta _2$ or $\theta _1\succ \theta _2$.

The following proposition shows that for comonotonic operators, addition preserves the $\succ$ preorder:
\begin{proposition}\label{c1}
\mbox{}
\begin{itemize}
\item[(1)]
If $\theta _1, \theta _3$ are comonotonic, and $\theta _2, \theta _3$ are comonotonic, then
\begin{eqnarray}\label{nc1}
\theta _1\succ \theta _2\;\;\rightarrow\;\; \theta _1+\theta _3\succ  \theta _2 +\theta _3.
\end{eqnarray}
\item[(2)]
For comonotonic $\theta _1, \theta _2 $
\begin{eqnarray}\label{nc2}
\theta _1\succ \theta _2\;\;\rightarrow\;\; \theta _1\succ  a\theta _1 +(1-a)\theta _2\succ \theta_2
;\;\;\;\;\;0\le a \le 1.
\end{eqnarray}
\end{itemize}
\end{proposition}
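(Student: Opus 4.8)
The plan is to reduce both parts to the weak additivity property of the Choquet integral, Eq.(\ref{com}), namely that ${\cal C}_Q(a\theta+b\phi)=a{\cal C}_Q(\theta)+b{\cal C}_Q(\phi)$ whenever $\theta,\phi$ are comonotonic and $a,b\ge 0$, combined with the elementary facts listed just before the statement (that $\lambda\theta$ is comonotonic with $\theta$ for $\lambda\ge 0$, and that if $\theta_1\sim\theta_2$ then $a\theta_1+b\theta_2$ is comonotonic with both $\theta_1$ and $\theta_2$ for $a,b\ge 0$). Since the trace is linear, each operator identity coming from Eq.(\ref{com}) becomes a scalar identity for ${\rm Tr}[{\cal C}_Q(\cdot)]$, and then the claims are consequences of the order properties of real numbers.

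For part (1), I would apply Eq.(\ref{com}) to the comonotonic pair $\theta_1,\theta_3$ to get ${\cal C}_Q(\theta_1+\theta_3)={\cal C}_Q(\theta_1)+{\cal C}_Q(\theta_3)$, and likewise to the comonotonic pair $\theta_2,\theta_3$ to get ${\cal C}_Q(\theta_2+\theta_3)={\cal C}_Q(\theta_2)+{\cal C}_Q(\theta_3)$. Taking traces and subtracting, the term ${\rm Tr}[{\cal C}_Q(\theta_3)]$ cancels and leaves
\[
{\rm Tr}[{\cal C}_Q(\theta_1+\theta_3)]-{\rm Tr}[{\cal C}_Q(\theta_2+\theta_3)]={\rm Tr}[{\cal C}_Q(\theta_1)]-{\rm Tr}[{\cal C}_Q(\theta_2)]\ge 0,
\]
where the inequality is exactly the hypothesis $\theta_1\succ\theta_2$. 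This is precisely $\theta_1+\theta_3\succ\theta_2+\theta_3$. I would emphasize that $\theta_1$ and $\theta_2$ need not be comonotonic with each other; only the two pairings with $\theta_3$ are invoked.

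For part (2), I would set $\psi_a=a\theta_1+(1-a)\theta_2$ with $0\le a\le 1$. Since $\theta_1\sim\theta_2$, the closure fact quoted above makes $\psi_a$ comonotonic with both $\theta_1$ and $\theta_2$, so Eq.(\ref{com}) applied with the nonnegative coefficients $a$ and $1-a$ gives ${\cal C}_Q(\psi_a)=a{\cal C}_Q(\theta_1)+(1-a){\cal C}_Q(\theta_2)$, hence ${\rm Tr}[{\cal C}_Q(\psi_a)]=a\,{\rm Tr}[{\cal C}_Q(\theta_1)]+(1-a)\,{\rm Tr}[{\cal C}_Q(\theta_2)]$ is a convex combination of the two scalars ${\rm Tr}[{\cal C}_Q(\theta_1)]$ and ${\rm Tr}[{\cal C}_Q(\theta_2)]$. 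Because ${\rm Tr}[{\cal C}_Q(\theta_1)]\ge{\rm Tr}[{\cal C}_Q(\theta_2)]$ by hypothesis, this combination lies between them, i.e. ${\rm Tr}[{\cal C}_Q(\theta_1)]\ge{\rm Tr}[{\cal C}_Q(\psi_a)]\ge{\rm Tr}[{\cal C}_Q(\theta_2)]$, which is the chain $\theta_1\succ\psi_a\succ\theta_2$.

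There is no analytic obstacle: once weak additivity is available, both statements are immediate from linearity of the trace. The only point requiring care is bookkeeping with the comonotonicity hypotheses — in (1) checking that Eq.(\ref{com}) is legitimately applied to each of $(\theta_1,\theta_3)$ and $(\theta_2,\theta_3)$ separately rather than demanding a common ranking for all three, and in (2) first establishing, via the closure property of comonotonicity under nonnegative combinations, that $\psi_a$ is comonotonic with $\theta_1$ and with $\theta_2$ so that Eq.(\ref{com}) may be used.
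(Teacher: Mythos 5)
Your proposal is correct and follows essentially the same route as the paper: both parts rest on the weak additivity of the Choquet integral for comonotonic operators (Eq.(\ref{com})) together with linearity of the trace, and your part (1) is the paper's argument verbatim in a slightly different order. The only cosmetic difference is in part (2), where you compute ${\rm Tr}[{\cal C}_Q(a\theta_1+(1-a)\theta_2)]$ directly as a convex combination of the two traces, whereas the paper reduces part (2) to part (1) by adding $a\theta_1$ to both sides of $(1-a)\theta_1\succ(1-a)\theta_2$; note also that your invocation of the closure property to establish that $\psi_a$ is comonotonic with $\theta_1$ and $\theta_2$ is not actually needed, since Eq.(\ref{com}) applies to the pair $(\theta_1,\theta_2)$ directly from the hypothesis $\theta_1\sim\theta_2$.
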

\begin{proof}
\mbox{}
\begin{itemize}
\item[(1)]
We have
\begin{eqnarray}
\theta _1\succ \theta _2\;\;\rightarrow\;\;{\rm Tr}[{\cal C}_Q(\theta _1)]\ge {\rm Tr}[{\cal C}_Q(\theta _2)]
\rightarrow\;\;{\rm Tr}[{\cal C}_Q(\theta _1)+{\cal C}_Q(\theta _3)]\ge {\rm Tr}[{\cal C}_Q(\theta _2)+{\cal C}_Q(\theta _3)]
\end{eqnarray}
Using the additivity of the Choquet integral for comonotonic operators, we rewrite this as
\begin{eqnarray}
{\rm Tr}[{\cal C}_Q(\theta _1+\theta _3)]\ge {\rm Tr}[{\cal C}_Q(\theta _2+\theta _3)],
\end{eqnarray}
and this proves the proposition.
\item[(2)]
$\theta _1\succ \theta _2$, implies that $(1-a)\theta _1\succ (1-a)\theta _2$.
We add $a\theta _1$ on both sides, and using Eq.(\ref{nc1}), and we get $\theta _1\succ  a\theta _1 +(1-a)\theta _2$.
In a similar way we prove that $a\theta _1 +(1-a)\theta _2\succ \theta_2$.
\end{itemize}
\end{proof}
\begin{cor}\label{c2}
Let $\theta (\lambda)$ be an operator which is a linear function of $\lambda$, within a comonotonicity interval $I$.
For $\lambda _1 < \lambda _2$ where $\lambda _1, \lambda _2 \in I$, we assume that $\theta (\lambda _1) \succ \theta (\lambda _2)$.
Then at any point $\lambda \in [\lambda _1, \lambda _2]$
\begin{eqnarray}\label{1qa}
\theta (\lambda _1)\succ  \theta (\lambda)\succ \theta (\lambda_2)
\end{eqnarray}
\end{cor}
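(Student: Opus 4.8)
The plan is to deduce the statement directly from part~(2) of Proposition~\ref{c1}, using the affine dependence of $\theta(\lambda)$ on $\lambda$ to realize $\theta(\lambda)$ as a convex combination of $\theta(\lambda_1)$ and $\theta(\lambda_2)$. First I would fix $\lambda\in[\lambda_1,\lambda_2]$ and set $a=(\lambda_2-\lambda)/(\lambda_2-\lambda_1)\in[0,1]$, so that $\lambda=a\lambda_1+(1-a)\lambda_2$. Since $\theta(\lambda)$ is an affine-linear function of $\lambda$ on $I$ (say $\theta(\lambda)=\theta_1+\lambda\theta_2$), a one-line computation gives $\theta(\lambda)=a\,\theta(\lambda_1)+(1-a)\,\theta(\lambda_2)$; because ${\cal M}_d$ is convex this operator is again positive semidefinite, so ${\cal C}_Q[\theta(\lambda)]$ is defined and the preorder $\succ$ makes sense for it.

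Next I would note that $\lambda_1$ and $\lambda_2$ lie in the same comonotonicity interval $I$, so by the definition recalled in Section~\ref{NN} the operators $\theta(\lambda_1)$ and $\theta(\lambda_2)$ are comonotonic. Proposition~\ref{c1}(2) then applies verbatim to the comonotonic pair with $\theta(\lambda_1)\succ\theta(\lambda_2)$, yielding
\[
\theta(\lambda_1)\;\succ\;a\,\theta(\lambda_1)+(1-a)\,\theta(\lambda_2)\;\succ\;\theta(\lambda_2).
\]
Substituting $a\,\theta(\lambda_1)+(1-a)\,\theta(\lambda_2)=\theta(\lambda)$ gives Eq.~(\ref{1qa}), which is exactly the claim.

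The argument is essentially bookkeeping, and there is no serious obstacle: the real content sits in Proposition~\ref{c1}(2), which rests on the weak additivity ${\cal C}_Q(a\theta_1+b\theta_2)=a{\cal C}_Q(\theta_1)+b{\cal C}_Q(\theta_2)$ for comonotonic operators (Eq.~(\ref{com})) together with the transitivity of $\succ$. The only points that need a word of care are that the affine parametrization is compatible with taking convex combinations (so that the chord in $\lambda$-space maps to the segment joining $\theta(\lambda_1)$ and $\theta(\lambda_2)$ in operator space), and that lying in a single comonotonicity interval is precisely the hypothesis needed to invoke that weak additivity; notably no transitivity of comonotonicity itself is required, since only the already-comonotonic pair $\{\theta(\lambda_1),\theta(\lambda_2)\}$ enters the estimate.
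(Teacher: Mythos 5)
Your proposal is correct and follows exactly the paper's own route: write $\theta(\lambda)=a\,\theta(\lambda_1)+(1-a)\,\theta(\lambda_2)$ with $a=(\lambda_2-\lambda)/(\lambda_2-\lambda_1)$ and invoke Proposition \ref{c1}(2) (Eq.~(\ref{nc2})) for the comonotonic pair $\theta(\lambda_1),\theta(\lambda_2)$. The extra remarks on convexity of ${\cal M}_d$ and on not needing transitivity of comonotonicity are accurate but inessential; the argument matches the paper's proof.
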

\begin{proof}
For $\lambda \in [\lambda _1, \lambda _2]$
\begin{eqnarray}
\theta (\lambda)=a\theta (\lambda _1)+(1-a)\theta (\lambda _2);\;\;\;\;a=\frac{\lambda _2-\lambda }{\lambda _2-\lambda _1},
\end{eqnarray} 
and use of Eq.(\ref{nc2}) proves the statement.
\end{proof}
This result shows that for comonotonic operators, the 
${\rm Tr}\{{\cal C}_Q[\theta (\lambda)]\}$ is bounded by ${\rm Tr}\{{\cal C}_Q[\theta (\lambda _1)]\}$
and ${\rm Tr}\{{\cal C}_Q[\theta (\lambda _2)]\}$.
Therefore physical quantities to which  the ${\rm Tr}\{{\cal C}_Q[\theta (\lambda)]\}$ is a bound, are also bounded.

The corollary assumes that $\theta (\lambda)$ is a linear function of $\lambda$.
If $\theta (\lambda)$ is a non-linear function of $\lambda$, a comonotonicity interval can be divided into many small subintervals, and within each of them
$\theta (\lambda)$ is approximately a linear function of $\lambda$, and the corollary can be used.
An example of this is discussed in section \ref{ex12} below.

\begin{remark}
We rewrite the result in Eq.(\ref{1qa}), as
\begin{eqnarray}\label{1qaa}
f(\lambda _1)\ge f(\lambda)\ge f(\lambda _2);\;\;\;\;f(\lambda)={\rm Tr}\{{\cal C}_Q[\theta (\lambda)]\}.
\end{eqnarray}
Since $\theta (\lambda)$ is a continuous function of $\lambda$, the $f(\lambda)$ is also a continuous function of $\lambda$.
The intermediate value theorem for continuous functions states that 
for any $f_0\in [f(\lambda _2), f(\lambda _1)]$, there exists $\lambda _0\in [\lambda _1, \lambda _2]$ such that $f(\lambda _0)=f_0$. 
We point out that this is weaker than our result in Eq.(\ref{1qa}), for comonotonic operators.
\end{remark}

\section{Applications to the study of the ground state of physical systems}\label{ex}

The study of the ground state of a large physical system as a function of the coupling constant, is important for phase transitions.
We study a toy model which shows how our formalism can be used for the study of the ground state of a physical system.
For practical reasons, we consider a small system described with the $3$-dimensional space $H(3)$, and study two cases of Hamiltonians with and without
degeneracies in their eigenvalues.

\subsection{The ground state of a physical system with  Hamiltonian without degeneracies}\label{ex12}
We consider the following Hamiltonian which
is a non-linear function of the coupling constant $\lambda$:
\begin{eqnarray}\label{ham}
{\mathfrak H}(\lambda)=\left(
\begin{array}{ccc}
7&3i\lambda +\lambda ^2&6i\lambda +2\lambda ^2\\
-3i\lambda +\lambda ^2&9&5\lambda +4\lambda ^2\\
-6i\lambda +2\lambda ^2&5\lambda +4\lambda ^2&11
\end{array}
\right )
\end{eqnarray}
We have calculated numerically the eigenstate $\ket{g(\lambda)}$ which corresponds to the lowest eigenvalue of ${\mathfrak H}(\lambda)$, and then calculated the $Q[\alpha, \beta |{\mathfrak P}(\lambda)]]$ where
${\mathfrak P}(\lambda)=\ket{g(\lambda)}\bra{g(\lambda)}$. 
The three dominant values of $Q[\alpha, \beta |{\mathfrak P}(\lambda)]$  are shown in table \ref{t3} for $\lambda=0.1,0.2,..., 1$.
It is seen that the comonotonicity intervals are $[0, \lambda _1]$, $[\lambda _1,\lambda _2]$,  
$[\lambda _2,\lambda _3]$, and $[\lambda _3,1]$, where $\lambda _1\approx 0.3$, $\lambda _2\approx 0.4$ and $\lambda _3\approx 0.7$.
The dominance ratio $r[{\mathfrak P}(\lambda)]$ (Eq.(\ref{ratio})),  is also shown.

The ${\mathfrak H}(\lambda), {\mathfrak P}(\lambda)$ are non-linear functions of $\lambda$.
However if we divide each comonotonicity interval into small subintervals, we can assume that ${\mathfrak P}(\lambda)$ is approximately linear within each subinterval,
and use corollary \ref{c2}. 
For example we consider the comonotonicity interval $(0.5,0.7)$ where ${\mathfrak P}(0.5)\prec {\mathfrak P}(0.7)$ because ${\rm Tr}\{{\cal C}_Q[{\mathfrak P} (0.5)]\}=r[{\mathfrak P}(0.5)]=0.638$
and ${\rm Tr}\{{\cal C}_Q[{\mathfrak P} (0.7)]\}=r[{\mathfrak P}(0.7)]=0.710$ (table \ref{t3}).
We divide it into the subintervals $(0.5,0.6)$ and $(0.6,0.7)$, and then 
\begin{eqnarray}
&&0.5 \le \lambda \le 0.6\;\;\rightarrow\;\;{\mathfrak P}(0.5)\prec {\mathfrak P}(\lambda)\prec{\mathfrak P}(0.6)\nonumber\\
&&0.6\le \lambda \le 0.7\;\;\rightarrow\;\;{\mathfrak P}(0.6)\prec {\mathfrak P}(\lambda)\prec{\mathfrak P}(0.7)
\end{eqnarray}
Similarly in the comonotonicity interval $(0.8,1)$ we have ${\mathfrak P}(0.8)\succ {\mathfrak P}(1)$.
We divide it into the subintervals $(0.8,0.9)$ and $(0.9,1)$, and then 
\begin{eqnarray}
&&0.8\le \lambda \le 0.9\;\;\rightarrow\;\;{\mathfrak P}(0.8)\succ {\mathfrak P}(\lambda)\succ {\mathfrak P}(0.9)\nonumber\\
&&0.9\le \lambda \le 1\;\;\rightarrow\;\;{\mathfrak P}(0.9)\succ {\mathfrak P}(\lambda)\succ {\mathfrak P}(1).
\end{eqnarray}

The Wehrl entropy $E[{\mathfrak P}(\lambda)]$ (Eq.(\ref{entro})) that involves all $9$ values of $Q[\alpha, \beta |{\mathfrak P}(\lambda)]$, is also shown (we used natural logarithms and the result is in nats).
The Wehrl entropy has been used in the literature  as an indicator of phase transitions (e.g\cite{WE}).
The $E[{\mathfrak P}(\lambda)]$ has local maxima and minima at the values 
$\lambda _1'\approx 0.3$ and $\lambda _2'\approx 0.6$, which agrees roughly with the values where ${\cal C}_Q[{\mathfrak P}(\lambda)]$ is discontinuous.

The overlap $|\bra {g(0)}g(\lambda)\rangle|^2$ of the ground state $\ket{g(\lambda)}$ when the coupling constant is equal to $\lambda$, with
the ground state $\ket{g(0)}$ when the coupling constant is equal to $0$, is a measure of how much the ground state changes.
It is given in table \ref{t3}, and it is seen that the biggest change occurs in the region of $\lambda \sim (0.4, 0.7)$

Therefore different quantities confirm that at the crossing points of the $Q$-function, stronger physical changes occur into the system.

\subsection{The ground state of a physical system with Hamiltonian with degeneracies}\label{DD}
We consider the Hamiltonian:
\begin{eqnarray}\label{hamilt}
{\mathfrak H}(\lambda)=\left(
\begin{array}{ccc}
1.500&1.414+\lambda&1.732\\
1.414+\lambda^*&2.500&2.449\\
1.732&2.449&3.500
\end{array}
\right )
\end{eqnarray}
The eigenvalues $e_1,e_2,e_3$ of this Hamiltonian for $\lambda=-0.01, 0, 0.01, 0.01i, -0.01i$
are given in table \ref{t4}. For $\lambda=0$ we have a degeneracy, and the two lowest eigenvalues are equal to each other.

In the cases $\lambda=-0.01, 0.01, -0.01i, 0.01i$ that there is no degeneracy, we have calculated numerically the eigenstate $\ket{g(\lambda)}$ which corresponds to the lowest eigenvalue of ${\mathfrak H}(\lambda)$, and then calculated the $Q[\alpha, \beta |{\mathfrak P}(\lambda)]]$ where
${\mathfrak P}(\lambda)=\ket{g(\lambda)}\bra{g(\lambda)}$. We also calculated the Wehrl entropy $E[{\mathfrak P}(\lambda)]$.
For $\lambda =0$, the ${\mathfrak P}(0)$ is the projector to the two-dimensional eigenspace corresponding to the two lowest eigenvalues. 
In this case, we calculated the $Q[\alpha, \beta |\frac{1}{2}{\mathfrak P}(\lambda)]$ (so that the sum of all the $Q$-values is $1$), and it is these values that we used to calculate the Wehrl entropy.

The three dominant values of $Q[\alpha, \beta |{\mathfrak P}(\lambda)]$  are shown in table \ref{t4}.
Although $\lambda$ changes by a small amount, and the eigenvalues also change by a small amount, the 
dominant coherent states change drastically as we go from $\lambda=-0.01$ to $\lambda=0$ (where we get degeneracy), and then to $\lambda=0.01$
Similar comment can be made for going from $\lambda=-0.01i$ to $\lambda=0$, and then to $\lambda=0.01i$. 
If we compare the cases $\lambda=0.01$ and $\lambda=0.01i$, there is also a change in the 
dominant coherent states.
This is because the eigenvector corresponding to the lowest eigenvalue in the $\lambda=0.01$ case,
is very different from the eigenvector corresponding to the lowest eigenvalue in the $\lambda=0.01i$ case (we have
found that $|\langle{g(0.01i)}\ket{g(0.01)}|^2=0.5$). 

The results show that the method is sensitive enough to detect changes in the ground state in the case of degeneracies.
A change in the ground state, changes some values of the $Q$-function more than others, and this changes the ranking of the $Q$-function, and
for this reason it is easily detected by the Choquet formalism. In contrast to this, we have seen in section \ref{noise}, that
random noise affects all values of the $Q$-function in approximately equal way, the ranking remains the same,  
and for this reason the formalism is robust in the presence of noise.

\section{Bounds for partition functions}\label{partition}

Inequalities between quantities that involve matrices (e.g.,\cite{MA}) have many applications in Physics (e.g., \cite{IN1,IN2}), 
and also in other subjects like Control Theory in Electrical Engineering, Operational Research, etc.
In this general context, this paper uses Choquet integrals in conjuction with total sets of vectors (like coherent states).
In this section we derive bounds for the partition function, which together with proposition \ref{1234}, show the use of the formalism for bounds of physical quantities.

If $\theta$ is a Hamiltonian and $\lambda $ the inverse temperature, then ${\rm Tr}\exp (-\lambda \theta)$ is a partition function. 
Below we derive upper and lower bounds for the partition function.
We also show that ${\cal C}_Q[\exp (-\lambda \theta)]\ge \exp[-\lambda {\cal C}_Q(\theta)]$ (the $\theta _1 \ge \theta _2$ denotes the fact that
$\theta _1 - \theta _2$ is a positive semidefinite Hermitian operator).
\begin{proposition}
\mbox{}
\begin{itemize}
\item[(1)]
\begin{eqnarray}\label{q1}
&&{\rm Tr}{\cal C}_Q[\exp (-\lambda \theta)]\ge {\rm Tr}\exp (-\lambda \theta)\ge A\nonumber\\
&&A=\max\left (\frac{1}{d}{\rm Tr}{\cal C}_Q[\exp (-\lambda \theta)],\frac{1}{d}\sum _{\alpha,\beta}\exp[-d\lambda Q(\alpha,\beta|\theta)]\right).
\end{eqnarray}
\item[(2)]
If $\theta$ and $\exp (-\lambda \theta)$ (where $\lambda \ge 0$) are comonotonic operators, then
\begin{eqnarray}\label{MMM}
{\cal C}_Q[\exp (-\lambda \theta)]\ge \exp[-\lambda {\cal C}_Q(\theta)].
\end{eqnarray}
\end{itemize}
\end{proposition}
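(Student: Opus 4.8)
The plan is to set $\phi=\exp(-\lambda\theta)$, note that $\phi$ is a (strictly positive) element of ${\cal M}_d$, and reduce both parts to a single scalar estimate of Peierls--Bogoliubov type: for any unit vector $\ket{v}$ one has $\bra{v}\exp(-\lambda\theta)\ket{v}\ge\exp[-\lambda\bra{v}\theta\ket{v}]$. This is just Jensen's inequality for the convex function $t\mapsto e^{-\lambda t}$ applied to the spectral measure of $\theta$ in the state $\ket{v}$, and it is the only genuine analytic input; everything else is assembly of results already proved.

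For part (1), the inequality ${\rm Tr}{\cal C}_Q(\phi)\ge{\rm Tr}\phi$ and the strict bound ${\rm Tr}\phi>\frac{1}{d}{\rm Tr}{\cal C}_Q(\phi)$ are immediate from Proposition \ref{1234}(1) applied to $\phi$; this already handles the first term inside the $\max$ defining $A$. For the second term I would use $dQ(\alpha,\beta\,|\,\theta)=\bra{C;\alpha,\beta}\theta\ket{C;\alpha,\beta}$ (from Eqs.(\ref{mmm}),(\ref{1111}), the $\ket{C;\alpha,\beta}$ being normalized), so the scalar estimate gives $\exp[-d\lambda Q(\alpha,\beta\,|\,\theta)]=\exp[-\lambda\bra{C;\alpha,\beta}\theta\ket{C;\alpha,\beta}]\le\bra{C;\alpha,\beta}\exp(-\lambda\theta)\ket{C;\alpha,\beta}=dQ(\alpha,\beta\,|\,\phi)$ for every coherent state. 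Summing over the $d^2$ coherent states and using $\sum_{\alpha,\beta}\Pi(\alpha,\beta)=d\,{\bf 1}$, equivalently $\sum_{\alpha,\beta}Q(\alpha,\beta\,|\,\phi)={\rm Tr}\phi$, yields $\frac{1}{d}\sum_{\alpha,\beta}\exp[-d\lambda Q(\alpha,\beta\,|\,\theta)]\le{\rm Tr}\phi$. Hence $A\le{\rm Tr}\phi$, which together with the first sentence of this paragraph gives the chain in Eq.(\ref{q1}).

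For part (2), comonotonicity of $\theta$ and $\phi$ means, via Eq.(\ref{eigen}), that ${\cal C}_Q(\theta)$ and ${\cal C}_Q(\phi)$ are simultaneously diagonal in the common orthogonal family $\varpi(i):=\varpi_\theta(i|i+1,\dots,d^2)$, with $\sum_i\varpi(i)={\bf 1}$. Writing both integrals in this basis, $\exp[-\lambda{\cal C}_Q(\theta)]=\sum_i\exp[-d\lambda Q(i\,|\,\theta)]\varpi(i)$ and ${\cal C}_Q(\phi)=\sum_i dQ(i\,|\,\phi)\varpi(i)$, so the difference equals $\sum_i\big\{dQ(i\,|\,\phi)-\exp[-d\lambda Q(i\,|\,\theta)]\big\}\varpi(i)$. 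Applying the scalar estimate to the dominant coherent state $\ket{C;\alpha_i,\beta_i}$ shows each coefficient is nonnegative, and a nonnegative combination of mutually orthogonal projectors is positive semidefinite; this is Eq.(\ref{MMM}).

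The only non-routine point I anticipate is recognizing precisely where the comonotonicity hypothesis does its work in (2): it is exactly what upgrades the trace statement to the operator statement, by letting ${\cal C}_Q(\phi)$ and the function $\exp[-\lambda{\cal C}_Q(\theta)]$ of ${\cal C}_Q(\theta)$ be compared eigenspace by eigenspace. Without comonotonicity one could still integrate the same scalar estimate to obtain a trace inequality, but the two operators would have different eigenprojectors and the ordering ${\cal C}_Q[\exp(-\lambda\theta)]\ge\exp[-\lambda{\cal C}_Q(\theta)]$ would not follow.
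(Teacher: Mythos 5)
Your proposal is correct and follows essentially the same route as the paper: Proposition \ref{1234}(1) applied to $\exp(-\lambda\theta)$ for the outer bounds, the Bogoliubov (Peierls--Jensen) inequality evaluated on coherent states to get $dQ[\alpha,\beta|\exp(-\lambda\theta)]\ge\exp[-d\lambda Q(\alpha,\beta|\theta)]$, and comonotonicity to identify the eigenprojectors so that the scalar inequality upgrades to the operator inequality in (2). The only difference is cosmetic: you derive the Bogoliubov step from Jensen's inequality rather than citing it, and you make explicit the observation that without comonotonicity only a trace inequality would survive.
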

\begin{proof}
\mbox{}
\begin{itemize}
\item[(1)]
The left part of the inequality folows immediately from the inequality in Eq.(\ref{vvv}).
For the right part of the inequality, we have
\begin{eqnarray}\label{q2}
{\rm Tr}\exp (-\lambda \theta)=\sum _{\alpha,\beta}Q[\alpha,\beta|\exp (-\lambda \theta)].
\end{eqnarray}
The Bogoliubov inequality states that for any state $\psi$ and Hermitian operator $\phi$
\begin{eqnarray}
\bra{\psi}\exp (\phi)\ket{\psi}\ge \exp[\bra{\psi}\phi\ket{\psi}].
\end{eqnarray}
Consequently 
\begin{eqnarray}\label{4t5}
dQ[\alpha,\beta|\exp (-\lambda \theta)]\ge \exp[-d\lambda Q(\alpha,\beta|\theta)].
\end{eqnarray}
Therefore 
\begin{eqnarray}\label{q3}
\sum _{\alpha,\beta}Q[\alpha,\beta|\exp (-\lambda \theta)]\ge \frac{1}{d}\sum _{\alpha,\beta}\exp[-d\lambda Q(\alpha,\beta|\theta)].
\end{eqnarray}
From Eqs(\ref{q2}),(\ref{q3}), it follows that \cite{LIEB1}
\begin{eqnarray}
{\rm Tr}\exp (-\lambda \theta)\ge \frac{1}{d}\sum _{\alpha,\beta}\exp[-d\lambda Q(\alpha,\beta|\theta)].
\end{eqnarray}
But we also have
\begin{eqnarray}
{\rm Tr}\exp (-\lambda \theta)> \frac{1}{d}{\rm Tr}{\cal C}_Q[\exp (-\lambda \theta)],
\end{eqnarray}
from Eq.(\ref{vvv}). This completes the proof.
\item[(2)]
\begin{eqnarray}
{\cal C}_Q[\exp (-\lambda \theta)]=\sum _{i=d^2-d+1}^{d^2}dQ[\exp (-\lambda \theta)]\varpi _{\exp (-\lambda \theta)}(i|i+1,...,d^2).
\end{eqnarray}
Since the operators $\theta$ and $\exp (-\lambda \theta)$ are comonotonic
\begin{eqnarray}
\varpi _{\exp (-\lambda \theta)}(i|i+1,...,d^2)=\varpi _{\theta}(i|i+1,...,d^2).
\end{eqnarray}
Using Eq.(\ref{4t5}) which is based on the Bogoliubov inequality, we get
\begin{eqnarray}
&&\sum _{i=d^2-d+1}^{d^2}dQ[\exp (-\lambda \theta)]\varpi _{\exp (-\lambda \theta)}(i|i+1,...,d^2)\ge
\sum _{i=d^2-d+1}^{d^2}\exp[-d\lambda Q(\theta)]\varpi _{\theta}(i|i+1,...,d^2)\nonumber\\&&=
\exp \left \{\sum _{i=d^2-d+1}^{d^2}[-d\lambda Q(\theta)\varpi _{\theta}(i|i+1,...,d^2)]\right \}=\exp[-\lambda {\cal C}_Q(\theta)]
\end{eqnarray}
This completes the proof.

\end{itemize}
\end{proof}
There are two lower bounds in Eq.(\ref{q1}), which involve the $Q$-function of $\exp (-\lambda \theta)$ and the $Q$-function of $\theta$ .
We give two examples which show that sometimes the first is better lower bound, while other times the second is better lower bound.
The first example is
\begin{eqnarray}
\theta=\left(
\begin{array}{ccc}
8&1+i&-5\\
1-i&4&2\\
-5&2&7
\end{array}
\right )
\end{eqnarray}
For $\lambda =1$, we get
\begin{eqnarray}
{\rm Tr}\exp (-\lambda \theta)=0.440;\;\;\;\;\;
\frac{1}{d}{\rm Tr}{\cal C}_Q[\exp (-\lambda \theta)]=0.073;\;\;\;\;\;
\frac{1}{d}\sum _{\alpha,\beta}\exp[-d\lambda Q(\alpha,\beta|\theta)]=0.013.
\end{eqnarray}
Here the $\frac{1}{d}{\rm Tr}{\cal C}_Q[\exp (-\lambda \theta)]$ is a better lower bound. 

The second example is $\theta ={\bf 1}$, in which case
\begin{eqnarray}
{\rm Tr}\exp (-\lambda \theta)=d\exp (-\lambda);\;\;\;\;\;
\frac{1}{d}{\rm Tr}{\cal C}_Q[\exp (-\lambda \theta)]=\exp (-\lambda);\;\;\;\;\;
\frac{1}{d}\sum _{\alpha,\beta}\exp[-d\lambda Q(\alpha,\beta|\theta)]=d\exp (-\lambda).
\end{eqnarray}
Here the $\frac{1}{d}\sum _{\alpha,\beta}\exp[-d\lambda Q(\alpha,\beta|\theta)]$ is a better lower bound.

\section{Spectral formalism, POVM,  wavelets and the Choquet formalism}\label{analogy}

In this section we compare and contrast the Choquet formalism with the spectral formalism of eigenvalues and eigenvectors, the POVM formalism,  
and the formalism of frames and wavelets.
Let $\theta$ be a Hermitian operator.
\begin{itemize}
\item
{\bf Spectral formalism of eigenvalues and eigenvectors:}
\begin{itemize}
\item
It uses the complete set of the $d$ eigenvectors of $\theta$, which are orthogonal to each other.
This set is not fixed, but depends on $\theta$.
\item
$\theta=\sum e_i {\mathfrak P}_i$, where ${\mathfrak P}_i$ are the eigenprojectors and $e_i={\rm Tr}(\theta {\mathfrak P}_i)$ the eigenvalues
of $\theta$.
\item
If two operators commute, they have the same eigenprojectors ${\mathfrak P}_i$.
\end{itemize}
\item
{\bf POVM formalism:}
\begin{itemize}
\item
It uses the set $\Omega$ of $d^2$ coherent states. 
The central feature is the resolution of the identity in Eq.(\ref{1111}),
which is used in expressing various physical quantities in terms of coherent states.
\item
$\theta=\sum P(\alpha, \beta |\theta)\Pi(\alpha, \beta)$ in terms of the projectors $\Pi(\alpha, \beta)$ 
and the $P$-function $P(\alpha, \beta |\theta)$, as explained in Eq.(\ref{9}).
\end{itemize}
\item
{\bf Frames and wavelets:}
\begin{itemize}

\item
A frame is a family of states $\ket{v_i}$, such that for all (normalized) states $\ket{f}$ in the Hilbert space
\begin{eqnarray}\label{fr}
&&A\le \sum _i|\bra {v_i}f\rangle |^2\le B\nonumber\\
&&\ket {f}=\sum _i(S^{-1}\ket{v_i})\bra {v_i}f\rangle;\;\;\;\;\;S=\sum _i\ket{v_i}\bra {v_i}.
\end{eqnarray}
$A,B$ are constants called lower and upper bound.
\item
The philosophy here that if we do not know an exact resolution of the 
identity, we should try to find lower and upper bounds for it.
In this sense, the formalism uses an approximate resolution of the identity, with bounded error.
 
\end{itemize}
\item
{\bf Choquet formalism:}
\begin{itemize}
\item
It uses the set $\Omega$ of $d^2$ coherent states, but it does not use their resolution of the identity of Eq.(\ref{1111}).
The formalism introduces its own `weak resolution of the identity' of Eq.(\ref{al1}), that involves the 
non-orthogonal projectors and also the M\"obius operators that eliminate the double counting.
$Q(\alpha, \beta |\theta)=\frac{1}{d}{\rm Tr}[\theta \Pi(\alpha, \beta )]$ is
the $Q$-function of a Hermitian operator $\theta$.  
Based on the ranking in Eq.(\ref{56}), the coherent states, projectors and $Q(\alpha, \beta |\theta)$ are divided into two groups `dominant' and `inferior', which depend on $\theta$.
\item
${\cal C}_Q(\theta)=\sum Q(i|\theta) \varpi _{\theta}(i;{i+1};...;{d^2})$.
The projectors $\varpi _{\theta}(i|i+1,...,d^2)$
are discrete derivatives (differences) of the cumulative projectors $\Pi _{\theta}(i;{i+1};...;{d^2})$. The $\varpi _{\theta}(i|i+1,...,d^2)$
form an orthogonal set of $d$ projectors, and they are different from the projectors $\Pi(i)$, associated to coherent states.
The $Q(i|\theta)$ and $\varpi _{\theta}(i;{i+1};...;{d^2})$ are eigenvalues and eigenprojectors of ${\cal C}_Q(\theta)$. 
The ${\cal C}_Q(\theta)$ is a figure of merit for $\theta$, and is in general different from $\theta$.
The ${\rm Tr}[{\cal C}_Q(\theta)]$ is an upper bound for various physical quantities as shown in proposition \ref{1234} and in section \ref{partition}.

\item
If two operators are comonotonic, they have the same  $\varpi _{\theta}(i|i+1,...,d^2)$ projectors. 
Comonotonic operators have the same dominant coherent states and projectors, and their Choquet integrals commute.
Comonotonicity formalizes the vague concept of physically similar operators. 

\item

The frames and wavelets formalism, uses approximate resolutions of the identity with bounded error. 
The Choquet formalism corrects this error with the M\"obius operators, and uses the `weak resolution of the identity' of Eq.(\ref{al1}).

\end{itemize}

\end{itemize}

In this paper we used the Choquet formalism with coherent states, but as we explained 
the formalism introduces its own `weak resolution of the identity' of Eq.(\ref{al1}), and it does not use 
the resolution of the identity in Eq.(\ref{1111}).
Therefore the formalism can be used  
with total sets of states, for which we do not know explicitly a resolution of the identity
(a set of states is called total, if there is no state in the Hilbert space which is orthogonal to all states in the set).
The Choquet formalism introduces a `weak resolution of the identity', that involves the M\"obius operators
in addition to the projectors.
It is robust in the presence of noise, and it can be used as bound for various physical quantities, in the study of the 
ground state of physical systems, etc.

\section{Discussion}

The Choquet integral is used in problems with probabilities, which involve overlapping (non-independent) alternatives.
In this paper, we have used it in a quantum context with the $Q$-function of Hermitian positive semidefinite operators.
The $Q$-function is defined in terms of coherent states, which overlap with each other, and this motivates the use of this approach.
The Choquet integral uses the ranking of the values of the $Q$-function in Eq.(\ref{56}), and it is given by Eq.(\ref{010A}).

The formalism uses the M\"obius operators ${\mathfrak D} ({\alpha _1,\beta _1};{\alpha _2,\beta _2})$,
${\mathfrak D} ({\alpha _1,\beta _1};{\alpha _2,\beta _2};{\alpha _3,\beta _3})$, etc,  to quantify the overlaps between coherent states. They enter in the 
Choquet integral as described in proposition \ref{www}.
The M\"obius operators are interpreted in the context of non-additive probabilities (capacities), and they are related to commutators as in Eq.(\ref{e3}),
which shows that they are non-zero if the projectors do not commute.

A central concept in the formalism, which is novel in Physics, is comonotonicity. It is used to formalize the vague concept of physically similar operators.
Comonotonic operators are bounded as in Eq.(\ref{1qa}), with respect to the $\prec$ preorder.
This means that the values of ${\rm Tr}{\cal C}_Q(\theta)$ are bounded within a certain interval, and consequently other physical quantities 
(like ${\rm Tr} (\rho \theta)$ with any density matrix $\rho$) to which ${\rm Tr}{\cal C}_Q(\theta)$
is a bound, are also bounded.

In terms of applications,
the Choquet integral has been used to derive bounds for various physical quantities (proposition \ref{1234}, and section \ref{partition} for the partition function).
A desirable feature of the formalism, is that it is robust in the presence of noise. 
The reason is that noise affects in a uniform way all coherent states, and does not change the ranking significantly.
At the same time the formalism is sensitive enough to detect changes in the ground state of physical systems, because they affect the ranking.
Examples of this have been given in sections \ref{ex12}, \ref{DD}.

From a practical point of view, calculations are easy if they involve only the ${\rm Tr}{\cal C}_Q(\theta)$.
This simply requires the $Q$-function and its ranking in Eq.(\ref{56}) (see Eq.(\ref{rfv})).
If the full ${\cal C}_Q(\theta)$ is required, as for example in Eq.(\ref{MMM}), then the calculation of the projectors $\varpi _{\theta}(i|i+1,...,d^2)$
is needed, and this can be computationally more intensive. 

There are many figures of merit in Physics. They are used in bounds for the values of physical quantities.
They are also used to derive orders in sets of physical quantities (e.g., various entropic quantities define `more mixed' or `more entangled', etc).
In this paper we introduced the Choquet integral and the concept of comonotonicity, which are motivated by non-additive probabilities associated with overlapping alternatives, and which we have used to derive bounds to physical quantities, and study the lowest state of physical systems.

We have considered positive semidefinite operators, but the work could be extended to all Hermitian operators.
Also we have used the $Q$-function, but a similar formalism that involves the $P$-function can also be developed.
The work provides a deeper insight to the use of non-orthogonal overcomplete sets of states (like coherent states) for the 
study of physical problems.

\newpage

\newpage

\begin{table}
\caption{The three dominant values of $Q(\alpha, \beta)$, the eigenvalues $e_1,e_2, e_3$, and the dominance ratio $r(\theta)$
of the operator $\theta$ in Eq.(\ref{97}). In the first rwo $r_i=0$ (there is no noise). In the other five rows
$r_i$ are uniformly distributed random numbers in the interval $(-1,1)$.
The overlaps $\tau _i=|\bra{u_i}v_i\rangle |^2$ of the eigenvectors in the noisy cases, with their counterparts in the noiseless case are also shown.}
\def\arraystretch{1.5}
\begin{tabular}{|l|l|l|l|l|l|l|l|l|l|}\hline
$Q(9|\theta)$&$Q(8|\theta)$&$Q(7|\theta)$&$e_1$&$e_2$&$e_3$&$r(\theta)$&$\tau _1$&$\tau _2$&$\tau _3$\\\hline
$Q(1,2)=3.023$&$Q(1,1)=3.023$&$Q(0,2)=2.095$&$0.942$&$5.488$&$12.569$&$0.428$&$1$&$1$&$1$\\\hline
$Q(1,2)=3.447$&$Q(1,1)=2.926$&$Q(0,2)=2.171$&$0.604$&$4.993$&$13.235$&$0.453$&$0.973$&$0.979$&$0.992$\\
$Q(1,2)=3.173$&$Q(1,1)=2.897$&$Q(0,0)=2.398$&$1.337$&$6.743$&$12.230$&$0.416$&$0.987$&$0.990$&$0.996$\\
$Q(1,1)=2.911$&$Q(1,2)=2.506$&$Q(0,1)=1.865$&$0.809$&$4.245$&$11.380$&$0.443$&$0.990$&$0.985$&$0.980$\\
$Q(1,2)=3.157$&$Q(1,1)=2.962$&$Q(0,2)=2.278$&$0.747$&$4.454$&$13.111$&$0.458$&$0.997$&$0.988$&$0.988$\\
$Q(1,2)=3.316$&$Q(1,1)=3.180$&$Q(0,1)=2.436$&$0.836$&$5.774$&$13.671$&$0.413$&$0.967$&$0.950$&$0.980$\\\hline
\end{tabular} \label{t1}
\end{table}

\begin{table}
\caption{Comonotonicity intervals, the corresponding three dominant values of $Q(\alpha, \beta)$, and the 
dominance ratio $r[\theta (\lambda)]$, 
for the operator $\theta (\lambda)$ in Eq.(\ref{1000}).}
\def\arraystretch{2}
\begin{tabular}{|l||l|l|l|l|}\hline
{\rm intervals of\;\;}$\lambda$ &$Q[9|\theta (\lambda)]$&$Q[8|\theta (\lambda)]$&$Q[7|\theta(\lambda)]$&$r[\theta (\lambda)]$\\\hline
$I_1=(0,0.06)$&$Q(1,0)$&$Q(0,0)$&$Q(2,0)$&$\frac{13.5+7\lambda}{99+63 \lambda}$\\\hline
$I_2=(0.06, 0.44)$&$Q(0,0)$&$Q(1,0)$&$Q(2,0)$&$\frac{13.5+7\lambda}{99+63 \lambda}$\\\hline
$I_3=(0.44,0.56)$&$Q(0,0)$&$Q(1,0)$&$Q(0,2)$&$\frac{12.62+9\lambda}{99+63 \lambda}$\\\hline
$I_4=(0.56, 0.6)$&$Q(0,0)$&$Q(0,2)$&$Q(1,0)$&$\frac{12.62+9\lambda}{99+63 \lambda}$\\\hline
$I_5=(0.6,0.7)$&$Q(0,0)$&$Q(0,2)$&$Q(0,1)$&$\frac{11+11.7\lambda}{99+63 \lambda}$\\\hline
\end{tabular} \label{t2}
\end{table}
\begin{table}
\caption{The three dominant values of $Q[\alpha, \beta|{\mathfrak P}(\lambda)]$, and the dominance ratio $r[{\mathfrak P}(\lambda)]$ as a function of $\lambda$, 
for the ${\mathfrak P}(\lambda)=\ket{g(\lambda)}\bra{g(\lambda)}$
where $\ket{g(\lambda)}$ is the ground state of the system described with the Hamiltonian ${\mathfrak H}(\lambda)$ in Eq.(\ref{ham}).
The Wehrl entropy $E[{\mathfrak P}(\lambda)]$ (in nats) and the $|\bra {g(0)}g(\lambda)\rangle|^2$ are also shown.
Horizontal lines indicate that we cross from one equivalence class to another}
\def\arraystretch{1.5}
\begin{tabular}{|l||l|l|l|l|l|l|}
  \hline$\lambda$& $Q[9|{\mathfrak P}(\lambda)]$ & $Q[8  |{\mathfrak P}(\lambda)]$ &$Q[7 |{\mathfrak P}(\lambda)]$ 
& $r[{\mathfrak P}(\lambda)]$&$E[{\mathfrak P}(\lambda)]$ &$|\bra {g(0)}g(\lambda)\rangle|^2$\\\hline
$0.0$& $Q[1,2|{\mathfrak P}(\lambda)]=0.214$ & $Q[1,0|{\mathfrak P}(\lambda)]=0.214$ &$Q[1,1|{\mathfrak P}(\lambda)]=0.214$ &$0.642 $&$1.929$&$1$\\
$0.1$& $Q[1,2|{\mathfrak P}(\lambda)]=0.228$ & $Q[1,0|{\mathfrak P}(\lambda)]=0.209$ &$Q[1,1|{\mathfrak P}(\lambda)]=0.191$ &$0.628 $&$1.948$&$0.971$\\
$0.2$& $Q[1,2|{\mathfrak P}(\lambda)]=0.245$ & $Q[1,0|{\mathfrak P}(\lambda)]=0.202$ &$Q[1,1|{\mathfrak P}(\lambda)]=0.166$ &$0.613 $&$1.972$&$0.929$\\
$0.3$& $Q[1,2|{\mathfrak P}(\lambda)]=0.268$ & $Q[1,0|{\mathfrak P}(\lambda)]=0.192$ &$Q[1,1|{\mathfrak P}(\lambda)]=0.138$ &$0.598 $&$1.980$&$0.885$\\\hline
$0.4$& $Q[1,2|{\mathfrak P}(\lambda)]=0.297$ & $Q[1,0|{\mathfrak P}(\lambda)]=0.176$ &$Q[2,2|{\mathfrak P}(\lambda)]=0.120$ &$0.593  $&$1.959$&$0.812$\\\hline
$0.5$& $Q[1,2|{\mathfrak P}(\lambda)]=0.313$ & $Q[0,2|{\mathfrak P}(\lambda)]=0.168$ &$Q[2,2|{\mathfrak P}(\lambda)]=0.157$ &$0.638 $&$1.889$&$0.666$\\
$0.6$& $Q[1,2|{\mathfrak P}(\lambda)]=0.298$ & $Q[0,2|{\mathfrak P}(\lambda)]=0.216$ &$Q[2,2|{\mathfrak P}(\lambda)]=0.187$ &$0.701 $&$1.816$&$0.477$\\
$0.7$& $Q[1,2|{\mathfrak P}(\lambda)]=0.268$ & $Q[0,2|{\mathfrak P}(\lambda)]=0.242$ &$Q[2,2|{\mathfrak P}(\lambda)]=0.200$ &$0.710 $&$1.825$&$0.330$\\\hline
$0.8$& $Q[0,2|{\mathfrak P}(\lambda)]=0.252$ & $Q[1,2|{\mathfrak P}(\lambda)]=0.241$ &$Q[2,2|{\mathfrak P}(\lambda)]=0.202$ &$0.695 $&$1.845$&$0.240$\\
$0.9$& $Q[0,2|{\mathfrak P}(\lambda)]=0.256$ & $Q[1,2|{\mathfrak P}(\lambda)]=0.221$ &$Q[2,2|{\mathfrak P}(\lambda)]=0.200$ &$0.677 $&$1.862$&$0.185$\\
$1.0$& $Q[0,2|{\mathfrak P}(\lambda)]=0.257$ & $Q[1,2|{\mathfrak P}(\lambda)]=0.207$ &$Q[2,2|{\mathfrak P}(\lambda)]=0.198$ &$0.662 $&$1.873$&$0.149$\\\hline
\end{tabular}\label{t3}
\end{table}

\begin{table}
\caption{The three eigenvalues $e_1,e_2,e_3$ of the Hamiltonian in Eq.(\ref{hamilt}).
The three dominant values of $Q[\alpha, \beta|{\mathfrak P}(\lambda)]$ for the ${\mathfrak P}(\lambda)=\ket{g(\lambda)}\bra{g(\lambda)}$
where $\ket{g(\lambda)}$ is the ground state of the system, are also shown.
In the case $\lambda=0$ the two lowest eigenvalues are equal to each other, and the ${\mathfrak P}(0)$ is the projector to the corresponding two-dimensional eigenspace.
In this case we present the  $Q[\alpha, \beta|\frac{1}{2}{\mathfrak P}(0)]$}
\def\arraystretch{1.5}
\begin{tabular}{|l||l|l|l|l|l|l|l|}
  \hline$\lambda$& $e_1$&$e_2$&$e_3$&$Q[9|{\mathfrak P}(\lambda)]$&$Q[8|{\mathfrak P}(\lambda)]$&$Q[7|{\mathfrak P}(\lambda)]$&$E[{\mathfrak P}(\lambda)]$\\\hline
$-0.01$& $0.494$&$0.509$&$6.495$&$Q[0,1|{\mathfrak P}(-0.01)]=0.194$&$Q[0,2|{\mathfrak P}(-0.01)]=0.194$&$Q[1,1|{\mathfrak P}(0.01)]=0.163$&$1.960$\\\hline
$0.00$& $0.500$&$0.500$&$6.500$&$Q[1,1|\frac{1}{2}{\mathfrak P}(0)]=0.160$&$Q[1,2|\frac{1}{2}{\mathfrak P}(0)]=0.160$&$Q[2,1|\frac{1}{2}{\mathfrak P}(0)]=0.155$&$2.088 $\\\hline
$0.01$& $0.490$&$0.505$&$6.505$&$Q[2,1|{\mathfrak P}(0.01)]=0.218$&$Q[2,2|{\mathfrak P}(0.01)]=0.218$&$Q[1,1|{\mathfrak P}(0.01)]=0.155$&$1.890 $\\\hline
$-0.01i$& $0.492$&$0.507$&$6.500$&$Q[2,2|{\mathfrak P}(-0.01i)]=0.292$&$Q[1,2|{\mathfrak P}(-0.01i)]=0.291$&$Q[0,2|{\mathfrak P}(-0.01i)]=0.255$&$1.627 $\\\hline
$0.01i$& $0.493$&$0.507$&$6.500$&$Q[2,1|{\mathfrak P}(0.01i)]=0.292$&$Q[1,1|{\mathfrak P}(0.01i)]=0.291$&$Q[0,1|{\mathfrak P}(0.01i)]=0.255$&$1.627 $\\\hline
\end{tabular}\label{t4} 
\end{table}


\begin{thebibliography}{999}
\bibitem{We}
A. Wehrl, Rev. Mod. Phys. 50, 221 (1978)
\bibitem{Fin}
A. Vourdas, Rep. Prog. Phys. 67, 267 (2004)
\bibitem{Fin2}
A. Vourdas, J. Phys. A40, R285 (2007)
\bibitem{D1}
A.P. Dempster, Annal. Math. Stat. 38, 325 (1967)
\bibitem{D2}
G. Shafer, `A Mathematical Theory of Evidence', (Princeton Univ. Press, Princeton, 1976)
\bibitem{D3}
J.Y. Halpern, R. Fagin, Artificial Intelligence 54, 275 (1992)
\bibitem{D4}
R. Yager, L. Liu (Ed.), `Classic works of the Dempster-Shafer theory of belief functions' (Springer, Berlin, 2008)
\bibitem{D5}
C. Camerer, M. Weber, J. Risk and Uncertainty, 5, 325 (1992)
\bibitem{VO1}
A. Vourdas, J. Math. Phys. 55, 082107 (2014) 
\bibitem{INT1}
G. Choquet, Ann. Inst. Fourier, 5,131 (1953)
\bibitem{INT2}
M. Grabisch, C. Labreuche, Ann. Operations Res. 175, 247 (2010)
\bibitem{INT3}
D. Donneberg, `Non-additive measure and integral' (Kluwer, Dordrecht, 1984)
\bibitem{INT4}
D. Schmeidler, Proc. Am. Math. Soc. 97, 255 (1986)
\bibitem{INT5}
I. Gilboa, D. Schmeidler, Ann. Oper. Research 52, 43 (1994)
\bibitem{INT6}
T. Murofushi, M. Sugeno, J. Math. Anal. Appl. 159, 532 (1991)
\bibitem{W1}
P. Wakker, J. Econ. Theory, 52, 453 (1990)
\bibitem{W2}
M.J. Machina, Econometrica, 50, 277 (1982)
\bibitem{W3}
M.E. Yaari, Econometrica, 55, 95 (1987)
\bibitem{W4}
J. Quiggin, J. Econ. Behav. Organ. 3, 323 (1982)
\bibitem{v16}
A. Vourdas, J. Geom. Phys. 101, 38 (2016)
\bibitem{R}
G.C. Rota, Z. Wahrseheinlichkeitstheorie 2, 340 (1964)
\bibitem{R1}
M. Barnabei, A. Brini, G.C. Rota, Russian Math. Surveys, 41, 135 (1986)
\bibitem{COH}
S. Zhang, A. Vourdas, J. Phys. A37, 8349 (2004)
\bibitem{COH1}
N. Cotfas, J-P Gazeau, A. Vourdas, J. Phys. A44, 175303 (2011)
\bibitem{coh1}
J.R. Klauder, B-S Skagerstam (Ed.) `Coherent states' ((World Sci., Singapore, 1985)
\bibitem{coh2}
S.T. Ali, J-P Antoine, J-P Gazeau, `Coherent states, wavelets and their generalizations' (Springer, Berlin, 2000) 
\bibitem{coh3}
A.M. Perelomov, `Generalized coherent states and their applications' (Springer, Berlin, 1986)
\bibitem{LIEB}
E.H. Lieb, Commun. Math. Phys. 62, 35 (1978)
\bibitem{LO1}
G. Birkhoff, J. von Neumann, Ann. Math. 37, 823 (1936)
\bibitem{LO2}
C. Piron, `Foundations of quantum physics', Benjamin, New York, 1976
\bibitem{LO3}
J. Jauch, `Foundations of quantum mechanics', Addison-Wesly, Reading, 1968
\bibitem{matrix}
G.H. Golup, C.F. van Loan, `Matrix Computations', (John Hopkins Univ. Press, Maryland, 1989)
\bibitem{ineq}
X.M. Yang, X.Q. Yang, K.L. Teo, J. Math. Anal. Appl. 263, 327 (2001)
\bibitem{WE}
E. Romera, R. del Real, M. Calixto, Phys. Rev. A85, 053831 (2012)
\bibitem{MA}
M. Marcus, H. Minc, `A survey of matrix theory and matrix inequalities', (Dover, New York, 1992)
\bibitem{IN1}
E. Carlen, Contemp. Math., 529, 73 (2009)
\bibitem{IN2}
M.B. Ruskai, J. Math. Phys., 43, 4358 (2002)
\bibitem{LIEB1}
E.H. Lieb, Commun. Math. Phys. 31, 327 (1973)

\end{thebibliography}
\end{document}